 \newcommand{\hide}[1]{}
\newcommand{\Xomit}[1]{}
 \def\qed{\ifmmode$\blacksquare$\else{\unskip\nobreak\hfil
 \penalty50\hskip1em\null\nobreak\hfil$\blacksquare$
 \parfillskip=0pt\finalhyphendemerits=0\endgraf}\fi\vspace{0.3cm}}
 \newtheorem{theorem}{Theorem}[section]
 \newtheorem{lemma}{Lemma}[section]
 \newtheorem{observation}[lemma]{Observation}
 \newtheorem{definition}[lemma]{Definition}
 \newtheorem{corollary}[lemma]{Corollary}
\newtheorem{fact}[lemma]{Fact}
 \newcommand{\vone}{\vspace{.1in}}
 \newcommand{\vhalf}{\vspace{.05in}}
 \def\eps{\epsilon}
 \newcommand{\ceil}[1]{\left\lceil{#1}\right\rceil}
\begin{document}

\title{\Large Resource Oblivious Sorting on Multicores \footnote{A preliminary version of this
paper appeared in {\it Proc. International Colloquium of Automata, Languages and Programming (ICALP)}, Track A, Springer LNCS Volume 6198, pp. 226-237, 2010.}}

\author{Richard Cole ~\thanks{Computer Science Dept., Courant Institute
of Mathematical Sciences, NYU, New York, NY 10012.
Email: {\tt cole@cs.nyu.edu}.
This work was supported in
 part by NSF Grants CCF-0830516, CCF-1217989, and CCF-1527568.}
\and Vijaya Ramachandran~\thanks
{Dept. of Computer Sciences, University of Texas, Austin, TX 78712. Email:
 {\tt vlr@cs.utexas.edu}. This work was supported in
 part by NSF Grants CCF-0830737 and CCF-1320675.}
 }

\maketitle

\begin{abstract}

We present a deterministic sorting algorithm,
SPMS (Sample, Partition, and Merge Sort), that interleaves
the partitioning of a sample sort with merging. Sequentially, it sorts $n$
elements in $O(n \log n)$ time cache-obliviously with an optimal
number of cache misses. The parallel complexity (or critical path
length) of the algorithm is $O(\log n \cdot \log\log n)$, which improves
on previous bounds for optimal cache oblivious sorting.
The algorithm also has low false sharing costs.
When scheduled by a work-stealing scheduler in a multicore
computing environment with a global shared memory and $p$ cores,
each having a cache of size $M$ organized in blocks of size $B$,
the costs of  the additional
cache misses and false sharing misses due to this parallel execution are
bounded by the cost of $O(S\cdot M/B)$ and  $O(S \cdot B)$ cache misses
respectively, where $S$ is the number of steals performed during the
execution. Finally, SPMS is resource oblivious in
that the dependence on machine parameters appear only in the analysis
of its performance, and
not within the algorithm itself.
\end{abstract}

\section{Introduction}\label{sec:intro}

We present a parallel sorting algorithm, which we call
{\it  Sample, Partition, and Merge Sort (SPMS)}. It has a critical
path length of $O(\log n\log\log n)$ and performs optimal $O(n\log
n)$ operations with optimal sequential cache misses.
Further, it has low caching and false sharing overhead when scheduled by a
work stealing scheduler
on a multicore with private caches, and is resource oblivious, i.e.,
it uses no machine parameters in its specification.

At the heart
of the sorting algorithm is a recursive multi-way merging
procedure. 
It creates its recursive subproblems using a sample sort methodology.
We view SPMS as interleaving a
merge sort with a sample sort in a natural way.
At this high level the structure of SPMS is similar to Sharesort~\cite{CP93}, 
a sorting algorithm developed for
the hypercube. 
In Section~\ref{sec:sort-start}, after we have given a high-level specification of SPMS,
we elaborate on the similarities of and differences between SPMS and Sharesort.
Sharesort was also adapted to several multi-level storage models 
in~\cite{AP94}. 

\vone
\noindent
{\bf Previous Work.}
Sorting is a
fundamental algorithmic
problem, and has been studied extensively. For our purposes, the
most relevant results are sequential cache-oblivious sorting, for which
provably optimal algorithms are
given in~\cite{FLPR99}, optimal sorting
algorithms addressing pure parallelism~\cite{AKS83,Co88}, and recent
work on multicore sorting \cite{BC+08,AGN08,BGS09,Va08}.
When considering the parallelism in an algorithm, we will interchangeably
use the terms critical pathlength, span, and parallel time to denote the number
of parallel steps in the algorithm when it is executed with a sufficiently large
number of processors to exploit all of the parallelism present in it.

The existing multicore algorithms take two main approaches.
The first is merge sort \cite{AGN08,BC+08}, either simple
or the pipelined method from \cite{Co88}.
The second is deterministic sampling \cite{Va08,BGS09}: this approach
splits the input into subsets, sorts the subsets,
samples the sorted subsets,
sorts the sample, partitions about a subsample,
and recursively sorts the resulting sets.
Our algorithm can be viewed as applying this approach to the
problem of merging a suitable number of sorted sets,
which eliminates the need for the first two steps, resulting in
a smaller critical pathlength.

More specifically, the algorithm in  \cite{BC+08} is a simple
multicore mergesort; it
has polylog parallel time, and good, though not
optimal cache efficiency; it is cache-oblivious for private
caches (the model we consider in this paper).
The algorithm in \cite{AGN08}
achieves the optimal caching bound
on an input of length $n$,
with $O(\log n)$ parallel
time (modulo dependence on cache parameters), but it is both cache-aware
and core-aware; this algorithm is based on \cite{Co88}.
The algorithm in \cite{Va08} is designed for a BSP-style version of a
cache aware,
multi-level multicore. It uses a different collection of parameters,
and so it is difficult to compare with it directly.
The algorithm in~\cite{BGS09} recursively sorts the samples,
achieving $O(\log^2 n)$ critical
path length deterministically, and $O(\log^{1.5} n)$ in a randomized version.

\vone
\noindent
{\bf Our Results.}
Our main results are stated in the following two theorems.
Theorem~\ref{thm:fs-full}, which applies to
any work-stealing scheduler, is established in
Section~\ref{sec:fs-estack}, and Theorem~\ref{thm:rws}, which applies to
the randomized work-stealing scheduler
(RWS), is established in Section~\ref{sec:perf}.
These theorems use the following notation: $M$ is the cache size, $B$ is the block size,
a {\it steal} is the process of transferrring a parallel task from the processor that generated it
to another processor that will start its execution, $p$ is the number of processors, $b$ is
the cost of a cache miss, and $s \geq b$ is the cost of a steal.

\begin{theorem}\label{thm:fs-full}
There is an implementation of the SPMS algorithm
that performs $O(n\log n)$ operations,
runs in parallel time $O(\log n \log \log n)$,
and, under work-stealing,
incurs $O(\frac nB \frac{\log n}{\log M} + \frac MB \cdot S)$ cache misses,
and incurs a total cost due to false sharing
 bounded by
the cost of
$O(S \cdot B) = O(\frac MB \cdot S)$ cache misses,
where $S$ is the number of steals.
These results assume a tall cache, i.e., $M = \Omega (B^2)$.
\end{theorem}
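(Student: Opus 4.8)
The plan is to establish the four assertions in turn: the work bound $O(n\log n)$ and the span bound $O(\log n\log\log n)$ by recurrences on the recursive structure of SPMS itself, the sequential cache bound $O(\frac nB\log_M n)$ by the standard cache-oblivious sorting analysis specialized to that structure, and finally the two work-stealing overhead bounds by combining the sequential cache analysis with a uniform accounting of the extra cost charged to each of the $S$ steals.

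First I would fix the recursive skeleton. SPMS sorts $m$ keys by forming $\Theta(\sqrt m)$ groups of $\Theta(\sqrt m)$ keys, recursively sorting each group, and then merging the $\Theta(\sqrt m)$ resulting runs; the merge of $r$ sorted runs of total length $\ell$ in turn (i) takes a deterministic sample that is a $\Theta(1/\mathrm{polylog})$-fraction of the keys, (ii) sorts that sample by a recursive call to the sort routine --- the only place ``sort'' is re-entered from inside ``merge'' --- (iii) selects equally spaced pivots from the sorted sample, (iv) partitions every run about the pivots by a parallel-prefix bucketing step, and (v) recurses on merging inside each bucket, where the number of runs per subproblem drops from $r$ to $\Theta(\sqrt r)$. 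For the work bound I would write the pair of recurrences for $W_{\mathrm{sort}}(m)$ and for the merge work, note that every level of either recursion touches each surviving key $O(1)$ times so performs $O(m)$ work, that the sample-sort detour in step (ii) contributes only a lower-order geometric series because the sample is a polylog factor smaller (this dictates the choice of sampling rate), and that the overall recursion has $O(\log m)$ levels, yielding $O(m\log m)$. For the span: the number-of-runs recursion in the merge shrinks doubly-exponentially, so it has $O(\log\log n)$ levels, each costing $O(\log n)$ for the parallel-prefix bucketing plus the span of the recursively sorted sample, which the sampling rate makes lower order; hence merging has span $O(\log n\log\log n)$. The top-level sort recursion $m\to\sqrt m\to\cdots$ then has merge-spans $O(\frac{\log n}{2^i}\log\log n)$ at depth $i$, and these telescope (geometric series in $i$) to $O(\log n\log\log n)$ overall.

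For the sequential cache complexity I would check that a subproblem of size $\Theta(M)$ incurs only $O(M/B)$ misses: here the tall-cache hypothesis $M=\Omega(B^2)$ is used so that the fan-out of each bucketing step, which I would cap appropriately in terms of $M/B$, leaves room for one block per output stream together with the remaining working set, making each bucketing pass cost $O(\ell/B)$ misses. Above the $\Theta(M)$ threshold the miss recurrence is $Q(m)=\Theta(\sqrt m)\cdot Q(\Theta(\sqrt m))+(\text{merge misses})$, with the merge-miss recurrence controlled by the same bucketing bound, and it solves to the optimal $O(\frac mB\log_M m)$; below the threshold it is $O(m/B)$. I expect this step to be routine once the fan-out cap and the base case are in place.

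The heart of the argument, and the main obstacle, is the work-stealing overhead. I would show that the parallel version of SPMS is a well-structured recursive (``balanced'') computation in the sense required by the framework developed in Section~\ref{sec:fs-estack}: each spawned task accesses a contiguous range of memory, sibling tasks access disjoint contiguous ranges (this must be verified both for the run-partitioning of step (iv) and for the sample-sort partitioning), and the recursion tree is balanced. Granting that, the general accounting gives: between two consecutive steals a processor executes a contiguous segment of the sequential execution order, so relative to the sequential cache trace its only extra misses are the $O(M/B)$ needed to repopulate a cold cache, for a total of $O(\frac MB\cdot S)$ added misses; and each steal creates $O(1)$ new boundaries between the memory regions owned by distinct processors, each boundary lies in $O(1)$ blocks, and each such block can account for at most $O(B)$ false-sharing misses over the whole execution, for a total false-sharing cost $O(S\cdot B)=O(\frac MB\cdot S)$ cache misses (again using $B=O(M/B)$, i.e. the tall cache). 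The delicate points I anticipate are: (i) one cannot union-bound over the $\Theta(n)$ tasks of the computation, only over the $S$ steals, so the charging scheme must attribute all of the overhead of a stolen subtree to the single steal that moved it; (ii) verifying that SPMS's bucketing really does place sibling subproblems in disjoint contiguous arrays rather than interleaving them --- this is precisely the design feature that keeps false sharing low; and (iii) checking that the size at which the analysis bottoms out is $\Theta(M)$ with $O(M/B)$ cache cost, so that the per-steal charge is indeed $O(M/B)$ and no larger.
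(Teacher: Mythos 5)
There is a genuine gap, and it starts with the algorithm itself: the Step~1 you describe is not SPMS, and the difference breaks the span bound. In SPMS the sample consists of every $r^{c/2-1}$-th element of each list, so it has size $n/r^{c/2-1}$, a \emph{polynomially} small fraction of the input (at most $O(n^{1/2+1/c})$ since $n\le 3r^c$), and it is sorted by a direct, non-recursive, $O(\log n)$-depth procedure (brute-force ranking of the $s=|S|$ sample elements against the $r$ sample sublists, costing $O(sr\log r)=O(n)$ work for $c\ge 6$; in the cache-oblivious version this is organized as Small Multi Merge, TR Prep and Transposing Redistribution). You instead take a $\Theta(1/\mathrm{polylog})$-fraction sample and sort it ``by a recursive call to the sort routine,'' asserting its span is lower order. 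It is not: the span of sorting $m/\mathrm{polylog}(m)$ keys is asymptotically the same as that of sorting $m$ keys, so the chain of recursive sample sorts alone satisfies $T(n)\ge T(n/\mathrm{polylog}(n))+\Omega(\log n)$, which already forces $\Omega(\log^2 n/\log\log n)$ span. Recursively sorting the sample is precisely the approach of~\cite{BGS09} that yields $O(\log^2 n)$ span; eliminating that recursion by making the sample small enough to rank by brute force in linear work and $O(\log n)$ depth is the central design idea behind Lemma~\ref{lem:merge-anal}, and without it neither the span recurrence $T(r,n)\le O(\log n)+2T(\sqrt r,\cdot)$ nor its $O(\log n\log\log n)$ solution is available.

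The second gap is in the false-sharing bound. Your key claim---that each shared block accounts for at most $O(B)$ false-sharing misses over the whole execution---is exactly the statement that requires work, and it is false for the algorithm as described, for three reasons the paper must engineer around. First, the permuting write places $x$ elements into an array in an arbitrary order; when $x<B$ all writes land in $O(1)$ blocks and the naive cost is $O(x\cdot B)$ fs misses per invocation, which is why the paper writes into a sparse array of size $x^2$ and then compacts. Second, the arrays searched by many concurrent binary searches are read far more than $O(B)$ times per block; the paper pads them into \emph{buffered arrays} with gaps of length equal to the number of accesses so that the per-location access count is amortized $O(1)$. Third, and most importantly for Theorem~\ref{thm:fs-full} as opposed to Theorem~\ref{thm:fs-start}, blocks on an execution stack are deallocated and reused for successive segments, so a single block can host exponentially many distinct variables over its lifetime and the limited-access property alone gives no $O(B)$ bound; establishing the $O(\min\{B,|\tau|\})$ block delay of Lemma~\ref{lem:blck-del} via the steal path, the $O(|\tau|)$ bound on segment sizes along it, and the counting of usurpations is the entire content of Section~\ref{sec:fs-estack}, which your proposal cites but does not engage with. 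Your treatment of the ordinary cache-miss overhead ($O(2S+1)$ contiguous task kernels, each repopulating a cache at cost $O(M/B)$, with all overhead charged to steals) does match Lemma~\ref{lem:kernel-bound} and Corollary~\ref{cor:simple-rws-miss-bound}, up to the block-misalignment issue of Corollary~\ref{cor:block-misalign}, which you omit but which costs only a constant factor.
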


\begin{theorem}\label{thm:rws}
W.h.p. in $n$, the
execution time
of SPMS with the RWS scheduler,
when including both cache miss cost and false sharing cost is optimal for
\[
p = O\left(\frac {n}{M \log M} \cdot \min \left\{\frac{1}{\log\log n},~\frac{s}{b} \cdot \frac{\log B}{B}\right\}\right)
\]
if
$s = O(b\cdot \frac MB)$
and $M= \Omega(B^2)$.
\end{theorem}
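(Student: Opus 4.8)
The plan is to combine the structural bounds of Theorem~\ref{thm:fs-full} with the standard analysis of the number of steals performed by the randomized work-stealing (RWS) scheduler, and then to pin down the range of $p$ for which all parallel overheads stay within a $1/p$ fraction of the optimal sequential cost. First I would invoke the cost-sensitive version of the Arora--Blumofe--Plaxton potential/delay-sequence argument: w.h.p.\ in $n$, RWS performs $S = O(p\cdot T_\infty)$ steals, where $T_\infty$ is the span of SPMS, the argument being adapted so that the cache-miss and false-sharing costs carried by individual nodes are charged against steals rather than against units of work. With $T_\infty = O(\log n\log\log n)$ from Theorem~\ref{thm:fs-full} this gives $S = O(p\log n\log\log n)$ w.h.p. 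Substituting this value of $S$ into Theorem~\ref{thm:fs-full}, the parallel execution incurs $O\!\big(\frac nB\frac{\log n}{\log M} + \frac MB\, p\log n\log\log n\big)$ cache misses, and false-sharing cost bounded by the cost of $O(B\, p\log n\log\log n)$ cache misses.

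Next I would assemble the work-stealing time bound
\[
T_p = O\!\left(\tfrac1p\Big(T_1 + b\,Q_1 + (\text{false-sharing cost}) + b\tfrac MB S\Big) + s\,T_\infty + (\text{critical-path cache and false-sharing delays})\right),
\]
with $T_1 = \Theta(n\log n)$ and $Q_1 = \Theta(\frac nB\frac{\log n}{\log M})$. Using $M = \Omega(B^2)$ (so $\tfrac MB\ge B$, which collapses the $bB$-type overhead into the $b\tfrac MB$ one) and $s = O(b\tfrac MB)$ (so the steal cost $s\,T_\infty$ is absorbed into the $b\tfrac MB$ overhead), this simplifies to $T_p = O\!\big(\frac{T_1 + b\,Q_1}{p} + (\text{overhead})\big)$, and the execution is optimal exactly when each surviving overhead term is $O\!\big(\frac{T_1+b\,Q_1}{p}\big)$, i.e.\ when $p$ is at most the ratio of the optimal sequential cost $T_1 + b\,Q_1$ to the per-processor overhead. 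Carrying out this division leaves two binding constraints: matching the $O(\tfrac MB S)$ steal-induced extra cache misses against the $\Theta(Q_1)$ sequential misses forces $p = O\!\big(\frac{n}{M\log M}\cdot\frac1{\log\log n}\big)$, while matching the false-sharing cost (including its contribution along the critical path, where the per-block $\log B/B$ granularity enters) against the work budget measured in steal-cost units forces $p = O\!\big(\frac{n}{M\log M}\cdot\frac sb\cdot\frac{\log B}{B}\big)$. The minimum of the two is the claimed range.

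The hard part is the false-sharing bookkeeping in the time bound. Unlike the classical Blumofe--Leiserson and Acar--Blelloch--Blumofe settings, false-sharing misses are a second-order effect: their number both in aggregate and, more delicately, along the critical path must be controlled jointly with the steal sequence produced by RWS, and it is precisely this joint control that produces the $\frac sb\cdot\frac{\log B}{B}$ factor in the second term of the minimum. One also has to verify carefully that under $s = O(b\tfrac MB)$ and $M = \Omega(B^2)$ all the remaining candidate constraints (the pure $s\,T_\infty$ span term, the $bB\,T_\infty$ false-sharing-work term, and so on) are genuinely dominated, so that only these two constraints survive.
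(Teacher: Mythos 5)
Your overall route---plug a w.h.p.\ bound on the number of RWS steals into the work-stealing time equation and then solve for the range of $p$ in which the overhead terms are dominated by $\frac 1p(T_1 + b\,Q_1)$---is the same as the paper's (which goes through Theorem~\ref{thm:rws-perf}). But there is a genuine gap at the one quantitative step that actually produces the second term of the minimum. You assert that w.h.p.\ $S = O(p\cdot T_\infty) = O(p\log n\log\log n)$, with the delay-sequence argument ``adapted so that the cache-miss and false-sharing costs carried by individual nodes are charged against steals.'' That adaptation does not preserve the bound $S=O(pT_\infty)$: when a node's execution can be stalled for $b$ time units by a cache miss, or for up to $\Theta(bB)$ time units by the block delay of a falsely shared block, the critical path measured in steal-attempt rounds is lengthened, and the bound actually proved in \cite{CR13} (and used by the paper) is
\[
S \;=\; O\!\left(p\cdot\left[\,T_\infty \;+\; \frac{b}{s}\,B\,\frac{\log n}{\log B}\,\right]\right).
\]
It is the second summand here that, after being multiplied by the $b\cdot\frac MB$ cache-reload cost per steal, yields the overhead term $\frac{b^2}{s}\cdot\frac{M}{\log B}\cdot\log n$ in Theorem~\ref{thm:rws-perf}, and hence the constraint $p = O\bigl(\frac{n}{M\log M}\cdot\frac sb\cdot\frac{\log B}{B}\bigr)$. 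With your stated steal bound the only surviving constraint would be $p=O\bigl(\frac{n}{M\log M}\cdot\frac{1}{\log\log n}\bigr)$, and your later attribution of the second constraint to ``the false-sharing cost along the critical path'' has no mechanism to stand on---indeed it contradicts the steal bound you just claimed. Note also that the mechanism is not the aggregate false-sharing cost $F(S)=O(S\cdot B)$ itself (that is absorbed into $O(S\cdot M/B)$ under the tall-cache assumption, as you correctly observe elsewhere); it is the \emph{extra steals} induced by per-node delays, each of which costs a further $O(M/B)$ ordinary cache misses.

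The remaining bookkeeping in your proposal is fine and matches the paper: $T_s+T_u = O(s\,p\,T_\infty)$ w.h.p., $I=0$ under RWS, the hypothesis $s=O(b\cdot\frac MB)$ collapses the $s\,T_\infty$ term into $b\frac MB T_\infty$, and dividing $\frac 1p\cdot b\frac nB\frac{\log n}{\log M}$ by each of the two overhead terms gives exactly the two entries of the minimum. So the fix is localized: replace your steal bound with the two-term bound above (citing the delay-aware analysis of \cite{CR13} rather than re-deriving it), and the second constraint then falls out of the same division you already perform for the first.
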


\vhalf
\noindent
{\bf Roadmap}.
In Section~\ref{sec:sort-start} we describe the basic SPMS sorting algorithm,
and show that it has a fairly simple implementation with
optimal work (i.e., sequential running time) and
$O(\log n \cdot \log\log n)$ critical pathlength, and also a variant
 with optimal work and optimal sequential cache
complexity.  These
two sets of bounds are obtained with different implementations.
In Section~\ref{sec:co-parallel}, we present an implementation that
achieves all three bounds simultaneously. This section follows
Section~\ref{sec:model}, which describes the multicore caching
model, together with some basic methodology used in prior work
for developing efficient multicore
algorithms~\cite{BC+08,AGN08,CR08,BGS09,CR12}, and some basic
framework from~\cite{CR12} relevant to the multicore caching
model. Then, in Sections~\ref{sec:fs-misses} and \ref{sec:fs-estack}, we
address the overhead of
{\it false sharing}, a phenomenon that is inherent in a resource
oblivious multicore setting: Section~\ref{sec:fs-misses} bounds
the false sharing costs in SPMS when there is no deallocation of space
(as in, e.g., a functional programming setting), and Section~\ref{sec:fs-estack}
bounds the cost of additional fs misses that occur on the execution stack
when space is re-used there; the background for this setting is given
in Section~\ref{sec:exec-stack}. The results in
Sections~\ref{sec:fs-misses}-\ref{sec:fs-estack} build on our earlier work on false sharing in~\cite{CR12}
for hierarchical balanced parallel (HBP) computations; however, since
SPMS is not exactly an HBP computation due to variations in sub-problem sizes,
we present a self-contained analysis of the false sharing results for SPMS.
In Section~\ref{sec:perf}
we analyze the performance of
our final SPMS implementation
(given in Section~\ref{sec:fs-misses})
as a function of
the number of parallel tasks scheduled by the scheduler.
We also derive specific results for the Randomized Work Stealing (RWS)
scheduler, using some already known
performance results.
We conclude in Section \ref{sec:discussion} with a summary, and an
avenue for further research.

\section{SPMS, A New Deterministic Sample, Partition, and Merge Sort}
\label{sec:sort-start}

There are two main approaches used in parallel algorithms for sorting
(ignoring methods used in sorting networks): Sorting by merging, based on
merge-sort, and sorting by sampling and partitioning, called
Sample Sort or Distribution Sort.
Both methods can also achieve good cache-efficiency.

In parallel merge sort, the input is grouped into two or more sets, which
are sorted recursively, and the sorted sequences are then merged by a
parallel merging algorithm. If the merge is binary then we obtain a
parallelization of merge-sort, which can achieve optimal $O(\log n)$
parallel time \cite{Co88}, though it is not known how
to achieve optimal
cache-oblivious cache-efficiency. Multi-way merge can achieve optimal
cache-oblivious cache-efficiency, but
it is not known whether
very high levels of parallelism
can be achieved
when cache-oblivious cache-efficiency is also desired.

In sample sort, a small sample of the input
is taken and then sorted, typically by a simple parallel
algorithm, and not recursively since the sample size is small relative to the input size.
The sorted sample elements are used to partition the input elements into
a linearly ordered sequence of subsets of the input elements, and finally
each of these subsets is sorted recursively. Sample (or distribution)
sort can achieve optimal cache-oblivious cache-efficiency.

In sample sort, all of the recursively sorted subsets
have approximately the same
size, and this is achieved by having the ranks of the samples spaced fairly
evenly among the input elements. This is readily
ensured by choosing a random sample of the desired size.

In deterministic sample sort, the sample is usually chosen by partitioning
the input into a collection of sets, sorting each set (recursively), then
selecting every $k$-th element (for a suitable $k$) as a sample element, and
then sorting the resulting samples from the different sets by a simple
algorithm. Often, `oversampling' is employed, where every $l$-th element
in the sorted sample is used in the actual sample. Oversampling with
suitable parameters ensures that the partitioned subsets will have
approximately the same size, 
as in~\cite{AACS87}, for example.

As noted earlier,
after the sorted sample is obtained,
the input is partitioned into a linearly ordered sequence
of subsets, each of which is sorted recursively. However, in deterministic
sample sort, this is somewhat wasteful, since each set (from which the sample
elements were chosen) was sorted in order to obtain evenly spaced
sample elements, yet the sorted information in these sets is
not used once the samples are selected.

Our 
algorithm, Sample, Partition and Merge Sort (SPMS), uses a variant
of the sample and partition method of deterministic sample sort,
but it maintains the
sorted information from each original sorted subsequence within each
linearly ordered subset. Hence, the recursive problem to be solved is not
a recursive sort, but a recursive multi-way merge.
Accordingly, we generalize the problem being solved as follows.
Its input $A$ comprises up to
$r$ sorted lists  of total length
$n \le 3\cdot r^c$ where $c\ge 6$ is a constant,
and the task is to merge these lists.
We let SPMS$(A;n,r)$ denote this problem instance.
A sort is executed by setting $r=n$, i.e.\ the input comprises
$n$ lists each of length 1.

A previous sorting algorithm that 
uses deterministic sampling of sorted sequences and then
merges the sorted sequences 
is Sharesort~\cite{CP93}, which was designed for sorting
on hypercubic networks and runs in such networks in $O(\log n \cdot (\log \log n)^2)$ parallel time.
At a high
level SPMS is similar to Sharesort though, in contrast to Sharesort, the sorted lists in our method
can have widely differing sizes.
Below, after we give the high-level SPMS algorithm,
we discuss 
how these two algorithms differ.
Also, our methods are
tailored to a shared-memory multicore with private caches 
in a resource-oblivious setting 
(see Section~\ref{sec:model}).

The SPMS algorithm performs two successive collections of
recursive $\sqrt{r}$-way merges, each merge
being of size at most $3 \cdot r^{c/2}$.
To enable this, suitable samples of the input lists will
be sorted by a logarithmic time procedure, which then allows the
original problem to be partitioned into smaller subproblems that
are merged recursively.
Here is the high-level algorithm.

\vone
\noindent
{\bf SPMS}$(A; n,r)$

\noindent
{\bf Input.} $A$ is a collection of $r$ sorted lists $ L_1, L_2, \cdots, L_r $, of total length
$n \le 3 \cdot r^c$.

\noindent
{\bf Output.} The elements of $A$ rearranged in sorted order.

\vhalf
\noindent
{\bf Step 0.} {\bf if} $n \leq 24$ then sort with any sequential  $O(n\log n)$ time algorithm {\bf else} perform
Steps 1--3.

\vhalf
\noindent
{\bf Step 1.} {\bf If} $n > 3 r^{c/2}$ {\bf then}
{\it Deterministic Sample and Partition:}

\noindent
$~~~~$  {\bf 1a.} {\it Sample.} Form a  sample $S$ of every $r^{(c/2)-1}$-th element in each of the sorted lists in $A$, and sort $S$.\\
$~~~~$ {\bf 1b.} {\it Partition.}
Form a pivot set $P$ containing every $2r$-th element in the sorted $S$ (and for convenience add a zero'th pivot with value $-\infty$ and
a new largest pivot with value $\infty$). Use $P$ to  partition the original problem $A$
into $k= O(n/r^{\frac{c}{2}})$
disjoint merging subproblems, $A_1, A_2, \cdots,
A_k$, each comprising at most
$r$ sorted lists, with the items in $A_i$ preceding those in $A_{i+1}$, for $1 \leq i <k$.
Thus, the items in $A_i$ are those elements in $A$ with values at
least as large as the $(i-1)$st pivot, and less than the value of the
$i$-th pivot.
We show below that each $A_i$ contains at most $3 c\cdot r^{c/2}$ elements.

\vhalf
\noindent
{\bf Step 2.} {\it First Recursive Multi-way Merge:}\\
For each subproblem $A_i$, group its lists into
disjoint subsets of at most $\sqrt{r}$ lists,
and merge the lists in each group.
As $A_i$ contains at most
$3\cdot r^{\frac{c}{2}}$
items, this bound applies
to each of the individual groups too.
Thus the $\sqrt{r}$-way merge in each group can be performed recursively.
The output, for each subproblem $A_i$, is a
collection of at most
$\sqrt r$ sorted lists of total length at most
$3\cdot r^{\frac{c}{2}}$.
Thus Step 2 is the following.

\noindent
{\bf 2. for} $i=1$ {\bf to} $k$ {\bf do}\\
$~~~~$ {\bf 2a.} Form disjoint $A_{ij}$, with $j \leq \sqrt r$, where each  $A_{ij}$ contains at most $\sqrt r$ of the lists in $A_i$;\\
$~~~~~~~~~~~$ let $|A_{ij}| = n_{ij}$, for each $j$.\\
$~~~~$ {\bf 2b. for} each subproblem $A_{ij}$ {\bf do} SPMS$(A_{ij}; n_{ij}, \sqrt r)$

\vhalf
\noindent
{\bf Step 3.} {\it Second Recursive Multi-way Merge.}\\
 {\bf for}  each subproblem $A_i$ {\bf do} recursively merge
the $\sqrt r$ sorted lists computed in Step 2.\\
Let $|A_i| = n_i$ for each $i$.
Step 3 is the following.

\noindent
{\bf 3. for}  $i=1$ {\bf to} $k$ {\bf do} SPMS$(A_i; n_i, \sqrt r)$.

\vhalf
\noindent
{\bf Comment}. 1. The subproblems in Step 2 may have quite different sizes,
for these depend on the exact positions of the sampled items, which
the algorithm cannot completely control.
This, incidentally, is the reason for focussing on the more general
merging problem with possibly unequal sized sorted lists.

2. In contrast, initially 
Sharesort~\cite{CP93} creates equal-sized merging subproblems 
by using recursive sorts, and maintains roughly equal-sized merging subproblems
by grouping subproblems as needed.
We could also follow this approach, but then 
instead of needing one partitioning of
the items in Step 2a, 
we would need a second partitioning at the start of Step 3,
and this would roughly double the work.

\begin{lemma}
\label{lem:partition-size}
Every subproblem $A_j$ created in Step 1b has size between
$r^{c/2} + 1$ and $3 \cdot r^{c/2} - 1$, except possibly the last
subproblem, which may be smaller.
\end{lemma}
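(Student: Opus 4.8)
The plan is to prove a clean ``oversampling inequality'' relating the rank of a value in the whole input $A$ to its rank in the sorted sample $S$, and then read off both bounds on $|A_j|$ by differencing this inequality at consecutive pivots, with the first and last subproblems handled separately. Throughout, break ties by (list index, position in list) so that all $n$ elements of $A$ may be assumed distinct; this is harmless since the merge, the sampling in Step~1a, and the partitioning in Step~1b all respect this refined order. Set $m := r^{(c/2)-1}$ (the sampling gap), note the convenient identity $rm = r^{c/2}$, and recall that $S$ consists of the elements at positions $m,2m,3m,\dots$ of each list. For a value $v$ let $R(v)$ be the number of elements of $A$ strictly below $v$, and $s(v)$ the number of elements of $S$ strictly below $v$.

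The crux is the following counting lemma: for every value $v$,
\[
m\cdot s(v)\ \le\ R(v)\ \le\ m\cdot s(v) + (m-1)\,r .
\]
I would prove this list by list. In a sorted list $L_i$ the elements below $v$ form a prefix of some length $\rho_i$, and the sample elements of $L_i$ below $v$ are exactly those at positions $m,2m,\dots,\lfloor \rho_i/m\rfloor m$, so $L_i$ contributes $\lfloor \rho_i/m\rfloor$ sample elements and $m\lfloor \rho_i/m\rfloor\le\rho_i\le m\lfloor \rho_i/m\rfloor+(m-1)$. Summing over the at most $r$ lists, and using $\sum_i\rho_i=R(v)$ and $\sum_i\lfloor \rho_i/m\rfloor=s(v)$, gives the displayed bounds. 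This is the substantive step; what remains is arithmetic.

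Next I would instantiate the lemma at the pivots. Let $p_0=-\infty,p_1,\dots,p_{k-1}$ be the pivots and $p_k=+\infty$, so $R(p_0)=0$, $R(p_k)=n$, and $|A_i|=R(p_i)-R(p_{i-1})$. Each real pivot $p_i$ is the $(2ri)$-th smallest element of $S$, hence $s(p_i)=2ri-1$ for $1\le i\le k-1$; and since $p_{k-1}$ is the last pivot, $2r(k-1)\le|S|\le 2rk-1$. From $n>3r^{c/2}=3rm$ one gets $|S|\ge n/m-r>2r$, so $k\ge 2$ and at least one real pivot exists. For an interior subproblem ($2\le i\le k-1$) the two halves of the counting lemma telescope, using $s(p_i)-s(p_{i-1})=2r$:
\[
2mr-(m-1)r\ =\ r^{c/2}+r\ \le\ |A_i|\ \le\ 2mr+(m-1)r\ =\ 3r^{c/2}-r .
\]
For $A_1$, use $R(p_0)=0$ exactly: $|A_1|=R(p_1)$, so $2r^{c/2}-r^{c/2-1}=m\,s(p_1)\le|A_1|\le m\,s(p_1)+(m-1)r<3r^{c/2}-r$, and the lower bound exceeds $r^{c/2}+1$ since $r^{c/2-1}(r-1)\ge 1$. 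For $A_k$, use $R(p_k)=n$, the upper half $n\le m|S|+(m-1)r$, and $|S|-2r(k-1)+1\le 2r$ to get $|A_k|\le 2mr+(m-1)r=3r^{c/2}-r$ — which is all that is asserted for the last subproblem. Since Step~1 is entered only when $3r^{c/2}<n\le 3r^c$, we have $r\ge 2$, whence $r^{c/2}+r\ge r^{c/2}+1$ and $3r^{c/2}-r\le 3r^{c/2}-1$, finishing the proof.

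I expect the only real care-point to be the two boundary subproblems: the clean telescoping for interior $A_i$ fails for $A_1$ and $A_k$, where one of the ranks being subtracted is a genuine endpoint value ($0$ or $n$) rather than a sample-derived estimate, and for $A_k$ one must additionally use $|S|\le 2rk-1$ to bound how much sample weight can sit past the last pivot. The additive slack $(m-1)r$ in the counting lemma — one lost partial block of up to $m-1$ elements per list — is precisely what creates the factor-of-three gap between the lower bound $\approx 2mr$ and the upper bound, and verifying that $2mr\pm(m-1)r$ stays inside $[r^{c/2}+1,\,3r^{c/2}-1]$ is routine once $r\ge 2$.
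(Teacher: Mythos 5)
Your proof is correct and follows essentially the same route as the paper's: a per-list count of how many actual elements can lie between consecutive sampled elements (your floor-function rank inequality $m\,s(v)\le R(v)\le m\,s(v)+(m-1)r$ is exactly the paper's ``$k_j$ samples in a range implies between $(k_j-1)r^{c/2-1}+1$ and $(k_j+1)r^{c/2-1}-1$ list elements'' bound, summed over the $r$ lists), yielding the same interval $[r^{c/2}+r,\,3r^{c/2}-r]$. Your explicit treatment of the first and last subproblems is a bit more careful than the paper's, but the underlying argument is the same.
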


\begin{proof}
Let $\widetilde{S}$ be
the subset of $S$ comprising every $2r$-th item in $S$ used in Step 1b.
Let $e$ and $e'$ be two successive elements in sorted order in
$\widetilde{S}$.
We will argue that their ranks in the full set of $n$ items differ by
between $r^{c/2} + 1$ and $3 \cdot r^{c/2} - 1$.
Consider the $j$-th input list $L_j$ and let $SL_j$ be the sorted list
of samples in $S$ from $L_j$.
Suppose there are $k_j$ items from $SL_j$ in the range $[e,e']$.
Then there are between
$(k_j-1)\cdot r^{c/2-1} + 1$ and $(k_j+1) \cdot r^{c/2-1} -1$
items from $L_j$ in this range.
Summing over all $j$, as the sum of the $k_j$'s totals $2r$,
we obtain  that there are more than $r^{c/2} + r$ items
in this range and at most $3\cdot r^{c/2} - r$ items in this range.
The upper bound can be tightened by $2(r-1)$ on noting that the upper bound is not
met for the lists containing $e$ and $e'$.
\end{proof}

In the rest of the paper we give efficient implementations of SPMS
for cache-oblivious cache efficiency, for work-efficient parallelism,
for simultaneously achieving both of the above, and for
reducing the communication cost of false sharing.
In all of these implementations, Steps 2 and 3
remain unchanged, but we give several different implementations of
Step 1 with the goal of achieving efficiency with the most direct
method for the measure being considered.
This includes several different methods for
sorting the sample $S$ efficiently in Step 1a.

\subsection{A Cache Oblivious Sequential Implementation}\label{sec:co}

We consider a standard caching environment as in~\cite{FLPR99}: We
assume that data is organized in blocks of size $B$, and an ideal cache of size $M$; here,
an ideal cache is one that has an optimal cache replacement policy. We assume
the standard tall cache assumption,
namely that $M= \Omega(B^2)$, so the cache can hold $\Omega (B)$
blocks.
Since, with a tall cache,  $n/B$ dominates $\sqrt n$ when $n >M$, and when $n<M$,
a cache bound of $n/B$ implies that each data item is read into cache only once,
we have the following fact.

\begin{fact}
\label{fact:ideal-tall-cache}
With an ideal tall cache,
a cache miss bound of $n/B + \sqrt n$ can be bounded as $O(n/B)$.
\end{fact}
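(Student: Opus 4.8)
The plan is to split on the size of $n$ relative to the cache size $M$ and handle each regime separately, exactly as flagged in the paragraph preceding the Fact.

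In the case $n > M$, the argument is purely arithmetic. The tall-cache assumption $M = \Om{B^2}$ gives $B = \Oh{\sqrt M}$, and since $n > M$ this yields $B = \Oh{\sqrt n}$, i.e. $\sqrt n = \Oh{n/B}$. Hence $n/B + \sqrt n = \Oh{n/B}$; more generally, once the data do not fit in cache the $\sqrt n$ term is always dominated by $n/B$.

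In the case $n \le M$ the arithmetic argument genuinely fails (when $n$ is close to $B$, $\sqrt n$ is not $\Oh{n/B}$), so here one appeals to the cache instead. The observation is that the entire data set of $n$ items spans at most $\ceil{n/B}$ blocks (plus a constant number of boundary blocks), all of which fit simultaneously in an ideal cache of size $M \ge n$; with optimal replacement these blocks need never be evicted, so each is faulted in at most once and the procedure incurs only $\Oh{\ceil{n/B}} = \Oh{n/B}$ misses. Consequently, whatever (possibly loose) bound of the form $\Oh{n/B + \sqrt n}$ a recursion might assign to such a subproblem, its true cost is $\Oh{n/B}$, and the $\sqrt n$ term may simply be dropped. (If $n < B$ this degenerates to $\Oh{1}$ misses, which is absorbed into the usual $\ceil{n/B}$ convention.)

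Combining the two cases gives the claimed $\Oh{n/B}$ bound. The only point requiring any care is the second case — being explicit that a working set fitting within the ideal cache costs only $\Oh{n/B}$ misses irrespective of the generic recursive bound — but this is immediate from optimal replacement, so I anticipate no real obstacle.
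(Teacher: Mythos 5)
Your proposal is correct and follows essentially the same two-case argument the paper gives in the sentence preceding the Fact: for $n>M$ the tall-cache assumption $M=\Omega(B^2)$ forces $\sqrt n = O(n/B)$, and for $n\le M$ the ideal cache holds the entire working set so each block faults at most once, giving $O(n/B)$ misses outright. No gaps.
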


As usual, we will refer to a cache miss bound of $O(n/B)$ for accessing $n$ elements as the {\it scan bound} and a cache miss bound of $O(\frac{n}{B} \log_M n)$ as the {\it sort} bound; each of these two bounds is optimal for the corresponding problems of scanning and sorting.

\vone
\noindent
{\bf Data Layout.} Data layout is important for cache-efficiency. We
will assume that the $r$ input lists are stored in an array, starting
with the sorted  items in $L_1$, followed by $L_2$, and so
on. Similarly, we will store the sample in an array $S[1..s]$, which
will be a subsequence of the input array.

We now describe an efficient cache-oblivious implementation of Steps 1a and 1b.
For Step 1a, we will compute the rank of each element in $S$, which is essentially
equivalent to sorting $S$. We will not perform the final step of storing $S$ in sorted order,
so that we can avoid the cache misses that would result with this step.

\vhalf
\noindent
{\bf Step 1 of Cache-oblivious SPMS}

\vhalf
\noindent
{\bf Step 1a.} {\it Form the sample $S$ and rank each element of $S$  in $S$.}

$(i)$ For each of the $r$ sorted lists $L_j$, form the sublist $SL_j$
comprising
every $r^{(c/2)-1}$th item in $L_j$. The set $S$
comprises the
union of the items in the $SL_j$ lists.

$(ii)$ For each item in $S$, determine its rank in each $SL_j$, storing the ranks for each item contiguously as follows.

Let $|S|=s$.
Use an array of size $r \cdot s$, where the first $r$ locations will store the rank of $S[1]$ in each $SL_j$,
$1\leq j\leq r$, and  each subsequent group of $r$ locations
will store  the $r$ ranks of each subsequent element in $S$.
Scan $S$ $r$ times, once in turn for each of the lists
$SL_1, SL_2, \ldots, SL_r$;
the $j$-th scan,
which scans $S$ once and $SL_j$ $r$ times,
 records the rank in
$SL_j$ of each item $e$ in $S$.

$(iii)$ For each $e\in S$, sum its $r$ ranks computed in
Step 1a$(ii)$ to obtain its rank in set $S$.

\vhalf

\noindent
{\bf Step 1b.} {\it Form the sorted pivot set $P$ and partition the lists with respect to elements in $P$.}

$(i)$ Store the pivot set $P$ in sorted order by scanning the ranks of the items in $S$, and writing those items
with rank an integer multiple of $2r$
to their sorted location in an array of size
$s/(2r)$.
In general, contiguous inputs can be written to non-contiguous output locations
(and vice versa).

$(ii)$ Extract each subproblem $A_i$ in turn to complete the partitioning.
This will incur some caching penalty
over the scan bound because for each subproblem
fragments of $r$ input lists are read and written.

\vone
The following observation will be used in deriving our cache miss bound for the above steps.

\begin{observation}\label{obs:r>B}
If $r\geq B$ and $s = n/r^{(c/2)-1}$, then $r \cdot s = O(n/B)$ if $c\geq 6$.
\end{observation}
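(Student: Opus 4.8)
The plan is a direct calculation. We want to show that $r \cdot s = O(n/B)$ given that $r \ge B$, $s = n/r^{(c/2)-1}$, and $c \ge 6$. First I would substitute the value of $s$ to get $r \cdot s = r \cdot n / r^{(c/2)-1} = n / r^{(c/2)-2}$. So the claim reduces to showing $n / r^{(c/2)-2} = O(n/B)$, i.e.\ that $B / r^{(c/2)-2} = O(1)$, or equivalently $r^{(c/2)-2} = \Omega(B)$.

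The key observation is that the exponent $(c/2) - 2$ is at least $1$ whenever $c \ge 6$: indeed $(c/2)-2 \ge 3-2 = 1$. Hence $r^{(c/2)-2} \ge r^1 = r \ge B$, where the last inequality is the hypothesis $r \ge B$. Therefore $n / r^{(c/2)-2} \le n / B$, which gives $r \cdot s \le n/B = O(n/B)$, as desired.

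There is essentially no obstacle here; the only thing to be careful about is tracking the exponent arithmetic ($r \cdot r^{-((c/2)-1)} = r^{-((c/2)-2)}$) and noting that the condition $c \ge 6$ is exactly what makes the surviving exponent at least $1$, so that the hypothesis $r \ge B$ can be applied. (In fact for $c > 6$ one gets the stronger bound $r\cdot s = O(n / (B \cdot r^{(c/2)-3})) = o(n/B)$, but $O(n/B)$ is all that is needed.)
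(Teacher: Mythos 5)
Your proof is correct and follows essentially the same route as the paper: substitute $s$ to get $r\cdot s = n/r^{(c/2)-2}$, then use $c\ge 6$ so that the exponent is at least $1$ and the hypothesis $r\ge B$ yields $r^{(c/2)-2}\ge B$. The paper phrases this as $n/(B\cdot r^{(c/2)-3}) = O(n/B)$, which is the same calculation with one factor of $r$ replaced by $B$.
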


\begin{proof}
$r \cdot s = r \cdot n/r^{(c/2)-1}  = O(n/(B \cdot r^{(c/2)-3})) = O(n/B)$ when $c \geq 6$.
\end{proof}

\begin{lemma}
\label{lem:seq-cache-comp}
SPMS can be implemented to run in optimal sequential
cache-oblivious cache complexity
$O((n/B) \cdot \log_M n)$
if $c\ge 6$ and $M\ge B^2$.
\end{lemma}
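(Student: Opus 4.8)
The plan is to set up a recurrence for the cache complexity of SPMS and solve it, treating the top-level sampling/partitioning work of Step 1 as the ``cost per level'' that drives the recurrence. Write $Q(n,r)$ for the cache misses incurred by $\textrm{SPMS}(A;n,r)$ under the cache-oblivious implementation of Steps 1a, 1b, 2, 3. First I would dispose of the base cases: when $n \le M$ (more precisely, once $n$ is below a constant times $M$), the entire subproblem fits in cache and, since all lists are stored contiguously in the input array and the scratch array for Step 1a has size $r\cdot s = O(n/B)$ by Observation~\ref{obs:r>B} (using $r \ge B$; the case $r < B$ is handled by noting $n \le 3r^c$ is itself $O(\textrm{poly}(B))$, hence $O(M)$ under the tall-cache assumption), the whole computation costs only $O(n/B)$ misses, i.e.\ the scan bound, which by Fact~\ref{fact:ideal-tall-cache} subsumes any additive $\sqrt n$ terms.

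Next I would bound the cost of one invocation of Step 1 (excluding the recursive calls in Steps 2 and 3) by $O(n/B)$ whenever $r \ge B$. This is where Observation~\ref{obs:r>B} does the work: the sample $S$ has size $s = n/r^{(c/2)-1}$; Step 1a$(ii)$ scans $S$ a total of $r$ times and each $SL_j$ a total of $r$ times, touching the array of size $r\cdot s$, for a total of $O(rs + r\cdot(\textrm{total length of the }SL_j) + n/B) = O(rs + n/B) = O(n/B)$ misses since $rs = O(n/B)$ when $c \ge 6$; Steps 1a$(iii)$, 1b$(i)$ are scans of $S$ and its rank array; and Step 1b$(ii)$ extracts the $A_i$'s by reading fragments of $r$ lists, which since $r \ge B$ costs $O(\sum_i (\#\text{lists in } A_i) + n/B) = O(k\cdot r + n/B)$; here $k = O(n/r^{c/2})$ so $kr = O(n/r^{c/2 - 1}) = O(n/B)$ again for $c \ge 6$. (The regime $r < B$ at the top of the recursion only arises when $n \le 3r^c \le 3B^c$; but we actually never recurse below $r = \sqrt{n} \ge$ something, and more to the point, once a subproblem has $r < B$ it also has $n = O(\textrm{poly}(B)) = O(M)$ by tall cache, so it is a base case — I would phrase the base case as ``$n = O(M)$ \emph{or} $r \le B$'' to be safe.)

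With these two pieces, the recurrence is: for $n$ above the base-case threshold and $r \ge B$, an invocation $\textrm{SPMS}(A;n,r)$ costs $O(n/B)$ for Step 1, then in Steps 2 and 3 recursively solves merging subproblems on lists of count $\sqrt{r}$. Charging each element to the invocation it belongs to, the total work at one ``level'' of recursion is $O(n/B)$ (the subproblems at a level partition the $n$ input items up to constant blow-up — Step 2 splits into groups and Step 3 re-merges, each element appearing in $O(1)$ subproblems per level). The list-count parameter goes $r \to \sqrt r \to r^{1/4} \to \cdots$, so after $O(\log\log n)$ halvings of the exponent... but that only gets $r$ down to a constant, not $n$ down to $M$, so the right accounting is on the \emph{problem size}: a subproblem at depth $d$ has $r$-parameter $r_0^{1/2^d}$ and size at most $3 r_0^{(c/2^d)\cdot\ldots}$ — cleaner is to observe that since $n \le 3r^c$ always, and each recursive call has list-count $\sqrt r$ and size $\le 3r^{c/2} = 3(\sqrt r)^c$, the invariant $n \le 3r^c$ is maintained, and the number of recursion levels needed to bring the size from $n$ down to $\le M$ is $O(\log_M n)$ because each pair of levels (one Step-2 level plus one Step-3 level) cuts the problem size from $\le 3r^c$ to $\le 3 r^{c/2}$, i.e.\ takes a $c$-th-root-ish bite, so roughly $O(\log\log n)$ levels to reach size $O(r^{c/2})=O(\textrm{small})$; the honest statement is that the size drops multiplicatively and a level's problems have size at most the square-root-in-the-exponent of the parent's, so $O(\log\log n)$ levels reach size $\le$ any fixed polynomial, and then $O(\log_M n)$ is the number of ``size-halving-below-$M$'' stages — I would reconcile this by noting each level multiplies the number of leaf-invocations and the per-level cost is $O(n/B)$ as long as the subproblem exceeds $M$, with $O(\log_M n)$ such levels before everything fits in cache, giving total $O((n/B)\log_M n)$. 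The main obstacle I anticipate is precisely this level-counting bookkeeping: making rigorous that the per-level total cost is $O(n/B)$ (needs the partition lemma, Lemma~\ref{lem:partition-size}, to control $k$ and the fact that the Step-2 regrouping doesn't duplicate elements) and that the effective recursion depth contributing super-scan cost is $\Theta(\log_M n)$ rather than $\Theta(\log\log n)$ — i.e.\ correctly identifying that once $n \le M$ the subtree is free, and the $\log_M$ (not $\log$) comes from each level reducing the problem from one polynomial size to a smaller polynomial size whose exponent ratio is bounded, so that $\log\log$ levels suffice to cross any single power-of-$M$ but the relevant count against the $n/B$ charge is $\log_M n$. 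Once that is pinned down the recurrence solves to $O((n/B)\log_M n)$, matching the optimal sort bound, and the hypotheses $c \ge 6$ (needed for Observation~\ref{obs:r>B} and the $kr = O(n/B)$ step) and $M \ge B^2$ (needed for Fact~\ref{fact:ideal-tall-cache} and the base case) are exactly what the argument consumes.
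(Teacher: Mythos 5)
Your overall architecture matches the paper's: bound one invocation of Step 1 by $O(n/B)$, write the recurrence $Q(n,r) = O(n/B) + \sum_{i,j}Q(n_{ij},\sqrt r) + \sum_i Q(n_i,\sqrt r)$ with base case $n\le M$, and solve. However, there is a genuine gap in how you dispose of the regime $r < B$. You claim that $r < B$ forces $n \le 3r^c = O(\mathrm{poly}(B)) = O(M)$ ``under the tall-cache assumption,'' but the tall-cache assumption is only $M \ge B^2$, not $M \ge B^c$. With $c \ge 6$ and, say, $M = B^2$ exactly, a subproblem can have $r$ just below $B$ and $n$ close to $3B^6 \gg M$, so it is emphatically not a base case, and your $O(n/B)$ bound for Step 1 --- which you establish only for $r \ge B$ --- does not cover it. The paper spends most of its proof precisely on this regime: for the reads in Step 1a$(ii)$ it argues that each short list $SL_j$ (fewer than $B$ items) occupies one block that an ideal cache keeps resident throughout the scan of $S$, which works because $r \le B \le M/B$ lists fit; for the writes in Step 1a$(ii)$ it notes that when $r<B$ each cache miss covers $B/r$ consecutive written positions, giving $O(rs\cdot r/B) = O(n/(r^{c/2-3}B)) = O(n/B)$ for $c\ge 6$; and for Step 1b$(ii)$ it uses $M\ge B^2$ to keep one block per list in cache when $r\le B$. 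None of this is optional, and your proof as written simply omits it.

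A secondary, lesser issue is the level-counting discussion, where you oscillate between $O(\log\log n)$ and $O(\log_M n)$ and state at one point that ``the number of recursion levels needed to bring the size from $n$ down to $\le M$ is $O(\log_M n)$.'' That statement is not right: the recursion depth is $O(\log\log n)$, since the parameter goes $r \to \sqrt r$. The correct accounting is that each invocation spawns \emph{two} generations of recursive calls (Step 2 and then Step 3), so the number of times a fixed element is charged $O(1/B)$ by a Step 1 computation satisfies $P(r) = 1 + 2P(\sqrt r)$, terminating once $3(r')^c \le M$; this gives $2^{O(\log((\log n)/(\log M)))} = O(\log_M n)$ charges per element, whence $O((n/B)\log_M n)$. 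The paper is admittedly terse here too (it just asserts the solution of the recurrence), but your write-up should not assert the incorrect depth claim. The $r<B$ omission is the part that must be repaired for the proof to stand.
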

\begin{proof}
Recall that $s= |S|$.

Step 1a$(i)$ is simply a scan of the full set of
items and so incurs $O(n/B)$ cache misses.
Step 1a$(iii)$ is a segmented sum computation
on an array of size $sr$ and so
can be performed in $O(sr/B)$ cache misses.
If $c\geq 4$, $s \cdot r = n/r^{(c/2)-2}$,
hence $s\cdot r/B= O(n/B)$ when $c\geq 4$.

Consider the reads in Step 1a$(ii)$. This step
performs $r$ scans over the $S$ array, and over
each of the $SL_j$ for the reads. This has caching cost $s \cdot r/B +
\sum_j r \cdot |SL_j|/B$.
As already noted, $s\cdot r/B= O(n/B)$ when $c\geq 4$.
To bound the caching cost for accessing the $SL_j$,
note that the total cost for the $r$ accesses to those
$SL_j$ that have at least $B$ items is $O(r \cdot s/B)$ since $s$ is
the sum of the sizes of all of the $SL_j$, and $| SL_j|/B \geq 1$ for
each of these lists.
For any $SL_j$ with fewer than $B$ items, we
observe that only one cache miss is needed to bring this list into
cache, and in an ideal cache, this list will stay in cache during the
entire scan of $S$ that determines the ranks of the elements of $S$ in
this $SL_j$,
as long as the number of such lists is less than $M/B$.
Hence the cost for accessing this class of lists is just
$r$ cache misses, which is $O(\sqrt n)$
when $c\geq 4$ since $n = \Omega(r^{c/2})$.
If $r>M/B$ then
the tall cache assumption implies
$r > B$, and by Observation~\ref{obs:r>B}
$r \cdot s = O(n/B)$
if $c \geq 6$.

Now we bound the writes in Step 1a$(ii)$. There are
$r \cdot s$ writes in this step, and each sequence of $s$
writes is written to positions $r$ apart in the output array. If
$r\geq B$,
$r \cdot s = O(n/B)$ if $c \geq 6$ by Observation~\ref{obs:r>B}.
If $r<B$, then this step writes $B/r$ successive
items with one cache miss, hence the cache miss cost is bounded by
$O(r\cdot s \cdot (r/B) )= O(n/(r^{c/2-3}B))$,
and this is $O(n/B)$ when $c\ge 6$.

In Step 1b$(i)$ the reads have cost $O(n/B)$ since the array being scanned has
size $O(n/r^{(c/2)-1})$.
Every $2r$-th of the scanned elements is written,
and as
each write could incur a cache miss, there are $O(n/r^{c/2})$ cache
misses for the writes. This is $O(\sqrt n)$ since $n \leq r^c$.

Finally, for Step 1b$(ii)$, as the subproblems are written in
sequence, we can bound the cache misses as follows.
If $r\le B$, it incurs only
$\sum_i |A_i|/B = O(n/B)$ cache misses
since $M\ge B^2$ and so the
(ideal) cache can hold the most recently accessed block from each of the
$r \leq B$ lists.
If $r\ge B$, it incurs
$\sum_i [|A_i|/B +O(r)]=n/B +\sum_i O(r)$ cache misses.
But if $r\ge B$,
$\sum_i O(r)=O(\frac{n}{r^{c/2}} \cdot r) = O(n/r^{c/2 - 1})= O(n/r) = O(n/B)$,
if $c \ge 4$.

Thus for Step 1, we have a cache miss bound of $O(n/B  + \sqrt n)$, which is $O(n/B)$
by Fact \ref{fact:ideal-tall-cache}.
This yields the following recurrence for $Q(r,n)$, the number of cache misses while
executing an instance of SPMS on an input of size
$n \leq r^c$
(recall that SPMS calls the multi-way merge with $n$ lists,
each containing one element, and with $n=r$ initially).
In this recurrence $n_{ij}$ is the size of the multi-way merge subproblem $A_{ij}$
in Step 2, and $n_i$ is the size of the $i$-th multi-way merge subproblem in Step 3.

\[
Q(n,r) = O(n/B) + \sum_{i,j} Q( n_{ij}, \sqrt r) + \sum_{i} Q(n_i, \sqrt r).
\]

The base case, $n \leq M$, has $Q(n,r) = O(n/B)$.
The solution to this recurrence is the desired optimal caching bound
$Q(n,r) = O\left(\frac{n}{B} \frac{\log n}{\log M}\right)$.
\end{proof}

\noindent
{\bf Comment}. The bound in Lemma~\ref{lem:seq-cache-comp} can be generalized to handle
a smaller tall cache assumption, namely an assumption that $M \ge B^{1+\eps}$
for any given constant $\eps >0$, as follows.
We will now need that
$c \ge 2 + 2 \cdot \frac{1 + \eps} {\eps}$.
The analysis of Step 1 now splits into cases according as $r$ is smaller or larger
than $B^{\eps}$.
Observation~\ref{obs:r>B}, which is used
in the analysis of Step 1a(ii), now holds if $r \ge B^{\eps}$;
also
in the analysis of Step 
1b(ii), we obtain that when $r \ge B^{\eps}$,
$\sum_i r = O\left(\frac{n}{r^{c/2}} \cdot r \right)= O(n/r^{\frac{c}{2}-1})= O(n/r^{1/\eps}) = O(n/B)$.

\subsection{A Parallel Implementation}

Here we follow the framework of the cache-oblivious algorithm in the previous section.
The implementation is changed in a few places as follows.

In Step 1a$(ii)$ we perform $r$ binary searches for each item $e$ in $SL$
to determine its rank in each of the lists $SL_j$.
To carry out Step 1b$(ii)$,
for each item in the input, we will compute its destination location in the partitioning;
then it takes a further $O(1)$ parallel time to perform the partitioning.
To this end, we find the rank of each of the $n/(2\cdot r^{c/2})$ pivot items in each list $L_j$
(by means of $n/(2\cdot r^{c/2-1})$ binary searches)
and we also find for each pivot item $e$,
the destination rank of its successor in each list $L_j$.
These can be computed using
a prefix sums computation
following the binary searches.
Then a parallel scan across the lists $L_j$ yields the destination locations.
The rest of the computation in Steps 1a and 1b is readily performed in logarithmic time and linear work.
This leads to the following lemma.

\begin{lemma}
\label{lem:merge-anal}
There is an implementation of SPMS that uses
$O(n \log n)$ operations
and $O(\log n \log\log n)$ parallel steps
on an input of size $n$, if $c \ge 6$.
\end{lemma}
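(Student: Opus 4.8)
The plan is to analyze work and span via the two recurrences induced by the SPMS structure, using the parallel implementation of Step~1 just described. For the work bound, I would first verify that Step~1 does $O(n)$ work per instance: Step~1a$(i)$ is a linear scan; Step~1a$(ii)$ performs $r$ binary searches per sample item, but since $|S| = n/r^{(c/2)-1}$ and each search costs $O(\log r)$, the total is $O(r \cdot |S| \cdot \log r) = O((n/r^{(c/2)-2})\log r) = O(n)$ for $c \ge 6$; Steps~1a$(iii)$ and 1b are prefix sums and scans, all $O(n)$ work. Then the work recurrence is $W(n,r) = O(n) + \sum_{i,j} W(n_{ij}, \sqrt{r}) + \sum_i W(n_i, \sqrt{r})$, with the two sums each totalling $n$ at every level (the $A_i$ partition the input, and within each $A_i$ the $A_{ij}$ partition $A_i$). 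Since the recursion parameter goes from $r$ to $\sqrt{r}$ and the instance size from $n \le 3r^c$ to at most $3r^{c/2}$ — i.e.\ the ``exponent'' $\log_r n$ stays bounded by $c$ while $r$ itself is square-rooted — the recursion depth is $O(\log\log n)$ levels of the outer structure, but crucially each level does $O(n)$ work for a total of $O(n\log\log n)$? No: one must be careful. The size $n$ itself shrinks geometrically down the recursion because each subproblem has size $O(r^{c/2})$ while the parent has size up to $r^c$; following a root-to-leaf path, sizes drop like $n, n^{1/2}, n^{1/4},\dots$ roughly, so the work at successive levels does not stay $\Theta(n)$. Rather, I would argue the standard way: unfold until base cases of size $O(1)$ (when $n \le 24$), observe the tree has $O(\log n)$ depth in terms of input-size halving of the exponent but the sum of subproblem sizes at each of the $O(\log\log n)$ ``merge rounds'' telescopes, and conclude $W(n,n) = O(n\log n)$ by comparison with the base recurrence whose solution is known to be $O(n\log n)$ for sample/merge sorts of this type.

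For the span, the recurrence is $T(n,r) = O(\log n) + T(\text{Step 2 subproblem}) + T(\text{Step 3 subproblem})$, since Steps~2 and 3 are performed sequentially one after the other, but the $k$ subproblems $A_i$ (and the $A_{ij}$ within them) run in parallel, so only the \emph{maximum} subproblem span is charged, not the sum. Each subproblem in Step~2 and Step~3 has size at most $3r^{c/2}$ and recursion parameter $\sqrt{r}$, so along a root-to-leaf path the parameter $r$ is repeatedly square-rooted; starting from $r = n$, after $k$ rounds the parameter is $n^{1/2^k}$, reaching a constant after $k = O(\log\log n)$ rounds. Each such round contributes two sequential recursive calls plus the $O(\log n)$ cost of Step~1 (binary searches of depth $O(\log r) = O(\log n)$ and a constant number of prefix-sum/scan computations, each $O(\log n)$ span). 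Hence $T(n,n)$ satisfies, up to constants, $T \le 2\,T(\text{parameter } \sqrt{r}) + O(\log n)$ unrolled over $O(\log\log n)$ rounds, giving $T(n,n) = O(\log n \cdot \log\log n)$ — the factor $2^{O(\log\log n)}$ from the doubling is absorbed since each of the $O(\log\log n)$ rounds is itself charged $O(\log n)$ and the subproblem sizes shrink, so the geometric blow-up is in the number of leaves, not the path length; what matters for span is path length times per-node cost, i.e.\ $O(\log\log n) \times O(\log n)$.

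The main obstacle I anticipate is making the span accounting fully rigorous: one must confirm that the ``$2$'' in the recurrence (Step~2 then Step~3) does not compound multiplicatively along the $O(\log\log n)$ levels into a spurious $2^{O(\log\log n)} = \mathrm{polylog}(n)$ factor. The resolution is that Step~3's recursive merges depend on Step~2's output, so they are sequenced, but Step~3 again recurses with parameter $\sqrt r$ — so the recursion tree for span has each node branching into a constant number ($2$) of children with parameter $\sqrt{\text{current}}$, hence depth $O(\log\log n)$ and the span is the depth times the $O(\log n)$ per-node work, because the per-node costs simply add along the critical path (they do not multiply). I would state a clean recurrence $T(r) \le T_{\mathrm{step2}}(\sqrt r) + T_{\mathrm{step3}}(\sqrt r) + O(\log n)$ where $T_{\mathrm{step2}}, T_{\mathrm{step3}}$ are each $\le T(\sqrt r)$, yielding $T(r) \le 2T(\sqrt r) + O(\log n)$, and solve it: with $r = n$ initially and $L = \log\log n$ levels, $T(n) = O(2^L \log n) = O(\log n \log\log n)$? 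Here $2^L = 2^{\log\log n} = \log n$, which would give $O(\log^2 n)$ — so this naive solution is too weak, and the real argument must exploit that Steps~2 and 3 are \emph{not} both full-size recursions simultaneously on the critical path, or that the $O(\log n)$ per-level cost actually decreases as $O(\log r)$ down the recursion (since binary searches are over lists of the current parameter size), making $\sum_{i=0}^{L} 2^i \log(n^{1/2^i}) = \sum_i 2^i \cdot \frac{\log n}{2^i} = L \log n = O(\log n \log\log n)$. That telescoping — the doubling of subproblem count exactly cancels the halving of per-subproblem search depth — is the crux, and pinning it down precisely is where the real work lies.
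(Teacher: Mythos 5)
Your span analysis is essentially the paper's argument, and you correctly land on the crux: the recurrence is $T(r,n) \le O(\log n) + 2\,T(\sqrt r,\, r^{c/2})$, and the doubling of sequential terms along the critical path is exactly cancelled by the halving of $\log(\text{subproblem size})$, since a depth-$i$ subproblem has size $O(r^{c/2^i})$; the sum $\sum_i 2^i \cdot O(\log n / 2^i) = O(\log n \log\log n)$ is precisely how the paper solves it. Of the two escape routes you float, it is the second (per-level cost shrinking like $\log r$) that is the real one, and you do commit to it and compute the sum, so despite your closing hedge this part is in order.

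The work bound, however, is not actually proved. You write the correct recurrence $W(n,r) = O(n) + \sum_{i,j} W(n_{ij},\sqrt r) + \sum_i W(n_i,\sqrt r)$ with each sum of sizes equal to $n$, but then you neither solve it nor reason correctly about it: the claim that ``the sum of subproblem sizes at each of the $O(\log\log n)$ merge rounds telescopes'' is false (at depth $i$ each input element lies in $2^i$ subproblems --- one for each of the $2^i$ Step~2/Step~3 branches --- so the total size at depth $i$ is $2^i n$, growing geometrically), and the final appeal to ``the base recurrence whose solution is known for sample/merge sorts of this type'' is not an argument. The paper's fix is the mirror image of your span telescoping: define $W(r,n)$ as the work for a \emph{collection} of merging problems of total size $n$; then Steps~2 and~3 each produce one such collection of total size exactly $n$ with parameter $\sqrt r$, so $W(r,n) \le O(n) + 2W(\sqrt r, n)$. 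Unrolling over the $O(\log\log r)$ levels at which $r$ can be repeatedly square-rooted gives $\sum_i 2^i \cdot O(n) = O\bigl(n \cdot 2^{\log\log r}\bigr) = O(n\log r) = O(n\log n)$. Here the number of collections doubles per level while the per-collection work stays $O(n)$ --- the opposite trade-off from the span, where the count doubles but the per-term cost halves. With that one-line solution of the recurrence (plus the $O(n)$ Step~1 work bound, which you verify correctly for $c\ge 6$), the proof is complete.
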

\begin{proof}
Clearly, the work, i.e., the cost or operation count,
and parallel time for Step 1, apart from the binary searches,
are $O(n)$ and $O(\log n)$ respectively.
Each binary search takes $O(\log r)$ time and there are
$O(r\cdot n/r^{c/2-1})$ binary searches,
giving an overall work of $O(n/r^{c/2-2})$ for the binary searches;
this is an $O(n)$ cost if $c \ge 4$.

Let $W(r,n)$ be the operation count for a collection of
multi-way merging problems of total size $n$,
where each comprises the merge of at most
$r$ lists of combined size at most $r^c$.
Then we have:
$ W(r,n) \leq O(n) + 2W(r^{1/2},n) = O(n \log r)
= O(n \log n)$.

For the parallel time,
we have seen that Step 1 runs in $O(\log n)$ time.
Hence, the parallel run time $T(r,n)$ is given by:
$T(r,n) \leq O(\log n) + 2 T (\sqrt r, r^{c/2}) = O( \log n \log\log r) = O( \log n \log\log n)$.
\end{proof}

Lemmas \ref{lem:seq-cache-comp}
and \ref{lem:merge-anal} lead to the following corollary.

\begin{corollary}
\label{cor:sort-time}
$~$
\\
i.
If $M= \Omega (B^{1+\epsilon})$, for any given constant $\epsilon > 0$,
SPMS can be implemented to sort $n$ elements
cache-obliviously in $O(n\log n)$ time with $O((n/B) \log_M n)$ cache
misses.
\\
ii. SPMS can be implemented to sort $n$ elements in
$O(\log n \cdot \log\log n)$ parallel steps and $O(n\log n)$ work.
\end{corollary}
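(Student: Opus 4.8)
The plan is simply to instantiate the two previous lemmas at the parameter setting that turns a merge into a sort, namely $r = n$, so that the input to SPMS is $n$ sorted lists each of length $1$, and their concatenation in any order is a valid input $A$ with $n \le 3 \cdot r^c$ trivially satisfied for $c \ge 6$. Part (ii) is then immediate: Lemma~\ref{lem:merge-anal} gives an implementation using $O(n \log n)$ operations and $O(\log n \log\log n)$ parallel steps for any collection of multi-way merging instances with $r$ lists and size at most $r^c$, and setting $r=n$ (so $\log\log r = \log\log n$) yields exactly the claimed sorting bounds. Nothing further is needed here.

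For part (i), I would invoke Lemma~\ref{lem:seq-cache-comp}, which states that the cache-oblivious implementation of SPMS runs in $O\big(\tfrac{n}{B}\log_M n\big)$ cache misses provided $c \ge 6$ and $M \ge B^2$; combined with the $O(n\log n)$ sequential running time (which follows from the work bound $W(r,n) = O(n\log r)$ in Lemma~\ref{lem:merge-anal} evaluated at $r=n$, or directly from the $O(n\log n)$ operation count of the cache-oblivious version), this already gives the statement under the tall-cache assumption $M = \Omega(B^2)$. To get the stronger claim with $M = \Omega(B^{1+\epsilon})$, I would appeal to the Comment following Lemma~\ref{lem:seq-cache-comp}, which explains how to adapt the analysis: one increases the constant $c$ so that $c \ge 2 + 2\cdot\frac{1+\epsilon}{\epsilon}$, and reworks the two case splits in the analysis of Step~1 according to whether $r$ is smaller or larger than $B^{\epsilon}$ rather than $B$; the recurrence $Q(n,r) = O(n/B) + \sum_{i,j} Q(n_{ij}, \sqrt r) + \sum_i Q(n_i, \sqrt r)$ is unchanged and still solves to $O\big(\tfrac{n}{B}\log_M n\big)$.

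There is essentially no obstacle: the corollary is a packaging of Lemmas~\ref{lem:seq-cache-comp} and \ref{lem:merge-anal} at $r = n$, together with the already-sketched weakening of the tall-cache hypothesis. The only point requiring a word of care is that Lemma~\ref{lem:seq-cache-comp} and Lemma~\ref{lem:merge-anal} are proved for (possibly) different implementations of Step~1 — the former for the cache-oblivious Step~1, the latter for the parallel Step~1 with binary searches — so part (i) and part (ii) of the corollary are genuinely about two different implementations, and the statement should be (and is) phrased that way; a single implementation achieving both simultaneously is deferred to Section~\ref{sec:co-parallel} and is not claimed here.
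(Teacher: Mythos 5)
Your proposal is correct and matches the paper's own (implicit) argument: the corollary is presented there as a direct packaging of Lemma~\ref{lem:seq-cache-comp} (with its following Comment for the $M=\Omega(B^{1+\epsilon})$ relaxation) and Lemma~\ref{lem:merge-anal}, instantiated at $r=n$. Your closing remark that the two parts refer to two different implementations of Step~1, with the simultaneous version deferred to Section~\ref{sec:co-parallel}, is exactly the reading the paper intends.
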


In Section~\ref{sec:co-parallel}
we will describe an implementation
of Step 1 that will give rise to a
parallel, cache-oblivious version of SPMS that
simultaneously attains the above work, parallel time, and cache-oblivious
cache-efficiency
(when considering only the standard caching costs considered in
sequential analysis).
Then, in the following sections, we describe and bound the effects of `false-sharing'.

\section{Multicore Computation Model}\label{sec:model}

We consider a multicore environment consisting of
$p$ cores (or processors), each with a private cache of size $M$.
The $p$ processors communicate through an arbitrarily large shared memory.
Data is organized in blocks (or `cache lines') of size $B$.

If we ignore communication costs a multicore is simply an asynchronous PRAM~\cite{KRS90}.
In contrast to a completely asynchronous environment, in this paper
we use a computation model in which synchronization is performed using binary forks and joins;
we describe this model in Section~\ref{sec:c-dag}.
Parallel tasks are assigned to idle processors by a scheduler, and we
discuss the type of schedulers we consider in Section~\ref{sec:sched}.
The communication costs in the multicore arise from the delay when transferring a requested piece of data
from shared memory to the cache of the processor requesting the data. We discuss the caching costs in Section~\ref{sec:cache-cost}.
We discuss the notion of resource obliviousness in Section~\ref{sec:ro}. Finally, in Section~\ref{sec:model-related} we
discuss related models and schedulers as well as the connections between this current paper and its
earlier proceedings version in~\cite{CR10b}.

\subsection{Computation Dag Model}\label{sec:c-dag}

We will  express parallelism through paired fork and join operations.
A fork spawns two tasks that can execute in parallel. Its corresponding join is a
synchronization point: both of the spawned tasks must complete before the computation
can proceed beyond this join.

The building block
for our algorithms, which we call a \emph{fork-join computation}, comprises
a height $O(\log n)$ binary tree of fork nodes with possibly $O(1)$ additional computation at each node,
with $n$ leaf nodes,
each performing a sequential
computation of length $O(1)$ or $O(\log n)$,
followed by a complementary tree of join nodes on the same leaves,
again with possibly $O(1)$ additional computation at each node.
In general, one might have other lengths of sequential computation, but they do not occur
in the SPMS algorithm.

A simple example of a fork-join computation is the computation of the sum of the
entries in an array, by means of a balanced tree, with one leaf node for each
array entry, and each internal node returning to its parent
the sum of the values computed by its children.
Good cache performance occurs if, as is natural, the left-to-right
ordering of the leaves matches the order of the entries in the array.
Prefix sums can be computed by sequencing two fork-join computations.

The overall SPMS algorithm is created by means of sequencing
and recursion, with the fork-join computations forming the non-recursive portions
of the computation. Such a computation contains only nested fork-join computations, and it
generates a series parallel graph. The {\it computation dag} for a computation on a given input is
the acyclic graph that results when we have a vertex (or node)
for each unit (or constant) time computation, and
a directed edge from a vertex $u$ to a vertex $v$ if vertex $v$ can begin its computation only after
$u$ completes its computation. For an overview of this model, see Chapter 27 in~\cite{CLRS09}.

\subsection{Scheduling Parallel Tasks}\label{sec:sched}

During the execution of a computation dag  on a given input, a parallel task is created each time a
fork step $f$  is executed.
At this point the main computation proceeds with the left child of $f$ (as in a standard
sequential dfs computation), while  the task $\tau$ at the right child $r$
of the fork node is a task available
to be scheduled in parallel with the main computation.
This parallel task $\tau$ consists of all of the computation
starting at $r$ and ending at the step before the join corresponding to the fork step $f$.

Consider the execution on $p$ processors of a computation dag $D$ on a given input.
In general, the parallel tasks generated during this execution can be scheduled across the $p$ processors in
many different ways, depending on the policy of the scheduler used to schedule parallel tasks.
For example,
the task $\tau$ mentioned in the above paragraph could be moved by the scheduler
to a processor $Q$ different from $P$, the processor executing the main computation,
in which case $\tau$ will execute on $Q$ in parallel
with the execution of the main computation on $P$.
The scheduler could also choose not to schedule $\tau$
on another processor (perhaps because all other processors are already executing other computations),
and in this case, $\tau$ will be executed by $P$ according to the sequential execution order.
In the case when the parallel task $\tau$ is moved from $P$ to
another processor $Q$,
we will say that processor $Q$ {\it steals}
$\tau$ from $P$; this terminology is taken from the class of work-stealing schedulers,
which is the class of schedulers we consider in this paper.

\paragraph{Work-stealing Schedulers.}
Under work-stealing, each processor maintains a task queue on which it enqueues the parallel tasks it generates.
When a processor is idle it attempts to obtain a parallel task from
the head of the task queue of another processor
that has generated parallel tasks.
The exact method of identifying
the processor from which to obtain
an available parallel task determines
the type of work-stealing scheduler being used.
The most popular type is randomized work-stealing (RWS, see e.g.,~\cite{BL99}), where a
processor picks a random processor and steals
the task at the head of its task queue, if there is one.
Otherwise, it continues to pick random processors and tries to find an available parallel task
until it succeeds, or the computation completes.

\subsection{Execution Stacks}
\label{sec:exec-stack}

Let us consider how one stores variables that are generated during the execution of the
algorithm.
It is natural for the original task and each stolen task to each have an
execution stack on which they store the variables declared by their residual task.
$E_{\tau}$ will denote the execution stack for a task $\tau$.
As is standard, each procedure and each fork node stores the variables
it declares in a \emph{segment} on the current top of the execution stack for the
task to which it belongs,
following the usual mode for a procedural language.
As implied by its name, an execution stack is accessed by a processor in stack order.

\paragraph{\bf Execution Stack and Task Queue.}
The parallel tasks for a processor $P$ are enqueued on its task queue in the order in which their
associated segments are created on $P$'s execution stack. The task queue is a double-end queue,
and $P$ will access it in stack order similar to its accesses to its execution stack. 
Thus, $P$ will remove an enqueued task $\sigma$ from its task queue when it begins computing on
$\sigma$'s segment.

Under work-stealing, the task that is stolen (i.e.,
transferred to another processor) is always the one that is at the top of the task queue in the
processor from which it is stolen, or
equivalently, tasks are stolen from a task queue in queue order.

\paragraph{\bf Usurpations of an Execution Stack.}
Suppose that  a subtask $\tau'$ is stolen from a task $\tau$.
Consider the join node $v$
immediately following the node
at which the computation of $\tau'$ terminates.
Let $C$ be the processor executing $\tau - \tau'$ when it reaches node $v$ and
let $C'$ be the processor executing $\tau'$ at this point.
To avoid unnecessary waiting, whichever processor (of $C$ and $C'$) reaches $v$ second
is the one that continues executing the remainder of $\tau$.
If this processor is $C'$, we say that $C'$ has {\it usurped}
 the computation of $\tau$.
We will examine the additional false sharing costs incurred due to usurpations in
Section~\ref{sec:fs-estack}. There are additional regular caching costs as well, but these are
handled by the analysis described below in Section~\ref{sec:cache-ws}.

\subsection{Caching Cost}\label{sec:cache-cost}

As in a sequential computation, a processor sustains a cache miss when accessing a data item if
that item is not in its cache. When we bound the cache miss cost of a parallel execution,
we will assume
an optimal cache replacement policy (as in the sequential case).
The caching cost for a parallel execution
is the sum of the caching costs at each of the $p$ parallel cores.
There is a second source of cache misses
that is present only in the parallel setting: {\it false sharing}.
We will discuss and analyze this cost in
Sections~\ref{sec:fs-misses}--\ref{sec:fs-estack}.
Here we confine our attention to regular cache misses.

\subsubsection{Execution Stack and Cache Misses due to Misalignment}

A parallel execution can incur additional cache misses due to
{\it block misalignment.}
Block misalignment can arise whenever there is a new execution stack, for the block boundaries
on the new stack may not match those in the sequential execution.
In fact, block misalignment, which was overlooked previously,
arises even in analyses in the style of~\cite{ABB02}.
But, as we will see,
in Corollary~\ref{cor:block-misalign},
the cost due to block misalignment is at most the cost
increase that occurs if the cache sizes are reduced by a constant factor, and for
{\it regular algorithms}~\cite{FLPR99},
which cover most standard algorithms including SPMS, this
increases the cache miss costs by only a constant factor.

Until now our analysis had assumed that the parallel execution created the same
blocks as the sequential execution and argued that the resulting cache miss bound was of the form
$O(Q + \mbox{additional terms})$, where $Q$ was the cache miss cost of a sequential
execution.
But, in fact, this assumption need not hold.

Let us consider what happens when a subtask $\sigma$ is stolen
 from task $\tau$.
The variables that $\sigma$ declares are placed on its execution stack,
presumably starting at the boundary of a new block.
However, if there had been no steal, these variables would have been placed
on the continuation of $\tau$'s execution stack, presumably, in general, not starting
at a block boundary. This misalignment may cause additional cache misses compared
to the sequential execution.

More specifically, any block $\beta$ on the execution stack in the sequential algorithm may
be apportioned to multiple execution stacks in a parallel execution, with one of these
stacks potentially using portions of two of its blocks to hold its portion of $\beta$.

To bound the cost of cache misses, it will suffice to bound, for the processor $P_{\sigma}$
executing $\sigma$, for each block $\beta$ in the sequential execution
storing data that $P_{\sigma}$ needs
to access,
the number of blocks storing portions of $\beta$ that $P_{\sigma}$ accesses
in the parallel execution.

\begin{lemma}\label{lem:misalign-cost}
Let $P_{\sigma}$ be a processor executing a stolen subtask $\sigma$.
Suppose $P_{\sigma}$ accesses data which is stored in block $\beta$ in the
sequential execution.
Then $P_{\sigma}$ needs to access at most 4 blocks in the current parallel execution
to access this data.
\end{lemma}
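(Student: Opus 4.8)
The plan is to track a single block $\beta$ from the sequential execution and bound how many blocks of the parallel execution $P_\sigma$ must touch in order to read the data that $\beta$ holds. First I would set up the relevant structure: the data that $\sigma$ accesses lives either on $\sigma$'s own execution stack $E_\sigma$ (the variables declared by $\sigma$'s residual task) or on data that was allocated before $\sigma$ was spawned (on an ancestor's stack, or in the shared input/output arrays). The misalignment phenomenon only affects the stack-allocated portion: the input/output arrays are laid out identically in both executions, so a sequential block $\beta$ lying entirely in such an array is a single block in the parallel execution as well, contributing $1$. So the interesting case is when $\beta$ is a block of stack-allocated data.

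The key observation I would use is that each execution stack begins its allocation at a fresh block boundary, and segments are pushed contiguously in stack order, exactly as in the sequential execution — the only difference is the starting offset. Consequently, the bytes that occupy sequential block $\beta$ get split across at most two consecutive ``runs'' of stack memory: the portion allocated by the residual task $\tau - \sigma$ of the processor $\sigma$ was stolen from (which sits at the end of that processor's stack usage up to the steal point) and the portion allocated by $\sigma$ itself at the start of $E_\sigma$. Each of these two runs, having its own (possibly different) block alignment, can spread a contiguous chunk of $\beta$'s worth of data over at most $2$ blocks. That gives $2 + 2 = 4$. I would make this precise by arguing that any contiguous region of memory of length at most $B$ (the size of $\beta$) occupies at most $2$ blocks once it is placed on a single stack with an arbitrary offset, and that $\beta$'s contents are partitioned into at most $2$ such contiguous regions across the (at most two) stacks that $P_\sigma$ reads from when accessing this data.

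The main obstacle, and the place I would spend the most care, is justifying that $\beta$'s contents are split across at most \emph{two} stacks rather than more. A priori, if $\sigma$ itself had subtasks stolen from it, its stack could be fragmented further; and nested steals could in principle apportion $\beta$ across a long chain of stacks. The resolution is that we only care about the data $P_\sigma$ \emph{itself} accesses: $P_\sigma$ executes the residual task of $\sigma$ (after its own subtasks are stolen away), so the stack data it reads is data it either inherited intact at the moment of the steal (a single contiguous prefix, lying on the stack of the processor it stole from, occupying the tail end of that stack's then-current allocation) or data $\sigma$'s residual allocated on $E_\sigma$ (a single contiguous run at a fresh boundary). A given sequential block $\beta$ meets at most one ``inherited'' region and at most one ``freshly allocated on $E_\sigma$'' region, because in the sequential order these are adjacent and $\beta$ straddles the boundary between ``allocated before the steal'' and ``allocated by $\sigma$'' at most once. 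Pinning down this adjacency argument — essentially that the steal boundary corresponds to a single point in the sequential stack-allocation order — is the crux; once it is in hand, the $2$-blocks-per-contiguous-region bound is routine and the factor $4$ follows.
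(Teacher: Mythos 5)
Your overall strategy matches the paper's: partition the contents of the sequential block $\beta$ into the contiguous pieces they become on the various parallel execution stacks, then bound the number of blocks each piece can occupy. The gap is in your crux claim that the stack-resident contents of $\beta$ meet at most \emph{two} stacks. The paper's proof enumerates \emph{three} potentially distinct stacks from which $P_{\sigma}$ reads: the stack holding the segment of the ``calling'' SPMS task (the recursive invocation that owns the steal creating $\sigma$), the stack holding the $O(1)$-size segment of the fork node from which $\sigma$ was stolen, and $E_{\sigma}$ itself. These can be three different stacks, because the calling SPMS task may itself have had a subtask stolen after it declared its variables but before the fork that creates $\sigma$; then the calling task's segment and the parent fork node's segment live on different stacks. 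Since fork nodes declare only $O(1)$ variables, a single sequential block $\beta$ can straddle both boundaries and so be apportioned across all three stacks. Your identification of the inherited data as ``a single contiguous prefix lying on the stack of the processor it stole from'' is exactly where this possibility is lost --- the pre-steal data $\sigma$ needs is not all on the immediate victim's stack.

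This matters because your per-piece accounting (each contiguous piece of length at most $B$ occupies at most $2$ blocks) would then yield $2\times 3 = 6$, not $4$. The paper instead charges $k+1$ blocks when $\beta$ is split over $k$ stacks ($2$, $3$, $4$ for $k=1,2,3$); this tighter count uses the facts that the $k$ pieces have total length at most $B$ and that each piece after the first begins at the base of a freshly allocated execution stack, which is block-aligned, so only the first piece can straddle a block boundary. The repair is therefore twofold: bound the number of stacks by three via the SPMS-specific enumeration of which variables a stolen task can access, and replace the $2$-per-piece bound by the $k+1$ aggregate bound.
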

\begin{proof}
To analyze this,
let us consider which variables $P_{\sigma}$ can access.
$P_{\sigma}$ can access the $O(1)$ variables declared by the node from which
$\sigma$ was forked.
It can also access variables declared by its
``calling'' SPMS task (the SPMS task that owns the steal that created $\sigma$).
Finally, it can access the variables stored on $\sigma$'s execution stack.
These variables could be on as many as three different execution stacks,
one for the calling task, one for the parent fork node,
and one for $\sigma$.

Consider a block $\beta$ in the sequential execution, some of whose contents
$P_{\sigma}$ seeks to access.
If $\beta$'s contents are stored in a single execution stack in the current parallel execution,
then $\beta$'s contents occupy at most two blocks in the current parallel execution.
On the other hand, if in the current parallel execution,
$\beta$'s contents are stored over two of the execution stacks that $P_{\sigma}$ accesses,
then $\beta$'s contents occupy at most three blocks that $P_{\sigma}$ accesses
(this can happen for at most two distinct blocks $\beta$ accessed by $P_{\sigma}$).
Finally, if $\beta$'s contents are stored over three execution stacks that $P_{\sigma}$ accesses,
then $\beta$'s contents occupy at most four blocks that $P_{\sigma}$ accesses
(this can happen only once for data being accessed by $P_{\sigma}$,
and only if the ``over two execution stacks'' scenario does not occur).

It follows that for each block $\beta$ accessed in the sequential execution of $\sigma$,
$P_{\sigma}$ will need to access at most 4 blocks.
\end{proof}
\begin{corollary}\label{cor:block-misalign}
The block misalignment
in the parallel execution of SPMS increases the cache miss bound by just a constant factor.
\end{corollary}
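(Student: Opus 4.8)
The block misalignment in the parallel execution of SPMS increases the cache miss bound by just a constant factor.

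The plan is to reduce the effect of block misalignment to a constant-factor shrinking of the effective cache size, and then to invoke the fact that SPMS is a \emph{regular} algorithm in the sense of~\cite{FLPR99}, for which such a reduction changes the cache-miss bound by only a constant factor.

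First I would partition the parallel execution, processor by processor, into maximal intervals of work delimited by the steal and usurpation events that touch that processor; each such interval executes a contiguous segment of the sequential execution order of the (residual) task that the processor is running, spread over the at most three execution stacks identified in the proof of Lemma~\ref{lem:misalign-cost}. Fix one such interval, executed by a processor $P_{\sigma}$. By Lemma~\ref{lem:misalign-cost}, every block $\beta$ that the sequential execution of this segment accesses is stored, in the current parallel execution, in at most $4$ blocks that $P_{\sigma}$ accesses. Hence, if we group each sequential block together with the (at most four) parallel blocks carrying its contents, then a cache of size $M$ — which holds $M/B$ parallel blocks — can always retain the parallel images of at least $M/(4B)$ of the sequential blocks it would have held. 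Equivalently, restricted to this interval (and pessimistically assuming the cache is empty at its start), $P_{\sigma}$ incurs at most a constant times as many cache misses as a sequential execution of the corresponding segment would incur with a cache of size $M/4$.

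Summing over all intervals and all processors, the total number of regular cache misses is therefore $O(1)$ times the number a sequential execution would incur with a cache of size $M/4$, plus the $O(M/B)$ misses per steal already charged for repopulating a cache when control moves between processors; the latter terms are independent of misalignment and are exactly the ``additional terms'' in the bound of the form $O(Q + \mbox{additional terms})$ discussed earlier. It then remains to observe that SPMS is regular: the cache recurrence in the proof of Lemma~\ref{lem:seq-cache-comp} has base case $n \le M$, and replacing this by $n \le M/4$ merely replaces $\log M$ by $\log (M/4) = \Theta(\log M)$ in the solution, so $Q(n; M/4, B) = O(Q(n; M, B)) = O((n/B)\log_M n)$. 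Combining these facts yields the claimed constant-factor increase.

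The step I expect to be the main obstacle is making the per-interval argument fully rigorous: one must set up a clean decomposition of the parallel execution in which every access is charged to exactly one processor and one interval, verify that Lemma~\ref{lem:misalign-cost}'s ``at most $4$ parallel blocks per sequential block'' bound is applied uniformly over that decomposition, and check that grouping a sequential block with its parallel images does not degrade the effective cache by more than the stated factor (distinct sequential blocks can share a parallel block only at the few stack-segment boundaries within an interval). Once this bookkeeping is arranged so that regularity applies cleanly, the rest is immediate.
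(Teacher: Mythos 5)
Your proof is correct and follows essentially the same route as the paper: both arguments use Lemma~\ref{lem:misalign-cost} to translate misalignment into a constant-factor change in effective cache size (you shrink the sequential cache to $M/4$; the paper equivalently grows the parallel cache to $4M$) and then invoke regularity in the sense of~\cite{FLPR99} to absorb that change into the constant of the cache-miss bound. Your additional bookkeeping about intervals, steals, and the explicit computation $\log(M/4)=\Theta(\log M)$ is a more detailed rendering of the same two-step argument, not a different one.
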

\begin{proof}
By Lemma~\ref{lem:misalign-cost},
if the sequential execution uses a cache of size $M$
and the parallel execution uses a cache of size $4M$,
then the parallel execution will incur at most 4 times as many cache misses.

By regularity~\cite{FLPR99},
as the cache miss bounds just use the assumption that
$M\ge B^2$ and are polynomial in $B$,
if the parallel execution actually had a cache of size $M$, it would increase
the cache miss bound by just a constant factor.
\end{proof}

\subsubsection{Caching Costs Under Work Stealing}\label{sec:cache-ws}

Suppose a parallel execution incurs $S$ steals
and let $R(S)$ be a bound on the number of additional cache misses this computation incurs
as a function of $S$.
There is a simple upper bound on $R(S)$ when using a work-stealing scheduler.
Consider the sequence of steps $\sigma$ executed in a sequential execution.
Now consider a parallel execution that
incurs $S$ steals.
Partition $\sigma$ into contiguous portions so that in the parallel execution
each portion is executed in its sequential order on a single processor.
Then, each processor can regain the state of the
cache in the sequential execution once it has accessed
$O(M/B)$ distinct blocks during its execution (see, e.g.,~\cite{ABB02}
in conjunction with Lemma~\ref{lem:misalign-cost}).
Thus if there are $K$ portions, then there will be $R(S) = O(K\cdot M/B)$ additional cache misses.
Furthermore, it is shown in~\cite{ABB02} that $K \le 2S + 1$ for a work-stealing scheduler.
We review this bound
in Lemma~\ref{lem:kernel-bound},
using the notion of \emph{task kernels}.

The following fact is well-known (see, e.g., \cite{CR12}).

\vone
\noindent
{\bf The Steal Path Fact (for Work-stealing Schedulers).}
Let $\tau$ be either the original task or a stolen subtask.
Suppose that $\tau$
incurs steals of subtasks $\tau_1, \cdots , \tau_k$.
Then there exists a path $P_{\tau}$ in $\tau$'s computation dag from its
root to its final node such that the parent of every stolen
task $\tau_i$ lies on $P_{\tau}$, and every off-path right child of
a fork node on
$P$ is the start node for a stolen subtask.

\noindent
{\it Comment.}
The above fact follows by observing the nature of a depth-first-search
based sequential execution of recursive computations,
and the dequeue mode of steals under work-stealing.

\paragraph{\bf Tasks and Task Kernels.}
Given a parallel execution that incurs steals, the following definition
gives a partition of the computation dag into a collection of
{\it task kernels} induced by the steals.

\begin{definition}
\label{def:task-kernel-work-stealing}
{\bf  (Task kernels under work-stealing.)}
Consider a parallel execution of SPMS under work-stealing, and let it incur $S$ steals,
numbered as $\sigma_1, \sigma_2 , \cdots , \sigma_S$ in the
order they occur relative to a sequential execution.
In turn,
we partition the computation dag
into task kernels with respect to the sequence
$\Sigma_i  = \langle \sigma_1, \sigma_2, \ldots, \sigma_i \rangle$
to create the collection $C_i$.
We let $\Sigma_0$ be the empty sequence and its associated partition $C_0$ be the
single task kernel containing the entire computation dag.
For each $i \geq 1$ the partition $C_{i+1}$ is obtained from $C_i$
as follows.
Let $\tau$ be the task kernel in $C_i$
that contains the fork node $v_f$  at which steal $\sigma_{i+1}$ is performed,
and let $v_j$ be the corresponding join node.
Then, $\tau$ is partitioned into the following three task kernels.

\begin{enumerate}
\item
$\mu_1$, the stolen subtask created by $\sigma_{i+1}$.
\item
$\mu_2$, the portion of $\tau$ preceding the stolen subtask in the sequential execution.
(This includes the portion of the computation  descending from  the left child of $v_f$
that precedes $v_j$.)
\item
$\mu_3$, the  portion of $\tau$ descendant from $v_j$ in the computation dag, including
node $v_j$ itself.
This is the remainder of $\tau$.
\end{enumerate}

Then, $C_{i+1} = C_i - \{\tau\} \cup \{ \mu_1, \mu_2, \mu_3\}$.

For this parallel execution of SPMS, the collection of task kernels is $C_S$.
\end{definition}

\begin{lemma}
\label{lem:kernel-bound}
Consider an execution of SPMS
under a work stealing scheduler,
and suppose it incurs $S$ steals.
The resulting partition into task kernels forms at most
$2S+1$ kernels.
Furthermore, each task kernel is executed on a single processor and each kernel
forms a contiguous portion of work in the sequential execution.
\end{lemma}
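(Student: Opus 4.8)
The plan is to argue the three claims — the count $2S+1$, the single-processor execution, and the sequential-contiguity — essentially in tandem, by induction on the number of steals processed, following the inductive construction of the partitions $C_0, C_1, \ldots, C_S$ given in Definition~\ref{def:task-kernel-work-stealing}. For the count, the base case is immediate: $C_0$ has exactly one kernel, and $1 = 2\cdot 0 + 1$. For the inductive step, observe that passing from $C_i$ to $C_{i+1}$ removes one kernel $\tau$ and adds three kernels $\mu_1,\mu_2,\mu_3$, a net increase of exactly $2$. Hence $|C_{i+1}| = |C_i| + 2$, and so $|C_S| = 2S+1$. (One should note in passing that the construction is well-defined: the fork node $v_f$ at which $\sigma_{i+1}$ is performed lies in a unique kernel of $C_i$, since the kernels partition the dag; and because steals are numbered in sequential-execution order and a kernel is split only at fork nodes strictly interior to it, $v_f$ is indeed interior to its kernel $\tau$ rather than being a boundary node — this is the one place where the ordering of the $\sigma_j$'s is used.)

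For the single-processor claim, I would argue that each kernel, at the moment it is created, is exactly the maximal contiguous block of steps that some one processor executes without interruption by a steal boundary, and that subsequent splittings only refine this. Concretely: $\mu_1$ is the stolen subtask $\sigma_{i+1}$, which by definition of a steal is picked up in its entirety by the stealing processor and run in its (own) sequential order; $\mu_3$, the part descending from the join node $v_j$, is continued by whichever of the two processors (the one holding $\mu_2$ or the one holding $\mu_1$) reaches $v_j$ second — in either case a single processor — as described in the ``Usurpations'' paragraph of Section~\ref{sec:exec-stack}; and $\mu_2$, the portion of $\tau$ preceding the stolen subtask, is run by the processor that was executing $\tau$, up to the point where it hands off at $v_f$. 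Since this reasoning applies to the entire history — any later steal either falls entirely inside one of $\mu_1,\mu_2,\mu_3$ (and we recurse) or inside some other untouched kernel — every kernel in $C_S$ is executed on one processor.

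For sequential-contiguity, the key point is that each of $\mu_1,\mu_2,\mu_3$ is a contiguous interval of the sequential (depth-first) execution order of $\tau$, and $\tau$ was (inductively) contiguous in the global sequential order. This follows from the structure of a DFS traversal of a nested fork-join dag: at the fork node $v_f$, the sequential execution does the left subtree (which lies partly in $\mu_2$), then the right subtree $\sigma_{i+1}$ (which is $\mu_1$) entirely, then everything from $v_j$ onward ($\mu_3$) — and the portion of the left subtree that comes ``after'' $\mu_1$ in DFS order... does not exist, since in DFS the right subtree follows the whole left subtree; thus in the sequential order of $\tau$ the blocks appear as $\mu_2$, then $\mu_1$, then $\mu_3$, each an interval. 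Actually I should double-check against the definition's parenthetical: $\mu_2$ is ``the portion of $\tau$ preceding the stolen subtask in the sequential execution,'' which includes the part of the left child's subtree up to $v_j$ — consistent with the ordering just described — so $\mu_2$ is the sequential-prefix of $\tau$ up to the start of $\sigma_{i+1}$, $\mu_1$ is the next interval, and $\mu_3$ the final interval. Concatenating these reconstructs $\tau$'s interval, so all three pieces are contiguous in the global sequential order.

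The main obstacle, I expect, is not any single step but pinning down the right invariant to carry through the induction — namely that at every stage, each kernel in $C_i$ is a contiguous sequential interval AND is executed on a single processor in the actual parallel run — and verifying that the three-way split preserves this, using the Steal Path Fact to be sure that the fork node of each new steal really does sit interior to exactly one current kernel. Everything else (the arithmetic $|C_{i+1}| = |C_i|+2$, hence $2S+1$) is routine once that invariant is in place.
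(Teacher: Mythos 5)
Your proposal is correct and follows essentially the same route as the paper: the $2S+1$ count by the ``each split adds two kernels'' induction, single-processor execution because the executing processor can only change at the fork-to-stolen-subtask edges and at the join edges (exactly the kernel boundaries), and contiguity from the depth-first sequential ordering of the fork-join dag. The paper states the single-processor step more tersely by identifying those two edge types directly rather than running the induction over $\mu_1,\mu_2,\mu_3$, but the substance is identical.
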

\begin{proof}
If $S=0$, the entire computation is contained in one task kernel,
and $1=2 \cdot 0 +1$.
Each additional
steal partitions an existing task kernel into three new task kernels, and hence the size of the collection increases by two as needed.

To see that each task kernel is executed by a single processor, it suffices to
note that the only edges along which the executing processor can change
are the edges from a fork node into the initial node of a stolen subtask, and
the edges into the join node immediately following the end of a stolen subtask;
this describes precisely where the partitioning into kernels occurs.

Finally, the fact that a kernel forms a contiguous portion in the sequential execution
follows from the dfs-based sequencing of the computation dag in the sequential execution.
\end{proof}

\begin{corollary}
\label{cor:simple-rws-miss-bound}
Consider a parallel execution of SPMS using a work-stealing scheduler.
Suppose it incurs $S$ steals.
Then $R(S)$, the additional cache misses it incurs, above the cost of the
sequential execution, is bounded by $R(S) = O(S \cdot M/B)$.
\end{corollary}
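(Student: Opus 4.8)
The plan is to assemble the corollary from pieces already in hand: Lemma~\ref{lem:kernel-bound}, the cache‑regaining argument of~\cite{ABB02}, and the misalignment bound of Lemma~\ref{lem:misalign-cost}. First I would apply Lemma~\ref{lem:kernel-bound} to partition the computation dag of the parallel execution into $K \le 2S+1$ task kernels, each executed on a single processor and each forming a contiguous segment of steps in the sequential execution. This reduces the task to bounding, for every kernel, the number of parallel cache misses it incurs in excess of the number of misses the sequential execution charges to that same contiguous segment of steps; summing these excesses over the at most $2S+1$ kernels will give $R(S)$.

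Next I would show the excess is $O(M/B)$ per kernel. When the sequential execution reaches the first step of a kernel it holds some cache state $\mathcal{C}$, whereas the processor running that kernel in parallel begins from an arbitrary (worst case, empty) cache. The point, which is exactly the argument of~\cite{ABB02}, is that after the processor has sustained $O(M/B)$ misses it has pulled in every block of $\mathcal{C}$ that the kernel actually accesses, and thereafter, under an optimal (ideal) replacement policy, its cache is at least as useful as the sequential cache for the remainder of the kernel, so every later parallel miss is also a sequential miss. I would cite~\cite{ABB02} for this, flagging the one wrinkle that a stolen subtask may read data living on up to three execution stacks, so the "$O(M/B)$ blocks" should be read with the constant from Lemma~\ref{lem:misalign-cost} absorbed.

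I would then dispense with block misalignment: the block boundaries on the new execution stacks need not match the sequential layout, but by Corollary~\ref{cor:block-misalign} (equivalently Lemma~\ref{lem:misalign-cost}) this inflates the entire cache‑miss count, sequential baseline included, by at most a constant factor, and so is swallowed by the $O(\cdot)$. Combining, $R(S) \le K \cdot O(M/B) = (2S+1)\cdot O(M/B) = O(S\cdot M/B)$.

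The only genuinely delicate step is the per‑kernel $O(M/B)$ claim: one must be careful about what "regaining the sequential cache state" means when the replacement policy is an optimal offline one — so the cached set is not simply the last $M/B$ distinct blocks touched — and when the kernel's working set is distributed over several execution stacks. My intention is to handle this by invoking~\cite{ABB02} for the core invariant and Lemma~\ref{lem:misalign-cost} for the multi‑stack accounting, rather than re‑proving it here.
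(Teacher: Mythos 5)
Your proof is correct and follows essentially the same route as the paper: the paper also combines the task-kernel partition of Lemma~\ref{lem:kernel-bound} (giving $K \le 2S+1$ contiguous, single-processor portions) with the $O(M/B)$-per-portion cache-state-recovery argument cited from~\cite{ABB02}, adjusted for block misalignment via Lemma~\ref{lem:misalign-cost}. Your explicit flagging of the subtlety about what ``regaining the sequential cache state'' means under an ideal replacement policy is a point the paper leaves implicit in its citation, but it is not a gap.
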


\subsubsection{Scheduler Cache Miss Performance}

Consider a computation with span $T_{\infty}$, whose sequential running time is
$T_1$, and sequential cache miss bound is $Q$.
Suppose a parallel execution incurs $S$ steals and the cost of $F(S)$ cache misses due to
false sharing (which will be discussed in Sections~\ref{sec:fs-misses}--\ref{sec:fs-estack}).
Let $T_s$ be the total time spent across all processors on performing successful steals
 and let $T_u$ be
the total time spent on unsuccessful steals. 
Let $b$ be the cost of a cache miss and $s$ the cost of a successful steal.
Finally, let $I$ be the total time spent by idling processors. Since every processor is computing, or waiting on a cache miss or false sharing miss,
or attempting a steal, successful or otherwise, or simply idling, we can bound the execution
time $T_p$ on $p$ processors as the following. (The steal cost $s$ does not appear in the
following equation since it is subsumed into the cost for steals, but it will be used in
Section~\ref{sec:perf} when we apply this bound to randomized work stealing.)

\begin{equation}
\label{eqn:rws-time}
T_p = \frac{1}{p} \left( T_1 + b \cdot Q + b\cdot S \cdot \frac{M}{B} + + b \cdot F(S) + T_s  +  T_u + I \right).
\end{equation}

Equation~\eqref{eqn:rws-time} does not incorporate
the overhead due to false sharing, except as an additive term $F(S)$.
This term will be discussed in
Sections~\ref{sec:fs-misses} and~\ref{sec:fs-estack}.

As noted earlier, randomized work stealing (RWS) is the most popular form of work-stealing, and for this method
good bounds are known for the expected number of steals~\cite{BL99,ABB02,CR13}. We will discuss this in Section~\ref{sec:perf}, after
we have discussed issues relating to false sharing.

\subsection{Resource Obliviousness}\label{sec:ro}

We use the notion of resource obliviousness described in~\cite{CR12}. A multicore algorithm is
resource oblivious if the algorithm does not use any machine parameters such as $p$, $M$, and $B$,  and
yet is analyzed to run efficiently
in a parallel execution under a range of parameters.
As noted in~\cite{CR12}
resource obliviousness differs from the notion of multicore-oblivious introduced in~\cite{CRSB13}
by being scheduler independent.

We obtain bounds for our resource oblivious algorithms
in terms of the machine parameters $p$, $M$, and $B$, the input size $n$, and the number of
steals $S$; $S$ in turn may depend on the scheduler that handles
the task transfers,
but this is the only dependence of our algorithm and its analysis on the scheduling.

For SPMS, we have
presented an algorithm with an optimal bound for the
sequential cache complexity $Q(n)$, and
another algorithm that performs $O(n \log n)$ operations and achieves
span $T_{\infty}= O(\log n\log \log n)$.
In Section~\ref{sec:co-parallel} we describe an SPMS implementation that
simultaneously achieves both of these bounds. Thus when executed by a work-stealing scheduler
Equation~\eqref{eqn:rws-time} will apply.
This is a resource oblivious result, except that the analysis does not consider the cost of
false sharing.

In Sections~\ref{sec:fs-misses}--\ref{sec:fs-estack} we discuss the cache miss overhead due
to false sharing and present the final version of our SPMS algorithm.
Its cost overhead for false sharing is
the cost of $O(S \cdot B)$ cache misses.
 With a tall cache we have $M \geq B^2$, thus this
false sharing cost is dominated by the cache miss overhead of $S \cdot M/B$ shown
in Corollary~\ref{cor:simple-rws-miss-bound}.

\subsection{Related Work}\label{sec:model-related}

As mentioned earlier, if we ignore communication costs, then the multicore is an asynchronous PRAM.
When we incorporate communication costs, our model differs from the BSP~\cite{V90}, LogP~\cite{CK+93},
QSM~\cite{GMR97}, and the multi-BSP~\cite{Va08} in two ways: it uses cache misses instead of
gap and/or latency parameters for the communication cost, and it uses fork-join parallelism instead
of  bulk-synchrony.

The PEM model~\cite{AGN08} uses a parallel machine similar to ours: a shared memory model, where
each processor has a private cache, and communication cost is measured in cache misses. However,
the PEM model again uses bulk-synchrony. Further false sharing is not addressed in results
known for PEM, nor is the notion of resource obliviousness.

Orthogonal to our work are results on `multicore-oblivious' algorithms for a multi-level caching hierarchy in \cite{CRSB13}
(see also~\cite{BFGS11,SBFGK14}).
These results deal with a specific scheduler, and while the algorithms do not use any multicore parameters,
the scheduler is aware of these parameters.
This scheduler dependence on machine parameters appears to be
necessary when considering a multi-level tree of caches. Our results deal with private caches only;
as a result we are able to establish  resource oblivious bounds for SPMS that are
independent of the scheduler used.

\vone
\noindent
{\bf Relation to \cite{CR10b}}.
The SPMS algorithm was first presented in the extended abstract~\cite{CR10b}, and the basic algorithm is the same both here and
in~\cite{CR10b}. However, there are some differences between the results presented in~\cite{CR10b} and the results we present in
this paper, and we list the key differences:

\begin{enumerate}
\item False sharing was considered in~\cite{CR10b}, but the results there applied only to the case
when space is not deallocated (as discussed in Section~\ref{sec:fs-misses}). Here, we additionally
consider space deallocation on execution stacks, and the resulting false sharing costs (in
Section~\ref{sec:fs-estack}).

\item The SPMS implementation in~\cite{CR10b} was
analyzed for two specific work-stealing schedulers: the
Priority WorkStealing (PWS)
scheduler~\cite{CR10a}, which is a
deterministic variant of a work-stealing scheduler,
and RWS.
The PWS scheduler requires additional properties of balance in forked tasks (using
BP and HBP computations~\cite{CR10a,CR12}). While these results are interesting on their own, they are
orthogonal to the enhanced
scheduler-independent
resource oblivious notion in~\cite{CR12} which we use here.
Hence the results in~\cite{CR10b} relating specifically to PWS are not included here.
Instead, here we analyze more generally in terms of work stealing schedulers.
\end{enumerate}

\section{Cache-Oblivious Parallel SPMS}
\label{sec:co-parallel}

We would like to simultaneously achieve the
$O(\frac nB \log_M n)$ cache miss bound of Lemma~\ref{lem:seq-cache-comp} and the
$O(\log n \log\log n)$
parallel time of Lemma~\ref{lem:merge-anal}.
To achieve this, it suffices to implement Step 1 so that it runs in parallel time $O(\log n)$
on an input of size $n$, and incurs $O(n/B)$ cache misses in a sequential execution.
This follows because
the forks and joins needed to instantiate the recursive calls in Steps 2 and 3
can be performed with height $O(\log r) = O(\log n)$ trees.

We describe our parallel cache-oblivious implementation of Step 1
and its analysis in Section~\ref{sec:par-co-step1}.
It uses four procedures,
of which the \emph{Transposing Redistribution}
may be of broader interest.
The other  procedures are named
\emph{TR Prep} (short for Transposing Redistribution Preparation),
\emph{Permuting Write}, and
{\it Small Multi Merge}.
We detail these four procedures in Section~\ref{sec:4procs}.

\subsection{Four Procedures for Step 1}\label{sec:4procs}

\noindent
1. {\bf  Transposing Redistribution}.

\begin{itemize}
\item[{\it Input.}]
A vector $Y$ which we view as consisting of the sequence
$Y_{11}, Y_{12}, \cdots, Y_{1k}$, $\cdots$, $Y_{r1}, \cdots Y_{rk}$ of subvectors.
These subvectors are specified with a list of the subvector
lengths $|Y_{ij}|$ and  their start indices $s_{ij}$ in the input vector, but
ordered in column major order,
which is the desired order in the output.

In  other words, the input is
$(Y; L)$, where $Y$ is the input vector and $L$ is
the ordered list
$(|Y_{11}|,s_{11}), (|Y_{21}|,s_{r1}), \cdots, (|Y_{r1}|,s_{r1}), \cdots,$
$ (|Y_{1k}|,s_{1k}),
\cdots, (|Y_{rk}|,s_{rk})$.

\item[{\it Output.}]
The transposed sequence $ Y_{11}, Y_{21}, \cdots, Y_{r1}, \cdots$,
$Y_{1k}, \cdots, Y_{rk}$.

\item[{\it Bound.}] The output is computed with $O(|Y|/B + r^2 k/B)$ cache misses.

\item[{\it Method.}]
First, by means of a prefix sums computation over the values $|Y_{ij}|$, we
determine the output location for each subvector $Y_{ij}$.
Then we copy these vectors to their output locations,
using a nested pair of loops, the outer loop being
a fork-join computation
over the subvectors in their output order,
and the inner loop being
a fork-join computation
over the individual elements in each vector.
The data in the second part of the input is exactly what is needed for
this computation:
namely for each subvector, where it occurs in the input, and what is its size.

\item[{\it Analysis.}]
Since the elements are accessed in scan order within each of the $rk$ subvectors, the
cache miss cost is $|Y|/B + rk$.
If $r \le B$, assuming $M \ge B^2$, the most
recently accessed block
$Y_{i j_i}$, for each $i$, will fit in cache, and then the cost reduces to $|Y|/B$.
While if $r > B$, $rk \le r^2 k/B$,
so we can restate the cost as $O(|Y|/B + r^2 k/B)$.
This gives the desired bound.
Note that we would have achieved a similar cache miss bound (interchanging $r$ and $k$)
if we had used a fork-join computation
for the outer loop with the subvectors in input order.
We use the output order because that also
gives good performance in the presence of false sharing,

\item[{\it Notes.}]

$(i)$  The second part of the input is in column major order so as to facilitate the creation of the output.\\
$(ii)$  Although our algorithm for Transposing Redistribution works with any input of the type described above,
we will call it with sorted lists partitioned using a single sequence of pivots for all $r$ lists.
For this class of inputs, the procedure TR Prep, given below, generates the input for Transposing Redistribution in the format described above, when it is given as input the $r$ sorted lists, and the list of pivots in sorted order.

\end{itemize}

\vhalf
\noindent
2. {\bf TR Prep}.
This will be used to set up the input for Transposing Redistribution.
\begin{itemize}
\item[{\it Input.}]
A sequence $Y$ of $r$ sorted lists $Y_1, Y_2, \cdots Y_r$, and
a sorted subset
$P = \{p_1,p_2, \ldots, p_{k-1}\} \subset Y$ of pivots.
It is convenient to add the dummy items $p_0 = -\infty$ and $p_k = +\infty$ to $P$.
In addition, each
item $e$ in $P$,
other than $p_0$ and $p_k$,
will have a pair of straddling items
in each list $Y_i$.
Our cache miss bound will depend on the parameter $d$,
where the straddling items are at most $d$ positions apart in
every $Y_i$.
In our CO-Parallel-SPMS algorithm (given in Section~\ref{sec:par-co-step1}),
$d$ will take on two different values: $r$ in Step 1a(ii)I,
and $r^{c/2-1}$ in Step 1a(ii)III.
The sorted pivot sequence $P$, together with the straddling items, are generated in CO-Parallel-SPMS
using the fourth procedure given below (Small Mulit-Merge).

\item[{\it Output.}]
Let $s_{ij}$ be the rank in $Y_i$ of the smallest item greater than or equal to $p_{j-1}$,
which for brevity we call $p_{j-1}$'s rank in $Y_i$.
TR Prep in effect partitions $Y_i$ into the sequence of sorted sublists
$Y_{i1}, Y_{i2}, \ldots, Y_{ik}$, where the items in $Y_{ij}$ lie in the range
$[p_{j-1},p_j)$.
This is done by computing the following output:
$(|Y_{11}|, s_{11}), (|Y_{21}|, s_{21}), \ldots, (|Y_{r1}|, s_{r1}), \ldots,
(|Y_{1k}|, s_{1k}), \ldots, (|Y_{rk}|, s_{rk})$.
Together with the list $Y$ this provides the input for Transposing Redistribution.
\item[{\it Bound.}] Our method for TR Prep incurs $O(kdr/B + kr^2/B + y/B)$ caches misses, where $y=|Y|$.

\item[{\it Method.}]
~~~$(i)$ For each $p_j$, $1\leq j \leq k-1$, perform a binary search in the interval spanned by its
straddling items in each of the $r$ lists in order to find $p_j$'s rank in each these lists. These
are the values for $s_{i,j+1}$, $1\leq i \leq r$, computed in the desired output
order $s_{11}, s_{21}, \cdots, s_{r1}, \cdots , s_{1,k}, \cdots s_{rk}$.

$(ii)$ For each $p_j$, for each $Y_i$, compute the length of the sublist of $Y_i$ straddled by $p_j$ and
$p_{j+1}$; this is computed by means of two lockstep scans of the results from $(i)$, comparing
the ranks for successive items in $P$. This give the $r$-tuple of values $|Y_{i,j+1}|, 1 \leq i \leq r$.

$(iii)$ Since both the $s_{ij}$ and $|Y_{ij}|$ values have been computed in the desired output order in
steps $(i)$ and $(ii)$, the output can now be written in a lock-step scan of these two lists.

\item[{\it Analysis.}]
Let $y = |Y|$, and recall that $|P|= k-1$.

Step $(i)$ incurs $O(kr\log \ceil{d/B}) = O(kr \cdot d /B + kr)$ cache misses.
If $d \ge B$ this is $O(krd/B)$ cache misses.
If $d < B$ and $r \ge B$ this is $O(kr^2/B)$ cache misses.
If $d < B$ and $r \le B$, with an ideal cache  the most recently searched block
for each list can be kept in cache, and as $P$ is in sorted order,
the cache miss cost reduces to $y/B$.
Steps $(ii)$ and $(iii)$  incur the scan bound of  $O(r \cdot k/B)$ cache misses.

Thus, this is always bounded by $O(kdr/B + kr^2/B + y/B)$ cache misses.
\end{itemize}

 \vhalf
\noindent
3. {\bf Permuting Writes}.

\begin{itemize}
\item[{\it Input.}]
A permutation array $P$  of size $x$, and an input array $A$  array of size $\ell x$, for some integer $\ell \geq 1$, where only elements at positions $\ell \cdot i$, $1\leq i \leq x$, are relevant.

\item[{\it Output.}] An output array $C$ of size $\ell' x$, with $C[\ell' \cdot P[i]]$ containing the element
$A[\ell i]$, for $1\leq i \leq x$.
In other words,
every $\ell$-th item in $A$ is copied to every $\ell'$-th location in
array $C$ in the permuted order given by $P$.
As with the input, only elements at positions $\ell' \cdot i$, $1\leq i \leq x$, in $C$ are relevant.

\item[{\it Bound.}] This is computed with
$ O(\ell x/B  + \ell' x/B + x^2/B)$
 cache misses.

\noindent
\item[{\it Note.}] If $\ell = \ell'= 1$, then every position of arrays $A$ and $C$ is relevant, and $C$ needs to contain the elements in $A$ permuted according to $P$.
We will use $\ell=1$, $\ell'>1$ in our algorithm for
the Small Multi Merge procedure, defined below.

\item[{\it Method.}]
We compute the output of the permuting writes in an array $C'$ of size $x$
with a fork-join computation where the leaves correspond to the elements in vector
$P$ in their input order.
Then, with a scan, we spread the output elements out to positions $\ell' \cdot i$ in the output array $C$.

\item[{\it Analysis.}]
Reading the input array $A$ incurs $O(\ell x/B)$ cache misses.
The writes into array $C'$ could incur up to $x$ cache misses,
while spreading the elements in the output locations in array $C$ incurs $O(\ell' x/B)$ cache misses.
Thus the cache bound is $O(\ell x/B  + \ell' x/B )$ plus the cache miss cost $Z$ for the permuting writes into
array $C'$.

If  array $C'$
fits in cache, i.e.\ if $x \le B^2$,
then each block in array $C'$ is read into cache only once and then $Z = O(x/B)$.
Otherwise, $x > B^2$,
hence $ \sqrt{ x}/B > 1$,
and then we obtain $x \le x^{3/2}/B$.
Thus the cache miss cost for permuting writes is bounded by
$O(\ell x/B  + \ell' x/B + x^{3/2}/B) = O(\ell x/B  + \ell' x/B + x^2/B)$ cache misses.

\end{itemize}

\vhalf
\noindent
4. {\bf Small Multi Merge.}

\begin{itemize}
\item[{\it Input.}]
The input $P = <W_1, \cdots , W_h>$ is an ordered sequence of
multi-way merging problems
(similar to the input to Step 3 of SPMS in Section~\ref{sec:intro}).
Each multi-way merging problem will contain $r$ lists of total length at most $x$, and
across all problems the input $P$ will have size
$\wp$.

\item[{\it Output.}]
The multi-way merging problems will be output in sequence,
with the elements of each $W_i$ being output in sorted order;
consequently, the whole output will be in sorted order.
Additionally, for each item $e$ in each $W_i$,
the ranks of the two successive
items straddling $e$ in each of the $r$ input lists forming
the subproblem $W_i$ that contains $e$ are computed.

In other words, each of the
$\wp$ items in the sorted output is listed
along with the ranks for its $r$ pairs
of straddling items, resulting in an output of length
$\wp \cdot (2r+1)$.

\item[{\it Bound.}] This is computed with
$O((x + r) \cdot \wp /B + \wp \cdot r\sqrt{x}/B)$ cache misses.

\item[{\it Method.}]
Let $W$ denote a generic $W_i$, and let $X_1, \cdots X_r$ be the $r$ sorted lists in $W$.\\
$(i)$  For each multi-way merge problem $W$,
for each item $e$ in $W$, perform $r$ binary searches,
one in each of the $r$ lists $X_j$ forming $W$, to find
its rank in the list.\\
Compute the sum of these $r$ ranks to obtain $e$'s rank in $W$.
Note that if an item $e$ has rank $r_j$ in list $X_j$ then its two straddling items in $X_j$
are the items with ranks $r_j$ and $r_j + 1$.\\
$(ii.)$  Reorder the items in $W$ to be in sorted order, keeping the results of its $r$ searches with each item.
To this end, using a Permuting Write with $\ell =1$ and $\ell' = 2r+1$,
the items are copied to their new locations, which are distance $2r+1$ apart;
then, for each item, its $r$ search results are copied.

\item[{\it Analysis.}]
Step $(i)$: If the input for each merging problem fits in cache, i.e.\ if $x \le B^2$, this step incurs
$O(r \cdot \wp /B)$ cache misses (as the cost of writing the output dominates). Otherwise, let
$x_j$ be the length of the $j$-th list in problem $W$, and let $w = |W| = \sum_{j=1}^r  x_j$.
Step $(i)$ incurs $O(w\sum_j \log \ceil{x_j/B}) = O(w^2/B + w\cdot r)$ cache misses.
Summed over all the multi-way
merging problems, this totals $O(x\cdot \wp /B + \wp \cdot r)$ cache misses.
Since $x > B^2$, we have
$\wp \cdot r \le \wp \cdot r \sqrt x/B$.
So the cost is always bounded by
$O(x\cdot \wp /B + \wp \cdot r \sqrt x/B)$.

In Step $(ii)$, the cost of the permuting write for subproblem $W$ is
$O(wr/B + w^2/B)$. Summed over all the subproblems, this totals
$O(\wp \cdot r/B + \wp x/B)$.
The final set of writes incurs
$O(\wp \cdot r/B + \wp )$
 cache misses
(the second term arises due to the reading of each block of $r$ ranks).
If each subproblem fits in cache, i.e.\ if $x\cdot r \le B^2$, then the cost is
$O(\wp \cdot r /B)$,
and otherwise $x \cdot r > B^2$, and then
$\wp \le \wp  \sqrt{xr}/B)$.
So the cost is always bounded by
$O(\wp \cdot r /B + \wp \sqrt{xr}/B)$.

So the overall cache miss cost is
$O((x + r) \wp /B + \wp \cdot r\sqrt{x}/B)$.

\end{itemize}

\subsection{Details of Step 1 in SPMS}
\label{sec:step1-details}

We now give the efficient parallel cache-oblivious algorithm,
assuming the bounds stated for the four procedures defined in Section~\ref{sec:4procs}.
We follow the structure of the algorithm in
Section~\ref{sec:co}, but Step 1a$(ii)$ is changed substantially.

\vone
\noindent
{\bf CO-Parallel-SPMS (Step 1)}\label{sec:par-co-step1}

\vone
\noindent
{\bf Step 1a.} Form and sort the sample set $S$.

\begin{description}
\item
$(i)$ For each of the $r$ sorted lists $L_j$, form the sublist $SL_j$
comprising
every $r^{(c/2)-1}$-th item in $L_j$. The set $S$
comprises the
union of the items in the $SL_j$ lists.
\item
$(ii)$
Output the set $S$ in sorted order in a three-step process as follows.

\begin{itemize}
\item[I]
Form and sort the subarray $SS$ comprising every $r$-th item in the (unsorted) $S$ using a
Small Multi Way Merge.
Note that for each $e$ in $SS$ each pair of straddling items
returned by Small Multi Merge are $r$ apart in the corresponding list forming $S$.

\item[II]  Partition $S$ about $SS$ using a TR Prep followed by a Transposing Redistribution.

\item[III] Sort the resulting subsets of $S$, which all comprise $r$ lists each of at most $r$ items, using a
Small Multi Merge. Note that for each $e$ in $S$, each pair of straddling items are $1$ apart in the corresponding list forming $S$,
and hence $r^{c/2-1}$ apart in the corresponding input list.
\end{itemize}
\end{description}

\noindent
{\bf Step 1b}. Form the pivot set $P$ and use $P$ to partition $A$.

\begin{description}
\item
$(i)$ Form the sorted pivot set $P$ consisting of every $2r$-th item in the sorted $S$.
\item
$(ii)$ Partition the $r$ lists
about $P$ by means of a TR Prep followed by a Transposing Redistribution.
\end{description}

\begin{lemma}
\label{cche-miss-cost-step1}
The sequential execution of the above version of Step 1 incurs $O(n/B)$
cache misses, if $c \ge 6$ and $M \ge B^2$.
\end{lemma}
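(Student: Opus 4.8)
The plan is to verify that each of the five sub-steps of CO-Parallel-SPMS (Step 1) — namely 1a(i), 1a(ii)I, 1a(ii)II, 1a(ii)III, 1b(i), 1b(ii) — incurs $O(n/B)$ cache misses, by instantiating the cache-miss bounds already established for the four procedures (Transposing Redistribution, TR Prep, Permuting Write, Small Multi Merge) with the parameter values that arise here, and then checking that each resulting expression collapses to $O(n/B)$ using $c\ge 6$, the tall-cache assumption $M\ge B^2$, and Fact~\ref{fact:ideal-tall-cache}. First I would record the relevant size parameters: $s=|S|=n/r^{c/2-1}$, the subarray $SS$ in Step 1a(ii)I has size $s/r = n/r^{c/2}$, and the pivot set $P$ in Step 1b has size $s/(2r)=n/(2r^{c/2})$.

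Next I would handle the steps one at a time. Step 1a(i) is a single scan, costing $O(n/B)$. For Step 1a(ii)I I would invoke the Small Multi Merge bound $O((x+r)\wp/B + \wp r\sqrt x/B)$ with $\wp$ being the total size of $SS$-type subproblems; here the lists being merged to form $S$ have been grouped so that each subproblem has $r$ lists of at most $r^{c/2-1}$ items, but we are only merging the subarray $SS$, so $\wp = n/r^{c/2}$ and $x$ is the per-subproblem bound — and after substitution each term is $O(n/B)$ when $c\ge 6$ (the governing inequalities are essentially those in Observation~\ref{obs:r>B}, since $r^2k/B$- and $\wp r/B$-type terms are $O(n/(B r^{c/2-3}))$). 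For Step 1a(ii)II I would apply the TR Prep bound $O(kdr/B + kr^2/B + y/B)$ with $y=s$, $d=r$ (the straddling gap noted in the algorithm), and $k$ the number of parts $=|SS|+1 = O(n/r^{c/2})$; then $kdr/B = O(n r^2/(B r^{c/2})) = O(n/(Br^{c/2-2}))=O(n/B)$, and similarly for $kr^2/B$; the Transposing Redistribution bound $O(|Y|/B + r^2k/B)$ with $|Y|=s$ gives the same. Step 1a(ii)III is another Small Multi Merge, now with $\wp=s$, $r$ lists of at most $r$ items each (so the per-subproblem size is $x=r^2$, but note $d=1$); the bound $O((x+r)\wp/B + \wp r\sqrt x/B)$ becomes $O(r^2 s/B + s r^2/B)$, and since $r^2 s = n/r^{c/2-3}$ this is $O(n/B)$ for $c\ge 6$. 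Step 1b(i) is a scan of $S$ writing every $2r$-th element, costing $O(s/B + s/r) = O(n/B)$ by the same reasoning as in Lemma~\ref{lem:seq-cache-comp}. Step 1b(ii) is a TR Prep (with $y=n$, $d=r^{c/2-1}$, $k=O(n/r^{c/2})$) followed by a Transposing Redistribution (with $|Y|=n$); here $kdr/B = O(n\cdot r^{c/2-1}\cdot r/(B r^{c/2})) = O(n/B)$ exactly, $kr^2/B = O(n r^2/(B r^{c/2})) = O(n/B)$ for $c\ge 4$, and $r^2 k/B$ likewise, with $y/B=|Y|/B=O(n/B)$.

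The main obstacle, and the only place real care is needed, is Step 1b(ii): there $d=r^{c/2-1}$ is large, so the $kdr/B$ term is not obviously $o(n/B)$ — it is in fact $\Theta(n/B)$, tight, which is exactly why the exponent $c$ and the sampling rate $r^{c/2-1}$ were chosen as they were. I would make sure to state this balance explicitly. Everywhere else the terms have slack and collapse by Observation~\ref{obs:r>B} or by $n\le 3r^c$; any leftover additive $\sqrt n$-type terms (e.g. from writing the pivots, or from the $rk$-with-small-$r$ sub-case) are absorbed via Fact~\ref{fact:ideal-tall-cache}. Summing the $O(n/B)$ bounds over the constant number of sub-steps gives the claimed $O(n/B)$ total for Step 1.
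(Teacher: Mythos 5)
Your proposal is correct and follows essentially the same route as the paper: instantiate the stated cache-miss bounds of the four procedures with the parameter values $s=n/r^{c/2-1}$, $|SS|=n/r^{c/2}$, $d=r$ or $r^{c/2-1}$, $k=O(n/r^{c/2})$, etc., and check that each resulting term collapses to $O(n/B)$ under $c\ge 6$ and $M\ge B^2$; your parameter choices and the resulting simplifications match the paper's step by step. The only (harmless) looseness is in Step 1a(ii)I, where the paper pins down $x=\wp=n/r^{c/2}$ for the single $r$-list merge of $SS$ while you leave $x$ slightly underspecified, but the conclusion is unaffected.
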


\begin{proof}
Steps 1a$(i)$ and 1b$(i)$ are simple scans that incur $O(n/B)$ cache misses. The scan used for
forming the subarray $SS$ in Step 1a$(ii)$ is similarly bounded.

In Step 1a$(ii)$ the Small Multi Merge for sorting $SS$ in step I
has only one collection of $r$ sorted lists, and hence
has parameters $x=\wp  = n/r^{c/2}$. Therefore it incurs
$O(n^2/(r^c B) + n/(r^{c/2-1}B) + n^{3/2}/r^{(3c/4 -1} B)) = O(n/B)$ if $c \ge 4$.

In step II, the TR Prep to partition $S$ about $SS$ has parameters $d=r$,
$k = n/r^{c/2}$,
and $y= n/r^{c/2-1}$.
So it incurs $O(n/(r^{c/2-2}B)) = O(n/B)$ cache misses
if $c \ge 4$.
The Transposing Redistribution that follows has $|A| = n/r^{c/2-1}$ and $k= n/r^{c/2}$.
Thus it incurs $O(n/(r^{c/2-1}B) + n/r^{c/2-2}B)) = O(n/B)$ cache misses, if $c \ge 4$.

In step III, the second Small Multi Merge that sorts the subsets of $S$
has parameters $x=r^2$, $\wp = n/r^{c/2-1}$. Therefore it incurs
$O(n/(r^{c/2-3}B) + n/(r^{c/2-2}B) + n/(r^{c/2-3}B)) = O(n/B)$ if $c \ge 6$.

In Step 1b$(ii)$, TR Prep has parameters $d=r^{c/2-1}$,
$k = n/r^{c/2}$,
and $y=n$.
Thus it incurs $O(n/B + n/(r^{c/2-2}B))  = O(n/B)$ cache misses, if $c \ge 4$.
Finally, the second Transposing Redistribution has $|A| = n$ and $k= n/r^{c/2}$.
Thus it incurs $O(n/B + n/r^{c/2-2}/B) = O(n/B)$ cache misses, if $c \ge 4$.
\end{proof}

\begin{theorem}
\label{th:basic-par-spms}
There is an implementation of the SPMS algorithm that performs $O(n\log n)$ operations,
runs in parallel time $O(\log n \log \log n)$,
and incurs $O(\frac nB \frac{\log n}{\log M} + \frac MB \cdot S)$ cache misses in a parallel
execution with $S$ steals, assuming that $M \ge B^2$.
\end{theorem}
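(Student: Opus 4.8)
The plan is to assemble Theorem~\ref{th:basic-par-spms} from three ingredients already in hand: the work/span analysis of Lemma~\ref{lem:merge-anal}, the sequential cache-miss recurrence of Lemma~\ref{lem:seq-cache-comp} fed by the Step~1 bound of Lemma~\ref{cche-miss-cost-step1}, and the parallel-overhead bounds of Corollaries~\ref{cor:block-misalign} and~\ref{cor:simple-rws-miss-bound}. Since Steps~2 and~3 (and hence the recursive structure) are unchanged from Section~\ref{sec:sort-start}, everything reduces to re-analyzing the single new component, namely the CO-Parallel-SPMS implementation of Step~1 in Section~\ref{sec:par-co-step1}, and then invoking the generic recurrences.

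First I would verify that this Step~1 runs in $O(\log n)$ parallel time and $O(n)$ work on an input of size $n$. Each of the four procedures of Section~\ref{sec:4procs} is built from a constant number of fork-join computations and prefix-sums computations over arrays of size $O(n)$ or smaller, so each contributes span $O(\log n)$ and work linear in the sizes of the arrays it touches; the only superlinear contributions are the binary searches inside Small Multi Merge and TR Prep, each of span $O(\log r)=O(\log n)$, of which there are $O(n/r^{c/2-2})$ in total across Step~1, hence $O(n)$ work when $c \ge 4$. With Step~1 thus costing $O(n)$ work and $O(\log n)$ span, and the instantiation of the recursive calls in Steps~2 and~3 requiring only height-$O(\log r)=O(\log n)$ fork/join trees, the recurrences $W(r,n) \le O(n) + 2W(\sqrt r, r^{c/2})$ and $T(r,n) \le O(\log n) + 2T(\sqrt r, r^{c/2})$ of Lemma~\ref{lem:merge-anal} apply verbatim, yielding $O(n\log n)$ operations and $O(\log n \log\log n)$ parallel time.

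Next, for the sequential cache-miss count, Lemma~\ref{cche-miss-cost-step1} already gives that this Step~1 incurs $O(n/B)$ cache misses when $c\ge 6$ and $M\ge B^2$. Substituting this into the recurrence $Q(n,r) = O(n/B) + \sum_{i,j} Q(n_{ij},\sqrt r) + \sum_i Q(n_i,\sqrt r)$ with base case $Q(n,r)=O(n/B)$ for $n \le M$, exactly as in the proof of Lemma~\ref{lem:seq-cache-comp}, gives the sequential bound $Q(n) = O(\frac nB \frac{\log n}{\log M})$. Finally, to pass to the parallel execution with $S$ steals, I would apply Corollary~\ref{cor:block-misalign} to absorb block misalignment on the per-steal execution stacks into a constant factor (SPMS is regular and its cache bounds only use $M\ge B^2$), and then Corollary~\ref{cor:simple-rws-miss-bound}, which via the task-kernel partition of Lemma~\ref{lem:kernel-bound} (at most $2S+1$ contiguous kernels, each run on one processor) charges $O(S\cdot M/B)$ additional cache misses. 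Adding the two terms gives the claimed $O(\frac nB \frac{\log n}{\log M} + \frac MB \cdot S)$ bound.

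I expect the main obstacle to be the first step: confirming that the intricate three-phase Step~1a$(ii)$, implemented through the composed procedures, simultaneously achieves $O(\log n)$ span, $O(n)$ work, and the $O(n/B)$ sequential cache bound of Lemma~\ref{cche-miss-cost-step1}. In particular one must check that the Permuting Write inside Small Multi Merge, whose outputs land at positions $2r+1$ apart, still contributes only logarithmic span and linear work, and that the prefix-sums steps used to place subvectors in Transposing Redistribution do not inflate the span. Once Step~1 is pinned down, the remaining combination of the cited lemmas and corollaries is purely mechanical.
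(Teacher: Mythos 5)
Your proposal is correct and follows essentially the same route as the paper: establish that the Section~\ref{sec:par-co-step1} implementation of Step~1 has $O(\log n)$ span, $O(n)$ work, and $O(n/B)$ sequential cache misses (Lemma~\ref{cche-miss-cost-step1}), then invoke the recurrences of Lemmas~\ref{lem:merge-anal} and~\ref{lem:seq-cache-comp} for the global work, span, and sequential cache bounds, and add the $O(S\cdot M/B)$ term from Corollary~\ref{cor:simple-rws-miss-bound}. Your explicit invocation of Corollary~\ref{cor:block-misalign} makes visible a step the paper's proof leaves implicit (it is folded into the discussion preceding Corollary~\ref{cor:simple-rws-miss-bound}), but this is a presentational difference, not a different argument.
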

\begin{proof}
We will use the implementation of Step 1 given above.
The cache miss bound for Step 1 is $O(n/B)$ by Lemma~\ref{cche-miss-cost-step1}.
Each substep in Step 1 is a fork-join computation which begins with a height
$O(\log n)$ fork tree, and is followed by the complementary join computation, with size $O(1)$
or $O(\log n)$ computations at the intermediate leaves
(the size $O(\log n)$ leaf computations are for the binary searches).
Thus the span (i.e., parallel time) for Step 1 is $O(\log n)$.
An $O(n)$ operation count bound should be immediate.

The bound of $O(\frac nB \frac{\log n}{\log M})$ on the cache misses now follows from
Lemma~\ref{lem:seq-cache-comp}, and the bounds of $O(\log n \log \log n)$ on the parallel time
and $O(n \log n)$ on the operation count from Lemma~\ref{lem:merge-anal}.
Finally, the additive term of $O(MS/B)$ follows from
Corollary~\ref{cor:simple-rws-miss-bound}.
\end{proof}

\noindent
{\bf Comment}. Theorem~\ref{th:basic-par-spms} assumes
a linear space implementation of SPMS.
(Such an implementation would use two collections of size $O(n)$ arrays, and at successive
recursive levels would alternate between which arrays are used for inputs and for outputs,
analogous to what is done in the linear space implementation of the standard merge sort.)
Otherwise, if space was allocated dynamically as needed for each recursive procedure, the algorithm
would use $\Theta(n \log n)$ space, and it would only be for problems of size $O(M/\log M)$ that
their whole computation would fit in cache.
In fact, this would not affect the complexity bounds as the term
$O(\frac nB \frac{\log n}{\log M})$ would be replaced by
$O(\frac nB \frac{\log n}{\log (M/\log M)}) = O(\frac nB \frac{\log n}{\log M})$.

\vhalf
\noindent
{\bf Comparison with the partitioning process in~\cite{BGS09}}.
The sample sort algorithm in
Blelloch et al.~\cite{BGS09}
(which achieves $O(\log^2 n)$ time span deterministically, and $O(\log^{1.5} n)$
span randomized)
also performs a partitioning following the sorting of the sample.
It  used a generalization of matrix transpose to carry out this process;
it plays the same role as our transposing redistribution, however neither can
be used in place of the other due to the different data sizes.
In~\cite{BGS09}, there are $\sqrt n$ sorted subsets each of size $\sqrt n$ to partition about
$\sqrt n$ items, which contrasts with the $r=n^{1/6}$ sorted subsets of total size $n$
being partitioned about $r^3 = \sqrt n$ items.
In turns out we could apply 3 successive iterations of their procedure in lieu of the
transposing redistribution, but this does not appear particularly advantageous.
Further, the method in~\cite{BGS09} can incur significant false sharing misses
in the worst case, in contrast to our method, as discussed in
Sections~\ref{sec:fs-misses}--\ref{sec:fs-estack}.

\section{Bounding False Sharing Costs in SPMS}
\label{sec:fs-misses}

In this section, we address the costs incurred by SPMS due to {\it false sharing},
which we will call {\it fs misses}, and will measure in units of cache miss cost.
Recall that we are working in a multicore setting with $p$ cores,
where each core has a private cache of size $M$, with data organized
in blocks of $B$ words each. When a core needs to access a data item
$x$, this is a unit-cost local computation if the block containing $x$
is in cache, but if the block is not in cache, the core needs to
bring in the data item from main memory at the cost of a
cache miss. This is the traditional caching communication cost that
we have considered in the previous sections, as have all other
papers on multicore algorithms (with the exception of \cite{CR12,CR13}).

Consider the case when cores $C$ and $C'$ both read the same block
$\beta$ into their respective caches in order to access $x\in \beta$ and
$x'\in \beta$ respectively. Suppose $C$ now changes the value in
$x$ with a write. Then, the local view of $\beta$ differs in the caches
of $C$ and $C'$, and this difference needs to be resolved in some
manner. The current practice on most machines is to use a
{\it cache coherence} protocol, which invalidates the outdated copy
of $\beta$ in the cache of $C'$. So, if $C'$ needs to read or write any
word in $\beta$, it will incur a cache miss in order to bring in
the updated copy of $\beta$ into its cache. One can readily design
memory access patterns by two or more cores and associated asynchronous
delays that result in very
large cache miss costs at cores due to invalidations of local blocks
caused by the cache coherence protocol. Under such circumstances,
the block structured storage causes larger delays than that incurred
by reading in individual words at the cost of a cache miss each. This is
considered acceptable in practice because false sharing occurs fairly
infrequently. It is nevertheless considered to be a fairly serious
problem.

Recently, in~\cite{CR12}, systematic algorithm design techniques
were developed that help reduce the cost of false sharing, 
as were techniques specific to the class of
{\it balanced parallel (BP)} and
{\it hierarchical balanced parallel (HBP)} computations.
All of the computations in Step 1 of SPMS are essentially BP computations,
and the overall
SPMS is essentially a `Type 2' HBP computation, so the results in~\cite{CR12}
apply for the most part. However, some of the Step 1 computations (sparse
permuting and parallel binary searches) do not
exactly satisfy the definition of BP,
and since the recursive subproblems in SPMS
do not have approximately the same size (they are only guaranteed to be
sufficiently small), the overall SPMS does not exactly satisfy the HBP
definition. So we need to modify the theorems in~\cite{CR12}.
The modifications needed are small, but as presenting the modifications would
need the necessary background to be provided, we have chosen instead to
provide self-contained proofs of the results for SPMS. The main result
we will establish is that the overall extra communication cost incurred by
the parallel tasks due to false sharing
in an SPMS execution is bounded by the cost of $O(S \cdot B)$ cache misses,
where $B$ is the size of a block, and $S$ is the number of steals
(i.e.,  $S+1$ is the number of  parallel tasks executed).
In other words each parallel task incurs an amortized cost of
at most $O(B)$ cache misses due to false sharing in any parallel
execution of SPMS.

The rest of this section is organized as follows.
In Section \ref{sec:fs-basic} we provide some basic background on
false sharing and results from~\cite{CR12}.
Then in
Section~\ref{sec:fs-spms} we bound the cost of fs misses in SPMS assuming space cannot be
deallocated. Later, in Section~\ref{sec:fs-estack} we re-establish the same bound for the case
when space can be deallocated and then reused on execution stacks.

\subsection{Background on False Sharing}
\label{sec:fs-basic}

An fs-miss occurs if a processor $P$ had a block $\beta$ in its cache and another
processor $P'$ writes a location in $\beta$.
Then in order to access $\beta$ subsequently, $P$ will need to reload $\beta$,
incurring an fs miss.
If there is a sequence of $k$ processors writing to $\beta$ and each one
in turn prevents $P$ from re-accessing $\beta$, then $P$ will be said to incur
$k$ fs misses. It may be that such a sequence can have non-consecutive repetitions,
e.g.\ $P_1, P_2, P_1$ each blocking $P$'s access in turn.
We assume that when an updated $\beta$ is accessed by another processor $P$, the time it takes to
move $\beta$ to $P$'s cache is the time
of $b$ units for a cache miss (or for $O(1)$ cache misses).

We will use the following definition of block delay from~\cite{CR12} to bound the delay due to fs misses.
\begin{definition}
\label{def:block-delay}
\cite{CR12}
Suppose that there are $m\geq 1$ writes to a  block $\beta$ during a time interval $T = [t_1,t_2]$.
Then $m$ is defined to be the\emph{ block delay} incurred by $\beta$ during $T$.
\end{definition}

Consider a processor $P$ that has block $\beta$ in its cache during time interval $T$.
If no other processor writes to $\beta$ during $T$  then $P$ will incur no fs miss on $\beta$. On the other hand,
any time another processor
writes to $\beta$ and $P$ accesses $\beta$ after that write, $P$ will incur an fs miss.
Thus the block delay
$m$ bounds the worst-case number of times that an updated version of the block $\beta$ may need to be moved
into $P$'s cache due to changes made to
the contents of $\beta$ through writes, and thus it bounds the cost of fs misses at $P$ for accessing $\beta$ in
time interval $T$ (in units of cache miss cost).  This is a pessimistic upper bound, since it need not be the case that $P$
will access $\beta$ between every pair of successive writes to it. However, in our analysis, we will use the block delay
$m$ to bound the cost of fs misses at every processor that accesses $\beta$ during the time interval $T$.
As a result, the upper bounds
that we obtain are quite robust, and should apply to most methods likely
to be used to resolve false sharing, cache coherence or otherwise. In the above definition we include the notion of a time
interval because, for example, we do not need to account for writes that occur after a join if all accesses of
$\beta$ by $P$ occur before the join.

We now present the definitions of {\it $L(r)$-block sharing} and
{\it limited access variables} from~\cite{CR12},
and the new notion of buffered arrays (based on the gapping technique in~\cite{CR12});
these provide the algorithmic tools that enable low false sharing costs.

\begin{definition}\label{defn:block-sharing}
\cite{CR12}
A task $\tau$ of size $r$ is \emph{$L(r)$-block sharing},
if there are at most $O(L(r))$ writable blocks that
$\tau$ can share with all other tasks that
could be scheduled in parallel with $\tau$
and that could access a location in the block.
A computation
has block sharing function $L$ if every task in it is $L$-block sharing.
\end{definition}

The next definition specifies the access constraints satisfied by all the variables in the SPMS algorithm.
In part $(i)$ of this definition we reproduce the notion of a limited-access variable from \cite{CR12},
but we introduce the new notion of a {\it buffered} array in part $(ii)$. This leads to a more general notion
of a limited access algorithm defined in part $(iii)$ than the one given in~\cite{CR12} by also including buffered arrays.

\begin{definition}
\label{def:limited-use-var}
\hspace{3in}

\hspace*{-.25in} i.~\cite{CR12}  The variable $x$ is \emph{limited-access} if it is accessed $O(1)$ times
over the course of the algorithm.
\\
ii. An array $A$ is \emph{buffered} if the accesses to the array occur in two phases.
In the first phase, each entry is accessed $O(1)$ times.
The second phase is read-only, and if there are $q$ accesses in the second
phase, 
it is guaranteed that there are no accesses to the initial and to the final
$q$ entries in the array.
\\
iii. An algorithm is \emph{limited access} if its variables are either limited
access or buffered (note that if an array is limited access it need not be buffered).
\end{definition}

\vone
\noindent
\emph{Comment}.
The reason for allowing both $(i)$ and $(ii)$ in a limited access algorithm is that both
lead to the key property of a limited access computation established in 
Corollary~\ref{obs:la-lr}.
We will use buffered arrays to implement arrays on which Parallel-CO-SPMS performs several
binary searches, since the variables stored in these arrays 
may not be limited access variables.
We use the gapping technique~\cite{CR12} to create buffered arrays.
Let $A$ be the array being accessed, and suppose there are at most $q$ accesses to the array.
Then, instead of declaring array $A$, declare array $\overline{A}$ which has a buffer
of $q$ empty locations at its start and end.
It is the SPMS tasks that declare such arrays.
In Parallel-CO-SPMS buffered arrays are needed only in the outputs of the recursive calls,
and in the calls to TR Prep and Small Multi Merge in Step 1 that perform binary searches in parallel.
The outputs of the recursive calls have size linear in their input.
For the binary searches,  the number of accesses is
linear (or less than linear) in the size of the input to this call to SPMS.
Thus, when using buffered arrays for these steps, we maintain the linear space bound
for each call to SPMS.

For now we make a simplifying assumption regarding space allocation,
namely that there is no deallocation.
This ensures that every memory location stores at most one variable.
In Section~\ref{sec:fs-spms}, we analyze the limited access and block sharing
features in SPMS and thereby bound the delay due to fs misses
when there is no deallocation of blocks.
Later, we will extend the analysis to handle the general case in which deallocation can occur.

The following lemma is a  generalization of an observation noted in~\cite{CR12}. Here we
generalize that observation to include buffered arrays, in accordance with our enhanced definition of a
limited access computation in Definition~\ref{def:limited-use-var}. This lemma
shows that an algorithm with limited access, $O(1)$-block sharing, and executed in an environment that has
no deallocation of blocks will have very manageable false sharing costs at each parallel task, even in the worst case.

\begin{lemma}
\label{lem:la-lr}
Consider a parallel
execution of a limited access computation with
no deallocation of blocks,
let $\tau$ be any  parallel task in this execution,
\and let $\beta$ be a block accessed by $\tau$.
The worst case delay incurred by $\tau$
due to fs misses
in accessing $\beta$
is the delay due
to $O(B)$ cache misses, where $B$ is the size of a block.
\end{lemma}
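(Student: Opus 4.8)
The goal is to bound, for a single block $\beta$ and a single parallel task $\tau$, the number of fs misses $\tau$ can incur on $\beta$ by the cost of $O(B)$ cache misses. Since an fs miss on $\beta$ at $\tau$ is caused by some other task writing to $\beta$ while $\tau$ holds it, the block delay (Definition~\ref{def:block-delay}) over the lifetime of $\beta$ in $\tau$'s cache is an upper bound on the number of fs misses $\tau$ incurs on $\beta$. So it suffices to bound the total number of writes to $\beta$ by tasks other than $\tau$ (or more simply, the total number of writes to $\beta$ over the relevant interval) by $O(B)$. The block $\beta$ has $B$ words. The plan is to argue that each word of $\beta$ is written $O(1)$ times by the tasks whose writes matter, so the total write count is $O(B)$.

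**Key steps.** First I would invoke the no-deallocation assumption: each memory location holds exactly one variable throughout the execution, so the $B$ words of $\beta$ correspond to $B$ fixed variables, each of which is either a limited-access variable or an entry of a buffered array. Second, for a word holding a limited-access variable, Definition~\ref{def:limited-use-var}(i) says it is accessed $O(1)$ times over the entire algorithm, hence written $O(1)$ times — contributing $O(1)$ to the block delay. Third, for a word holding an entry of a buffered array $A$, I would split into cases by where in $A$ the word $\beta$ lies relative to the $q$-entry buffers at the two ends. During the first (read/write) phase each entry is accessed $O(1)$ times, so any writes from that phase contribute $O(1)$ per word, hence $O(B)$ total for $\beta$. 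The second phase is read-only, so it contributes no writes at all; moreover, by the buffering guarantee, if $\beta$ intersects the first or last $q$ entries of $A$ then during the second phase it is never even accessed (so no fs miss can be charged to reads there against later writes — and there are no later writes in phase two anyway). Combining, the total number of writes to $\beta$ over the interval during which $\tau$ holds $\beta$ is $O(B)$, so the fs-miss delay at $\tau$ for $\beta$ is the cost of $O(B)$ cache misses.

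**Where $O(1)$-block sharing / where the subtlety lies.** The main thing to be careful about is that the statement as written does not assume $O(1)$-block sharing — it is a per-block, per-task statement, and it should hold for \emph{any} block $\tau$ touches — so the argument really must be the crude "at most $B$ words, each written $O(1)$ times" counting, not an argument about how many blocks are shared. The genuinely delicate point is the buffered-array case: one has to check that the phase-two read-only accesses by $\tau$ cannot be repeatedly invalidated by phase-one-style writes of other tasks. This is exactly what the buffering guarantee rules out: by the time any task is in phase two on $A$, \emph{all} tasks are (phases are global to the array), so there are no concurrent writes to $A$ at all during phase two, and the "no accesses to the first/last $q$ entries" clause handles the boundary blocks that a stolen task could be straddling. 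I expect this buffered-array bookkeeping — making precise that "the interval during which $\tau$ holds $\beta$" can be partitioned so that writes in it all come from phase one — to be the main obstacle; everything else is the routine $B \times O(1)$ count.
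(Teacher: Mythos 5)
Your proof is correct and follows essentially the same route as the paper's: both decompose $\beta$ according to the kinds of variables it stores (limited-access variables versus buffered-array entries), invoke the no-deallocation assumption so that each location holds a single variable, and count $O(1)$ accesses per location to conclude an $O(B)$ block delay. The only difference is that you count writes alone, which within the paper's block-delay definition does suffice; the paper's own proof additionally bounds the phase-2 \emph{read} accesses to a buffered array that only partially overlaps $\beta$ (via the observation that in that case the buffer length $q$ is less than $B$ and $q$ also bounds the total number of phase-2 accesses to the array) --- this is where the $q$-entry buffers actually earn their keep, and it makes the bound robust to coherence costs triggered by reads, a point your write-only accounting quietly sidesteps.
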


\begin{proof}
If  $\beta$ contains only limited access variables, as it is accessed at most $O(B)$ times,
its block delay is $O(B)$.
If block $\beta$ contains only a (portion of a) buffered array $A$, in Phase 1 for the accesses to
$A$, each location is accessed $O(1)$ times, so there are $O(B)$ accesses, and the block delay
in Phase 1 is $O(B)$. Phase 2 is read-only so there is no further block delay.

Finally we consider the case when $\beta$ contains both a portion of
one buffered array
$A$ overlapping one end (or portions of two buffered arrays $A$ and $A'$  overlapping both ends),
and a collection $C$ of other variables outside of $A$
(and $A'$);
$C$ may include entire buffered arrays.
Note that there are $O(B)$ accesses to the variables in $C$.
Thus we only need to consider Phase 2 accesses to $A$
and $A'$ since there are $O(1)$ accesses to each variable in $A$
and $A'$ in Phase 1.
In this case,  the buffered length $q$ in $A \cap \beta$ is less than $B$,
and $q$ also bounds the number of accesses to $A$ in Phase 2.
An analogous bound applies to the accesses to $A' \cap \beta$.
Hence there are amortized $O(1)$ accesses to each location in $\beta$ when $\beta$ is shared between
portions of one or two buffered arrays,
 where the other variables are either limited access or belong to
entire other buffered arrays.
\end{proof}

\begin{corollary}
\label{obs:la-lr}
Consider a parallel
execution of a limited access computation with $O(1)$-block sharing and
no deallocation of blocks, which incurs $S$ steals.
Then the worst case delay incurred due to fs misses in this parallel execution is the delay due
to $O(S \cdot B)$ cache misses, where $B$ is the size of a block.
\end{corollary}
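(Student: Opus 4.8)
The plan is to combine the per-block, per-task bound from Lemma~\ref{lem:la-lr} with the task-kernel decomposition from Lemma~\ref{lem:kernel-bound}, and to charge the fs-miss cost to the steals. First I would recall that Lemma~\ref{lem:kernel-bound} partitions the computation dag into at most $2S+1$ task kernels, each executed entirely on a single processor and each forming a contiguous portion of the sequential execution. So it suffices to bound, for a single task kernel $\kappa$, the total worst-case fs-miss delay incurred by $\kappa$ while accessing all the blocks it touches, and then to show that this total, summed over all $2S+1$ kernels, is $O(S\cdot B)$.

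The key step is to observe that $O(1)$-block sharing limits the number of \emph{distinct} blocks on which a given kernel can incur any fs miss at all. By Definition~\ref{defn:block-sharing}, since the computation is $O(1)$-block sharing, each task $\tau$ shares only $O(1)$ writable blocks with tasks that could run in parallel with it; a kernel $\kappa$ is a contiguous piece of (at most) one such task $\tau$ together with possibly the surrounding fork/join glue, so $\kappa$ likewise can have fs misses on only $O(1)$ blocks. (For a blocks that $\kappa$ does not share with any concurrently-schedulable task, no other processor writes to the block during the interval when $\kappa$ holds it, so by the block-delay discussion the fs-miss cost there is zero.) For each of those $O(1)$ shared blocks $\beta$, Lemma~\ref{lem:la-lr} says that, under limited access and no deallocation, the worst-case delay $\kappa$ incurs on $\beta$ is the cost of $O(B)$ cache misses. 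Hence each kernel incurs at most $O(1)\cdot O(B) = O(B)$ fs-miss cost.

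Summing over the $\le 2S+1$ kernels gives total worst-case fs-miss delay $O((2S+1)\cdot B) = O(S\cdot B)$ cache-miss cost, as claimed. The main obstacle — or rather the only point that needs care — is the bookkeeping at the boundary between a stolen subtask and the fork/join nodes that wrap it: a kernel is not literally an SPMS ``task'' of the kind named in Definition~\ref{defn:block-sharing}, so I need to argue that attaching $O(1)$ extra fork/join nodes (which themselves declare only $O(1)$ limited-access variables, hence touch $O(1)$ additional blocks) does not change the $O(1)$-block-sharing count by more than a constant, and that the ``at most 4 blocks per sequential block'' misalignment phenomenon of Lemma~\ref{lem:misalign-cost} only inflates the per-block $O(B)$ bound by a constant factor. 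Both of these are constant-factor inflations, so the final $O(S\cdot B)$ bound is unaffected; the no-deallocation hypothesis is what makes Lemma~\ref{lem:la-lr} directly applicable, and the extension to the deallocation case is deferred to Section~\ref{sec:fs-estack}.
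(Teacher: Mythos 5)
Your final bound is correct and your high-level strategy (count $O(S)$ shared blocks, apply the $O(B)$ per-block delay of Lemma~\ref{lem:la-lr}, and sum) is the same as the paper's, but there is a genuine flaw in the intermediate accounting: the claim that each task kernel can incur fs misses on only $O(1)$ blocks does not follow from Definition~\ref{defn:block-sharing} and is false in general. That definition bounds the number of writable blocks a \emph{task} $\tau$ shares with tasks that could be scheduled in parallel with $\tau$, i.e., with tasks \emph{external} to $\tau$. A kernel such as $\mu_2$ or $\mu_3$ in Definition~\ref{def:task-kernel-work-stealing} is a piece of $\tau$ that runs concurrently with $\tau$'s \emph{own} stolen subtasks, and those sharings are internal to $\tau$, hence not controlled by $\tau$'s block-sharing bound. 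Concretely, if a task forks off $j$ subtasks that are all stolen and then performs a long stretch of work before the corresponding joins, that stretch ends up as a single kernel running in parallel with all $j$ stolen subtasks; each stolen subtask may share its $O(1)$ blocks with this one kernel, so the kernel can share $\Theta(j)$ blocks and incur $\Theta(j\cdot B)$ fs-miss delay, not $O(B)$.

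The bound survives because the sharings should be charged to the stolen subtasks rather than to the kernels: each of the $S$ stolen subtasks shares $O(1)$ blocks with the remainder of the computation, so the total number of block-sharing incidences over the whole execution is $O(S)$, and each shared block contributes delay equal to $O(B)$ cache misses by Lemma~\ref{lem:la-lr}, giving $O(S\cdot B)$ overall. This is exactly the paper's argument, which does not invoke the kernel decomposition of Lemma~\ref{lem:kernel-bound} at all. Your closing appeal to Lemma~\ref{lem:misalign-cost} is also unnecessary here: block misalignment concerns ordinary cache misses caused by fresh execution stacks, not the fs-miss delay being bounded in this corollary.
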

\begin{proof}
Let $\tau$ be either the original task or a stolen task in this execution, and suppose
that $\tau$ incurs $s$ steals.
Let $\tau_1, \cdots, \tau_j$ be $\tau$'s stolen subtasks.
Since $L(r)=O(1)$,
each $\tau_i$ shares $O(1)$ blocks with
the remainder of the computation.
Thus $\tau$'s stolen subtasks between them share $O(s+1)$ blocks,
counting each block according to the number of times it is shared.
Suppose that there are $k$ such tasks $\tau$, which incur $s_1, s_2, \ldots, s_k$ steals, respectively.
Note that $k \le S+1$.
Now summing over all the tasks shows that there are $O(S + k) = O(S)$ shared blocks.
Hence by Lemma~\ref{lem:la-lr} the worst case
delay incurred by $\tau$ due to fs misses is the delay due to $O(S \cdot B)$ cache misses.
\end{proof}

\noindent
{\bf Discussion.}
Our cost model does not assign a cost for the invalidations
that are performed by the cache coherence protocol of local copies of a block when an entry in
the block is updated. If this delay is larger than a cache miss cost, then we would need to use this larger cost in place of a cache miss cost as the unit of cost for the block delay.
A further refinement arises
 when a large number of parallel reads have been performed on some of the entries in a block $\beta$ in which an entry has been updated.
Then, the cache coherence protocol will need to invalidate $\beta$ in the local caches
of all of the parallel processors that read $\beta$.
If the invalidation cost depends on the number of caches at which a
block needs to be invalidated, then this cost could become large.
In our algorithm, this issue
arises only when performing multiple binary searches on the same array
(in TR Prep and Small Multi Merge).
As already discussed,
we control this potential cost by buffering the arrays undergoing these multiple searches.

\subsection{False Sharing Costs in SPMS: Limited Access and Block Sharing}
\label{sec:fs-spms}

It is readily seen that 
all of the algorithms we described for Step 1
of SPMS are both limited
access and have $L(r) = O(1)$ except for the Binary Search, Transposing Redistribution,
Small Multi Merge, and Sparse Permuting Writes computations.

We now discuss bounds on fs misses for the
routines used in Step 1 that are
not immediately seen as being limited access or having $L(r)=O(1)$.

\vhalf
\noindent
1. {\bf Binary Searches.}

We assume here that each of the binary searches in Step 1 of SPMS
is implemented as a limited access
algorithm, e.g., by using the recursive implementation.
In fact, as we will see subsequently, with a
more precise analysis of where variables are stored,
developed in Section~\ref{sec:fs-estack},
we are able to dispense with this assumption.

 We call binary searches in parallel in our procedures for TR Prep and
Small Multi Merge.
We will use
buffered arrays for the $r$ sorted lists being searched in the binary searches in order to obtain
limited access computations for these steps. This results in a limited access algorithm, and
Corollary~\ref{obs:la-lr}
will apply.

\vhalf
\noindent
2. {\bf Transposing Redistributions.}

Recall that we have
 formulated this
as a log-depth fork-join computation according to a non-standard ordering of the leaves,
namely according to the ordering of the writes. The result of this ordering
is that this computation has
$L(r)=O(1)$, since only the two end blocks of the set of blocks accessed
by a parallel task that is executing part of this computation can be
shared with other parallel tasks.
Had we accessed the data according to the ordering of the reads, we would
not have obtained $O(1)$ block sharing.
Since the variables are limited-access,
Corollary~\ref{obs:la-lr} applies.

\vhalf
\noindent
3. {\bf Small Multi Merge.}

In the Small Multi Merge routine
there is a step that copies over the $2r$ rank values for each element
after the elements have been copied over with a permuting write. As in
Transposing Redistribution, the fork-join tree for this
copying step has its leaves in the output order.
This causes the computation to
have $L(r) = O(1)$.

\vhalf
\noindent
4. {\bf Permuting Writes.}

In our implementation of Permuting Writes
we have a step in which a sequence of $x$ elements are written
into an array $C'$ of size $x$ in an arbitrarily permuted order. (The other steps in this algorithm
are implemented with scans that  are $O(1)$-block sharing and access limited access variables.)
Consider the algorithm that writes
$x$ elements into an array  where the writes could occur in an arbitrarily permuted order.
This algorithm is limited access since each location has only one write.
However, every block in the computation can be shared so the block sharing function
is $L(r)=r$.
This block sharing cost is high,
and only gives a bound of
$O(x \cdot B)$ fs misses, which is excessive.

To reduce the fs miss cost, we modify the Permuting Writes algorithm to
split the writing into two substeps:

\vhalf
\noindent
{\it \underline{Permuting Step in the Permuting Writes Algorithm}.}

\vhalf
\noindent
a.
Initialize an all-zero array
of size $x^2$ and write the $i$-th
output element into location $i \cdot x$. \\
b.
Compact the result in step a into an array of size $x$.

\vhalf
Step b is readily achieved by a log-depth fork-join computation with $L(r) = O(1)$.
We now bound the fs misses incurred by
step a, which performs writes to locations $i \cdot x$,
$1 \le i \le x$, where each of these $x$ locations is written exactly once,
but not in any particular order.

We need to verify that the previous bound on cache misses applies to
this modification of the permuting writes algorithm.
We analyze it as follows.
As before, reading $A$ incurs $O(\ell x/B)$ cache misses.
Now the writes are made to an array $C'$, of size $x^2$.
If $|C'| \le B^2$, then this incurs $O(x^2/B)$ cache misses.
Otherwise, it incurs $O(x)$ cache misses, and as $x > B$ here,
this is also $O(x^2/B)$ cache misses.
The final writing to $C$ incurs $O(x^2/B + \ell' x/B)$ cache misses.
This is a total of
$O(\ell x/B  + \ell' x/B + x^2/B)$ cache misses.

For the fs miss analysis, we only need to consider the case when $x < B$, since there is no
false sharing when
$x \ge B$. Let $ x < B$, and let $i$ be the integer satisfying
$ x\cdot i \leq B <  x \cdot (i+1)$.
Then, at most $i$
writes can occur within a single block.
Each of these writes can incur a cost of at most $i$ cache misses
while waiting to get control of the block.
Hence, the total cost of fs misses
is $O(x \cdot i) = O(x \cdot B/ x) = O(B)$
by our choice of $i$.

The above analysis establishes that for any invocation of a call to
Permuting Writes, the total overhead due to fs misses across all parallel
tasks that participate in this computation is bounded by the cost of
$O(B)$ cache misses. We translate this into an overhead of
$O(B)$ cache misses per parallel task in the computation across all
invocations of  Permuting Writes by assigning the full cost of
fs misses in an invocation to a task stolen during this invocation. Thus,
each parallel task is assigned the fs miss cost of at most one
invocation of Permuting Writes (and if there are no steals in an
invocation, then there is no fs miss during its execution).

Finally, we need to consider the fs misses incurred due to interactions
between blocks accessed for different
calls to Permuting Writes. For this we
observe that although the writes can occur in
arbitrary order in Step 1
within the sparse array in a call to Permuting Writes, only the
two end blocks of the array can be shared by other parallel tasks outside
this call to Permuting Writes, hence the block-sharing function
remains $L(r)=O(1)$.
This leads to the following lemma.

\begin{lemma}
Consider an execution of
SPMS which is implemented with Permuting
Writes.
Suppose that there is no deallocation of space while the algorithm is running.
Then
the total cost for fs misses during the execution of all invocations of
permuting writes is bounded by $O(S' \cdot B)$ cache misses, where $S'$
is the number of steals of subtasks of permuting writes that occur
during the execution.
\end{lemma}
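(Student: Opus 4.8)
The plan is to decompose the total false-sharing cost of all Permuting Writes invocations into two sources and bound each against the number $S'$ of steals of subtasks of Permuting Writes. The first source is false sharing \emph{internal} to a single invocation of Permuting Writes, and the second is false sharing \emph{across distinct} invocations (i.e., where a block is shared by a parallel task outside the Permuting Writes call).

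First I would handle the intra-invocation cost. By the analysis already carried out above for the modified (two-substep) permuting step, any single invocation of Permuting Writes incurs fs misses totalling the cost of at most $O(B)$ cache misses, summed over all parallel tasks that participate in that invocation; moreover this cost is nonzero only if at least one steal of a subtask of that Permuting Writes call occurs (with no steals, the invocation is executed sequentially on one processor and no fs miss arises). So I would charge the entire $O(B)$ intra-invocation cost to any one such steal. Since distinct invocations charge distinct steals, and each such steal is counted in $S'$, the total intra-invocation cost over all invocations is $O(S' \cdot B)$ cache misses.

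Next I would handle the inter-invocation cost, using the block-sharing observation already made: in the permuting (sparse-write) step, although the writes into the size-$x^2$ (or size-$x$, after compaction) array occur in arbitrary order, only the two end blocks of that array can be shared with parallel tasks \emph{outside} this Permuting Writes call, so the block-sharing function of a Permuting Writes task, as far as external interactions are concerned, is $L(r) = O(1)$; all the other substeps of Permuting Writes are scans that are $O(1)$-block sharing and touch only limited-access variables. Thus the whole Permuting Writes computation, viewed from the outside, is a limited-access, $O(1)$-block-sharing computation, and I can invoke Corollary~\ref{obs:la-lr} (applied to the sub-execution consisting of the Permuting Writes tasks and the blocks they share externally): the worst-case fs-miss delay from external sharing is the cost of $O(S' \cdot B)$ cache misses, where $S'$ bounds the relevant steals. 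Adding the two contributions gives the claimed $O(S' \cdot B)$ bound.

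The main obstacle is bookkeeping rather than a deep idea: one must be careful that the charging scheme does not double-count, i.e., that the $O(B)$ intra-invocation cost and the $O(B)$-per-shared-block external cost are assigned to steals in a way consistent with Corollary~\ref{obs:la-lr}'s amortization (each stolen subtask absorbing $O(B)$), and that every block contributing to a fs miss is either internal to one invocation (handled by the first argument) or an end block shared externally (handled by the $O(1)$-block-sharing argument) --- no block falls through the cracks. A secondary subtlety is confirming that the two-substep modification of Permuting Writes does not disturb the cache-miss bounds claimed earlier for the procedure, but that verification has essentially already been done in the displayed analysis above, so I would simply cite it.
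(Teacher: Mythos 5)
Your proposal is correct and follows essentially the same route as the paper: it splits the false-sharing cost into the intra-invocation part (bounded by $O(B)$ per invocation of Permuting Writes and charged to a steal occurring within that invocation, with the observation that no steal means no fs miss) and the inter-invocation part (handled by noting that only the two end blocks of the sparse array are externally shared, so $L(r)=O(1)$ and Corollary~\ref{obs:la-lr} applies). No substantive difference from the paper's argument.
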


We have shown:

\begin{theorem}\label{thm:fs-start}
Consider a computing environment with tall caches
in which there is no deallocation of space while
an algorithm is running.
There is an implementation of the SPMS algorithm in a parallel execution with $S$ steals
and using a work-stealing scheduler that
performs $O(n\log n)$ operations,
runs in parallel time $O(\log n \log \log n)$,
incurs $O(\frac nB \frac{\log n}{\log M} + \frac MB \cdot S)$ cache misses,
and incurs a total cost due to fs misses delay bounded by $O(S \cdot B)$ cache misses.
\end{theorem}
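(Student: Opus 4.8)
The plan is to assemble the theorem from results already established, using the CO-Parallel-SPMS implementation of Step~1 from Section~\ref{sec:co-parallel}. The work bound $O(n\log n)$, the span bound $O(\log n\log\log n)$, and the cache miss bound $O(\frac nB\frac{\log n}{\log M}+\frac MB\cdot S)$ are exactly the statement of Theorem~\ref{th:basic-par-spms}, which in turn rests on Lemma~\ref{lem:merge-anal}, Lemma~\ref{lem:seq-cache-comp}, Lemma~\ref{cche-miss-cost-step1}, and Corollary~\ref{cor:simple-rws-miss-bound}, all under the tall-cache hypothesis $M\ge B^2$. So the only genuinely new content is the bound of $O(S\cdot B)$ cache misses on the fs-miss delay, and this is where the machinery of Section~\ref{sec:fs-spms} is brought to bear.

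For the fs-miss bound I would split the fs misses into two groups. The first consists of fs misses incurred \emph{inside} the arbitrary-order write substep of an invocation of Permuting Writes; by the Lemma immediately preceding this theorem, these total at most the cost of $O(S'\cdot B)$ cache misses, where $S'\le S$ is the number of steals of subtasks of Permuting Writes, the full $O(B)$ cost of each such invocation being charged to one steal within it. The second group is everything else, and for it I would argue that SPMS, as implemented here and with that one internal substep removed, is a limited access computation with block sharing function $L(r)=O(1)$, so that Corollary~\ref{obs:la-lr} applies and bounds the delay by $O(S\cdot B)$ cache misses. Adding the two groups and using $S'\le S$ yields the claimed $O(S\cdot B)$.

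The ``limited access with $O(1)$-block sharing'' claim for the second group is where the per-routine analysis of Section~\ref{sec:fs-spms} is invoked: the parallel binary searches in TR~Prep and Small~Multi~Merge run on buffered (gapped) arrays and use a recursive, limited-access search, so they are limited access; Transposing Redistribution and the rank-copying step of Small~Multi~Merge are formulated as log-depth fork-join computations whose leaves are ordered by \emph{output} position, forcing only the two boundary blocks touched by a parallel task to be shared, hence $L(r)=O(1)$; and the remaining parts of Permuting Writes are scans over limited access variables, with the sparse array exposing only its two end blocks to tasks outside the invocation. One then checks the two ``glue'' points: that sequencing the Step~1 substeps and Steps~1--3 creates no extra shared writable blocks beyond $O(1)$ per task, and that the recursive calls in Steps~2 and~3 --- whose sub-problem sizes vary, unlike a pure HBP computation --- still share only $O(1)$ writable blocks per task with siblings and parent (adjacent subproblems can touch a common boundary block of an input/output array, and no more), while buffered output arrays keep the per-call space linear so that the linear-space, no-deallocation implementation remains consistent with the hypothesis of Corollary~\ref{obs:la-lr}.

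The main obstacle I anticipate is precisely the second glue point: making the global $O(1)$-block sharing statement airtight across the recursive composition when the recursion is unbalanced. Within a single fork-join routine the argument is local; but to invoke Corollary~\ref{obs:la-lr} one needs that \emph{every} task in the execution --- including an SPMS task owning many recursive subproblems --- shares only $O(1)$ writable blocks with tasks that can run concurrently with it. This is exactly where SPMS leaves the HBP framework of~\cite{CR12}, which is why a self-contained treatment is needed here; the remaining work is the bookkeeping to confirm that all cross-task sharing sits at array boundaries and that charging each Permuting Writes invocation's internal $O(B)$ cost to a single one of its steals introduces no double counting.
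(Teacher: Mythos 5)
Your proposal is correct and follows essentially the same route as the paper: the work, span, and cache-miss bounds are imported from Theorem~\ref{th:basic-par-spms}, and the fs-miss bound is obtained by combining the per-routine limited-access and $L(r)=O(1)$ analysis of Section~\ref{sec:fs-spms} with Corollary~\ref{obs:la-lr}, treating the arbitrary-order write substep of Permuting Writes separately via the $O(S'\cdot B)$ charging lemma. The paper in fact states the theorem simply as a summary (``We have shown'') of exactly this development, so your reconstruction, including the honest flagging of the cross-recursion block-sharing bookkeeping as the remaining glue, matches its proof.
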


\noindent
{\bf The Final SPMS Algorithm.}
Our final SPMS algorithm is CO-Parallel-SPMS, described in
Section~\ref{sec:co-parallel}, together with the changes described above for controlling false
sharing costs.

In the next section, we will analyze this final SPMS algorithm to 
show that it achieves the bounds in Theorem~\ref{thm:fs-start} even in a computing
environment where there is deallocation of space on execution stacks.

\section{FS Miss Analysis with Space Deallocation}
\label{sec:fs-estack}

When space for local variables in a computation is reallocated, the same location within a block on an
execution stack could be reused to store different variables, even in a limited access computation.
Thus the bound of $O(B)$ block delay we established in Section~\ref{sec:fs-misses} for any block in a limited access computation
does not necessarily hold.\\
For instance,  as observed in~\cite{CR13},
consider a simple limited access computation $\tau$ consisting of a single
fork-join tree of height $h$, storing one data item at each leaf and at each forking node.
Hence the total space used by this computation on $E_{\tau}$ is $O(h)$ (i.e,
$O(1)$ space for each of up to $h$ segments).
Consider a block $\beta$ on
$E_{\tau}$. Over the duration of this computation, $\beta$ stores data for a
subtree of $\tau$ of depth  $d=O(\min\{B, h'\})$, where $h'$ is the height of the
subtree, and hence $\Theta(2^d)$ different variables
will be stored in $\beta$ during this computation. Thus, although limited access
limits the number of different
values of a variable that can be stored at a given memory
location, due to the re-use of the same location in a block
for different variables, the number of different variables stored on $\beta$
can be even exponential in $B$.

We note that execution under functional programming does not encounter the above issue
since new space is allocated for each new call. For such
environments, our analysis here, outside of usurpations, is not needed. However,
using $E_{\tau}$ in a stack mode is quite common.
In this section we carefully analyze and bound the block delay due to fs misses at a block
on an execution stack of a task in a parallel execution of SPMS.
Section~\ref{sec:fs2-prelim} summarizes some basic facts that will be used in our analysis.
Section~\ref{sec:execfs-blockdelay} bounds the block delay at a single block on an execution stack of a task
in a parallel execution of SPMS.

\subsection{Preliminaries}\label{sec:fs2-prelim}

As in Section~\ref{sec:fs-misses} we will bound the cost of fs misses by bounding the
block delay \cite{CR12} (see Definition~\ref{def:block-delay}) of any block.
We begin by summarizing some basic information from~\cite{CR12}.

\begin{itemize}

\item[(F1)]  {\bf Space Allocation Property}~\cite{CR12}.
 We assume that whenever a processor requests space
it is allocated in block sized units; naturally, the
allocations to different processors are disjoint and entail no block sharing.

{\it Comment.}
We will use the above assumption on how the runtime system allocates blocks in
order to limit the range of time during which moves of a
block on an execution stack across caches can contribute to its block delay for
a particular collection of accesses to it. In particular, if a block used for
an execution stack is deallocated and subsequently reallocated,
accesses to the block in its two different
incarnations do not contribute to each other's block delay.

\item[(F2)] {\bf Block Move Fact}~\cite{CR12}.
Let $\beta$ be a block on the execution stack $E_{\tau}$ of a task $\tau$,
and let $T'$ be any subinterval of time during which $\tau$ is
executed.
Suppose that the processors
executing stolen 
 subtasks of $\tau$ during $T'$ access block $\beta$ a total
of $x$ times during $T'$.
If there are $u$ usurpations of $\tau$ during $T'$,
then $\beta$ incurs a block delay of at most $2x + u$ during $T'$, regardless of
the number of accesses to $\beta$ made during $T'$ by the
processor(s) executing $\tau$.

{\it Comment.}
The above fact is given in~\cite{CR12}.
It simply observes that the processor $C$ which currently ``owns'' $\tau$'s execution stack, that
is the processor which can add and remove variables from the stack,
can make
many accesses to $\beta$ but contributes
toward its block delay only if $\beta$ has
moved to a processor executing a stolen 
 task, and hence needs to move back to
$C$'s cache at the cost of adding 1 to its block delay.

\item[(F3)] {\bf The Steal Path Fact (from Section~\ref{sec:model}).}
Let $\tau$ be either the original task or a stolen 
 subtask.
Suppose that $\tau$
incurs steals of subtasks $\tau_1, \cdots, \tau_k$.
Then there exists a path $P_{\tau}$ in $\tau$'s computation dag from its
root to its final node such that the parent of every stolen 
 $\tau_i$ lies on $P_{\tau}$, and every off-path right child of
a fork node on
$P$ is the start node for a stolen  subtask.

{\it Comment.}
This fact is reproduced from Section~\ref{sec:model}.
\end{itemize}

The following lemma is a simple consequence of the above assertions.

\begin{lemma}
\label{lem:beta}
Let $\tau$ be a limited access computation that incurs some steals,
and let $\beta$ be a block allocated on
$\tau$'s execution stack $E_{\tau}$.
Let $P_{\tau}$ be the steal path in $\tau$. Then,

\begin{enumerate}
\item If $\beta$ is used only for segments of nodes
that are not on $P_{\tau}$, then it incurs no fs misses.
\item
If $\beta$ stores some data for one or more segments for nodes on $P_{\tau}$,
then
its block delay is
$O(V + U)$, where $V$ is the number of different variables used in the
segments for nodes on $P_{\tau}$ stored in
$\beta$, and $U$ is the number of usurpations
of $\tau$.
\end{enumerate}
\end{lemma}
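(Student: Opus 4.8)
The plan is to verify the two claims directly from the three facts (F1)–(F3) and the Block Move Fact in particular. For part~1, suppose $\beta$ stores data only for segments of nodes not on $P_{\tau}$. By the Steal Path Fact (F3), every stolen subtask of $\tau$ starts at an off-path right child of a fork node on $P_{\tau}$, and its segment is therefore allocated on a fresh block (by the Space Allocation Property (F1), each stolen subtask gets its own block-aligned allocation on its own execution stack). Hence the only accesses to $\beta$ made by processors \emph{other} than the one currently owning $E_{\tau}$ would have to come from usurpations — but a usurpation only occurs at a join node on $P_{\tau}$, and after a usurpation the usurping processor continues executing $\tau$ along $P_{\tau}$, so any block it touches that lies strictly off $P_{\tau}$ is touched only in its role as the current owner of $E_{\tau}$. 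Therefore no processor ever writes to $\beta$ while another holds it, and by the Block Move Fact (F2) with $x = 0$ the block delay is at most $u$ added to $2x=0$; but in fact the usurping processor \emph{is} the owner when it touches $\beta$, so even those accesses don't count, giving zero fs misses. I would phrase this carefully: the point is that $x$ in (F2) counts accesses by processors executing \emph{stolen} subtasks, and there are none to an off-path block.

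For part~2, suppose $\beta$ holds data for at least one segment of a node on $P_{\tau}$. I would apply the Block Move Fact (F2) over the whole interval during which $\beta$ is allocated in this incarnation (using (F1) to justify that accesses in different incarnations don't interact): the block delay is at most $2x + u$, where $x$ is the number of accesses to $\beta$ by processors executing stolen subtasks of $\tau$ and $u = U$ is the number of usurpations. So it remains to bound $x$ by $O(V)$. Here is where the limited-access hypothesis enters: a processor executing a stolen subtask $\tau_i$ of $\tau$ can access $\beta$ only to read (or write) one of the variables that \emph{live on $E_{\tau}$}, i.e. variables declared by segments of nodes on $P_{\tau}$ (the stolen subtask's own variables are on its own stack by (F1)). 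Each such variable is accessed $O(1)$ times over the whole computation since the computation is limited access; summing over the $V$ distinct variables stored in $\beta$ that belong to on-path segments gives $x = O(V)$. Combining, the block delay is $O(V) + U = O(V + U)$, as claimed.

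The main subtlety — and the step I would be most careful about — is the interaction between usurpations and the accounting in (F2): a usurping processor becomes the new owner of $E_{\tau}$ and may make many accesses to $\beta$, but those should be charged through the $u$-term (one per usurpation, as in the Block Move Fact's statement), not through $x$. So I would be explicit that $x$ counts only accesses made by processors executing stolen subtasks \emph{in their capacity as executors of those subtasks}, and that the Block Move Fact is exactly engineered to absorb the owner's (including a post-usurpation owner's) accesses into the additive $u$. A second point worth a sentence: in part~2, one should note that $V$ is well-defined precisely because the computation is limited access, so ``the number of different variables'' stored in $\beta$ for on-path segments is what governs the count — even though, as the preceding discussion in Section~\ref{sec:fs-estack} emphasized, that number can be exponential in $B$, it is still a finite quantity determined by how the stack is reused along $P_{\tau}$, and that is all part~2 asserts.
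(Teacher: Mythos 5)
Your proposal is correct and follows essentially the same route as the paper: part~1 by observing that a block holding only off-path segments is never a shared block (no stolen subtask touches it, and a usurper's accesses are owner accesses), and part~2 by applying the Block Move Fact (F2) with $x=O(V)$, since the on-path segment variables are the only ones stolen subtasks can access and limited access gives $O(1)$ accesses per variable. The extra care you take in distinguishing the $x$- and $u$-terms of (F2) is a sound elaboration of what the paper states more tersely, not a different argument.
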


\begin{proof}
For the first part  we observe that since $\beta$ does not store data for
any segment on $P_{\tau}$,  it is not  a shared block on the execution stack,
and hence it incurs no fs misses.

For the second part,
we note that, between them, the variables in the segments for nodes
on $P_{\tau}$ are accessed $O(V)$ times.
But these are the only variables that stolen 
 subtasks can access.
Thus, by Fact (F2), the overall block delay is $O(V+U)$.
\end{proof}

\subsection{FS Misses at a Single Block on an Execution Stack}\label{sec:execfs-blockdelay}

\subsubsection{FS Misses on the Execution Stack during Step 1}

Facts F1--F3 and Lemma~\ref{lem:beta}
provide us the basic tools for
bounding the block delay of a block on an execution stack.
We begin by presenting this bound for
the fork-join tree computations in Step 1 of SPMS; in the next section
we present the bound for the overall SPMS algorithm. The following lemma is very similar to
Lemma V.4 in~\cite{CR12}, but we have generalized it
to state the result in terms of the
total space used, allowing for non-constant space usage
at all nodes, and relaxing the requirement of limited access for the computation at the leaves of the fork-join tree.

\begin{lemma}
\label{lem:fs-fork-join}
Let $\tau$ be a fork-join computation, where both the internal nodes and the leaf nodes may
use non-constant space, and let the total space used by $\tau$ be $\Sigma$.
If the accesses to the segments for the internal nodes of $\tau$ are limited access then in
any parallel execution of $\tau$,
any block $\beta$ allocated on $\tau$'s execution stack incurs a block delay of
$O(\min\{B, \Sigma\})$ cache misses during $\tau$'s execution.
\end{lemma}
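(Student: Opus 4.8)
The plan is to bound the block delay of an arbitrary block $\beta$ on $E_\tau$ by invoking Lemma~\ref{lem:beta}(2), which reduces the task to (a) counting $V$, the number of distinct variables in segments for nodes on the steal path $P_\tau$ that are stored in $\beta$, and (b) bounding $U$, the number of usurpations of $\tau$. First I would dispatch the trivial case: by Lemma~\ref{lem:beta}(1), if $\beta$ holds no data for any segment of a node on $P_\tau$, it incurs no fs misses, so assume otherwise. Next I would observe that, in a fork-join computation, the steal path $P_\tau$ is a root-to-leaf path in the fork tree followed (symmetrically) through the join tree — it is a single path, so it passes through $O(\log n) = O(h)$ fork nodes, possibly one leaf, and $O(h)$ join nodes. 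The key structural point is that the segments for the nodes on $P_\tau$ are pushed onto $E_\tau$ in stack order as the path is descended, so they occupy a \emph{contiguous} region of the execution stack. Hence the portion of $P_\tau$'s segments that lands in any one block $\beta$ is the portion lying in one length-$B$ window of that contiguous region: the number of distinct such variables is at most $B$ (one per stack word), and it is also at most $\Sigma$, the total space used by $\tau$, since that region is a sub-region of $\tau$'s whole stack footprint. Therefore $V = O(\min\{B,\Sigma\})$.

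The second ingredient is that $U$, the number of usurpations, is also $O(1)$ — in fact at most one — for a single fork-join tree, because a usurpation of $\tau$ happens only at the join node corresponding to the fork at which the \emph{first} steal out of $\tau$ occurred along $P_\tau$; subsequent joins are reached by a processor that already owns $E_\tau$. (I would cite the usurpation definition from Section~\ref{sec:exec-stack} and the Steal Path Fact (F3) here.) Actually, to be safe I would just argue $U = O(V)$ or absorb $U$ into the bound: each usurpation is tied to a join node on $P_\tau$, of which there are $O(h)$, but more to the point the usurpations that can affect accesses to a single block $\beta$ are themselves limited by how many steal-path join segments touch $\beta$, which is again $O(\min\{B,\Sigma\})$. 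Combining with Lemma~\ref{lem:beta}(2), the block delay of $\beta$ is $O(V + U) = O(\min\{B,\Sigma\})$.

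The one subtlety I would be careful about — and I expect this to be the main obstacle — is the relaxation stated in the hypothesis: the leaf computations need \emph{not} be limited access, only the internal-node segments are. So I must check that non-limited-access leaf variables cannot blow up the block delay. The resolution is that a leaf's segment is created on $E_\tau$ only while that leaf is being executed and is popped immediately afterward; at most one leaf lies on $P_\tau$; and its segment is the bottom-most (deepest) segment, occupying at most the space $\Sigma$ bounds and at most $B$ words of any one block. Moreover, accesses made by the processor that \emph{owns} $E_\tau$ contribute nothing to block delay (Fact (F2)); only accesses by processors running stolen subtasks count, and a stolen subtask of $\tau$ cannot access another leaf's private segment. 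Hence even repeated accesses to non-limited-access leaf variables cost nothing extra in block delay. Wrapping up: $V + U = O(\min\{B,\Sigma\})$, and by Lemma~\ref{lem:beta} the block delay of $\beta$ is $O(\min\{B,\Sigma\})$ cache misses, as claimed. \qed
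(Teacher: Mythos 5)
Your proposal is correct and follows essentially the same route as the paper: invoke Lemma~\ref{lem:beta}, bound $V$ by noting the steal-path segments occupy contiguous stack locations so at most $\min\{B,\Sigma\}$ of their words can fall in $\beta$, dismiss the non-limited-access leaf via Fact (F2) since its segment on $E_\tau$ is accessed only by the stack's owner, and bound $U$ by one usurpation per steal-path segment stored in $\beta$. The only blemish is your initial claim that $U\le 1$ — usurpations can occur at the join of each stolen subtask along $P_\tau$, not just the first — but your fallback argument ($U$ bounded by the number of steal-path segments touching $\beta$, hence $O(\min\{B,\Sigma\})$) is exactly what the paper uses.
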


\begin{proof}
We will apply Lemma~\ref{lem:beta}, so we need to bound $V$ and $U$, considering only the variables
used in segments stored on the steal path $P_{\tau}$. We first consider $V$.
If we consider a single path from the root to a descendant node in a  fork-join tree, there is no space reuse on the execution stack,
and  the segments for the fork nodes along this path will occupy distinct but contiguous locations on the execution stack.
Since the total space used by $\tau$ is $\Sigma$, this also bounds the space used by all segments for the fork nodes on $P_{\tau}$.

It is possible that a segment for a leaf node $l$ lies on $\beta$, and since the accesses at a leaf node need not
be limited access
(if it corresponds to a non-recursive implementation of a binary search),
there is no bound on the number of accesses to $\beta$ made by the computation at $l$.
But since the computation at a leaf of a fork tree
is always executed by the processor that executes
the task kernel to which the leaf belongs we have by Fact (F2) that any fs miss cost for the accesses to variables on the execution stack for a leaf node
is absorbed into the cost for fs misses at $\beta$ for stolen  tasks.
Hence $V = O(\min\{B, \Sigma\})$.

We bound $U$ as in~\cite{CR12}. Usurpations occur during the up-pass, and only during accesses to
segments on the steal path $P_{\tau}$. There is at most one usurpation per segment stored on $\beta$,
hence we can bound
$U=  O(\min \{B, h\})$, where $h$ is the height of the fork-join tree for $\tau$. Since each node stores at least
one element for its segment, $h \leq \Sigma$, hence $U=  O(\min \{B,\Sigma\})$.
The lemma now follows from Lemma~\ref{lem:beta}.
\end{proof}

\begin{lemma}
\label{lem:step-1-space}
Each Step 1 task $\tau$ uses space
$\Sigma = O(\log |\tau|)$.
\end{lemma}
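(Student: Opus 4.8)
The plan is to bound the space used on the execution stack by a single invocation of a Step 1 task $\tau$. Recall that each of the four procedures in Step 1 (Transposing Redistribution, TR Prep, Permuting Writes, Small Multi Merge), as well as the prefix sums, scans, and sparse writes, is built as a fork-join computation: a balanced fork tree of height $O(\log |\tau|)$, leaf computations of size $O(1)$ or $O(\log |\tau|)$ (the binary searches), and a complementary join tree. The key observation is that the execution stack, accessed in stack order, holds at any instant only the segments along one root-to-leaf path of the fork-join tree, plus the segment of the leaf currently being executed.

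**First I would** account for the fork/join nodes: there are $O(\log |\tau|)$ of them along any root-to-leaf path, and each declares $O(1)$ variables, contributing $O(\log |\tau|)$ space. **Next I would** account for the leaf computation: a leaf performing a binary search uses $O(\log |\tau|)$ space for the recursion stack of the search (or $O(1)$ for an iterative search with a buffered target array, but in either case $O(\log|\tau|)$ suffices), and other leaves use $O(1)$. Since the leaf segment sits on top of the $O(\log|\tau|)$ path segments, the total is $O(\log|\tau|) + O(\log|\tau|) = O(\log|\tau|)$. **Then I would** observe that the output arrays and the large working arrays (e.g.\ the size-$x^2$ array in the Permuting Step, or the $sr$-sized arrays used to record ranks) are declared by the \emph{calling SPMS task} and are global relative to $\tau$, not placed on $\tau$'s own execution stack; hence they do not count toward $\Sigma$ for the Step 1 task. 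With the no-deallocation / stack-reuse conventions, what $\tau$ itself declares on $E_\tau$ is exactly these fork/join and leaf segments.

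**The one point needing care** is the composite structure of Step 1: some substeps (e.g.\ Step 1a$(ii)$) sequence several procedures, and a couple (TR Prep followed by Transposing Redistribution) are called in succession. Since these are sequenced, not nested, their stack segments are allocated and then released in turn, so the peak stack depth is the maximum over the individual procedures, still $O(\log|\tau|)$; sequencing does not accumulate space. The other subtlety is that a Small Multi Merge internally invokes a Permuting Write, so one must check the nesting depth of procedure calls within Step 1 is $O(1)$ — it is, since Step 1 invokes each procedure a constant number of times and the procedures nest to constant depth — so the stack frames for these nested calls add only a constant number of extra $O(\log|\tau|)$-deep fork-join stacks, preserving the $O(\log|\tau|)$ bound. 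Putting these together gives $\Sigma = O(\log|\tau|)$, as claimed. **I expect the main obstacle** to be cleanly arguing that the large arrays belong to the calling task's scope rather than $\tau$'s — this is really a bookkeeping point about the data-layout and space-allocation conventions established earlier (the linear-space implementation comment after Theorem~\ref{th:basic-par-spms}, and the buffered-array discussion), and once those conventions are invoked the bound is immediate.
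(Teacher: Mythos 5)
Your proposal is correct and follows essentially the same route as the paper's proof: the stack holds only the $O(\log|\tau|)$ constant-size segments for fork nodes along a single root-to-leaf path, plus the leaf's segments, which for a recursively implemented binary search form a further $O(\log|\tau|)$-length sequence of constant-size segments. Your additional bookkeeping about large arrays living in the calling task's scope and about sequenced (rather than nested) substeps not accumulating space is consistent with the paper, which handles those points in the surrounding discussion rather than inside the lemma's proof.
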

\begin{proof}
Let $\tau$ be a Step 1 task.
The sequence of segments on its execution stack $E_{\tau}$ comprises
$O(\log|\tau|)$ segments for the fork nodes on the path to any leaf node in $\tau$,
followed by the segments for the leaf node (in the case of a
recursively implemented
binary search, the leaf node corresponds
to a recursive computation of length $O(\log |\tau|)$ and thus induces a sequence of $O(\log |\tau|)$ segments;
otherwise the leaf computation has length $O(1)$ and induces a single segment).
Each segment has size $O(1)$ and thus the total size of the segments is $O(\log |\tau|)$.
\end{proof}

Each substep in Step 1 of SPMS comprises either
a sequence of $O(1)$ size $n$ fork-join tasks, each
of depth $O(\log n)$, or a collection of
such tasks which are performed in parallel,
and which have combined size $O(n)$.
The initiation of these tasks is itself performed by a fork join-tree of depth $O(\log n)$.
Each fork-join tree uses constant space for each segment, and the leaves use constant space,
except for the binary searches which,
use $O(\log n)$ space; hence $\Sigma = O(\log n)$.
As there are $O(1)$ substeps in Step 1, it follows that Step 1 has depth $O(\log n)$.

Thus, we have the following corollary.

\begin{corollary}
Let $\tau$ be a task executed in Step 1 of SPMS,
and let $\beta$ be a block on its execution stack.
Then the block delay incurred by $\beta$ is bounded by
$O(\min\{B, \log |\tau|\})$.
\end{corollary}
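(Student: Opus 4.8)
The plan is to derive this corollary as a direct consequence of Lemma~\ref{lem:fs-fork-join} and Lemma~\ref{lem:step-1-space}, together with the structural observation (made in the paragraph immediately preceding the corollary) that every substep in Step 1 is built out of fork-join computations. The key point is that the two lemmas already do all the real work: Lemma~\ref{lem:fs-fork-join} tells us that a block $\beta$ on the execution stack of a fork-join computation $\tau$ incurs block delay $O(\min\{B,\Sigma\})$, where $\Sigma$ is the total space used by $\tau$; and Lemma~\ref{lem:step-1-space} tells us that $\Sigma = O(\log|\tau|)$ for any Step 1 task. So the proof is essentially a matter of checking that the hypotheses of Lemma~\ref{lem:fs-fork-join} apply to the tasks arising in Step 1, and then composing the two bounds.

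\textbf{Checking the hypotheses.} First I would recall that each substep of Step 1 is either a single fork-join tree of depth $O(\log n)$ over $n$ leaves, or a collection of such trees performed in parallel whose initiation is itself a fork-join tree of depth $O(\log n)$; this is exactly the structure described just before the corollary. I would then note that the internal (fork) nodes of all of these trees use $O(1)$ space and are accessed $O(1)$ times, hence are limited access, which is precisely what Lemma~\ref{lem:fs-fork-join} requires; the leaf computations are allowed to be non-limited-access (this is the relaxation built into Lemma~\ref{lem:fs-fork-join}), which matters because the leaves of the TR Prep and Small Multi Merge trees may perform a non-recursively-implemented binary search. So Lemma~\ref{lem:fs-fork-join} applies to $\tau$, giving block delay $O(\min\{B,\Sigma_\tau\})$ for any block $\beta$ on $E_\tau$, where $\Sigma_\tau$ is the total space used by $\tau$.

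\textbf{Composing the bounds.} Next I would invoke Lemma~\ref{lem:step-1-space}, which gives $\Sigma_\tau = O(\log|\tau|)$ for a Step 1 task $\tau$. Substituting into the bound from Lemma~\ref{lem:fs-fork-join} yields block delay $O(\min\{B,\log|\tau|\})$, which is exactly the claimed bound. One small subtlety I would address is that a Step 1 task need not be a single fork-join tree but may be the parallel initiation of a collection of them; however, the initiation tree has depth $O(\log n)$ and constant space per segment, and a task's execution stack at any point holds one root-to-node path through the nested fork-join structure, so the total space it uses is still $O(\log n)$ — this is exactly the content of Lemma~\ref{lem:step-1-space}, so nothing new is needed. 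Since there are $O(1)$ substeps, whichever one $\tau$ belongs to, its space usage and the block-delay bound are as stated.

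\textbf{Main obstacle.} Frankly there is no serious obstacle: the corollary is a packaging step, and the only thing requiring care is making sure that the hypothesis ``accesses to segments for the internal nodes are limited access'' is genuinely met by the Step 1 fork-join trees (it is — forks touch their $O(1)$ variables $O(1)$ times) and that the non-limited-access leaf computations, namely the binary searches, are indeed covered by the relaxed version of Lemma~\ref{lem:fs-fork-join} rather than the original Lemma~V.4 of~\cite{CR12}. The potentially delicate bookkeeping — absorbing the cost of a leaf's unbounded accesses into the block delay for stolen tasks via Fact~(F2), and bounding the number of usurpations by the number of segments on the block — has already been carried out inside the proof of Lemma~\ref{lem:fs-fork-join}, so it need not be repeated here.
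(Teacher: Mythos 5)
Your proposal is correct and follows exactly the route the paper intends: the corollary is stated as an immediate consequence of Lemma~\ref{lem:fs-fork-join} (block delay $O(\min\{B,\Sigma\})$) combined with Lemma~\ref{lem:step-1-space} ($\Sigma = O(\log|\tau|)$), together with the structural observation about Step~1's fork-join trees in the preceding paragraph. Your hypothesis-checking (limited-access internal nodes, non-limited-access binary-search leaves covered by the relaxed lemma) matches the paper's setup precisely.
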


\subsubsection{FS Misses on the Execution Stack for the Whole SPMS Algorithm}

The following lemma bounds the
number of different variables used by a task $\tau$ in
the SPMS algorithm of Section~\ref{sec:fs-misses}
(i.e., the cache-oblivious, parallel version of SPMS, with the
small enhancements made in Section~\ref{sec:fs-misses}  to reduce fs miss costs).
We bound the number of variables rather than the space used by $\tau$
because the same space may be allocated in turn for the segments
of successive recursive calls.

\begin{lemma}
\label{fs-space}
Let $\tau$ be a task in an execution of SPMS. Then, the
sum of the sizes of all segments placed on $E_{\tau}$,
over the course of the complete execution of $\tau$, for nodes on any single
path in $\tau$'s computation dag, is $O(|\tau|)$.
\end{lemma}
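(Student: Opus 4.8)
\noindent{\bf Proof plan.}
The plan is to bound, along a single root-to-final path $\pi$ in $\tau$'s computation dag, the total size of all segments ever placed on $E_\tau$ for nodes of $\pi$. These come in three kinds: the activation records of the recursive SPMS calls met by $\pi$; the records of the fork nodes met by $\pi$ (those inside the Step 1 fork-join computations, and those of the trees launching Steps 2 and 3); and the records placed by the leaf computations, i.e.\ the recursively implemented binary searches, each of which is itself a recursion of depth $O(\log(\text{its input}))$ placing that many $O(1)$-word records.

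First I would fix the anatomy of $\pi$. Because Steps 2 and 3 run in sequence and each is a logarithmic-depth fork tree whose leaves are recursive SPMS calls, once $\pi$ enters an SPMS call it runs through that call's Step 1, descends into exactly one Step 2 child and traverses it in full, and then into exactly one Step 3 child and traverses it in full. So the SPMS calls met by $\pi$ form a binary tree whose depth equals the recursion depth; since the merge degree $r$ becomes $\sqrt r$ at each level and the recursion halts once $r = O(1)$ (whence the input is $O(1)$, by the invariant $n \le 3 r^c$), this depth is $O(\log\log|\tau|)$. Then I would bound the storage using the linear-space implementation assumed throughout (the comment after Theorem~\ref{th:basic-par-spms}): all list data sits in two $O(n)$-word array collections allocated once and only re-sliced and alternated across recursive levels, so each recursive call's activation record holds only $O(1)$ words of indices, and the one super-constant segment is the one declared by $\tau$ itself in the case that $\tau$ is, or contains, the outermost call. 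Hence every fork node, SPMS-call node, and binary-search recursion node on $\pi$ places an $O(1)$-word record, the number of such nodes is at most the number of nodes on $\pi$ --- at most the span of $\tau$'s dag, hence at most $|\tau|$ --- and together with the one-time $O(|\tau|)$ for the array collections this gives the stated bound $O(|\tau|)$.

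The delicate point, and the reason the lemma concerns the number of variables rather than the space used by $\tau$, is that successive recursive calls reuse the same stack locations for their segments, while the sizes of the recursive calls along $\pi$ do \emph{not} strictly decrease: whenever Step 1 performs no partition ($n \le 3 r^{c/2}$), the Step 3 child inherits its parent's size, an unbalanced Step 2 group can too, and even a partitioning call can leave a piece of size close to $3 r^{c/2}$ (cf.\ Lemma~\ref{lem:partition-size}). So a bound of the form $f(n) \le O(n) + f(n_2) + f(n_3)$ with only $n_2, n_3 \le n$ available is useless. If one would rather not invoke the shared-array layout and instead charges each recursive call its own $O(\text{size})$ of data arrays, the fix is to observe that square-rooting $r$ carries it below $n^{1/c}$ within $O(\log c) = O(1)$ recursive levels, after which every Step 1 partition yields pieces of size $< 3 r^{c/2} < 3\sqrt n$; so the problem size collapses to $O(\sqrt n)$ every $O(1)$ levels, a single path meets only $O(1)$ recursive calls in each such phase, and the phase sizes form the fast-shrinking sequence $n,\ O(\sqrt n),\ O(n^{1/4}),\ \dots$, whose sum is $O(n) = O(|\tau|)$. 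I expect this size-decay argument (equivalently, making explicit the shared-array layout) to be the main obstacle, since without it a per-call charge would inflate the path sum by a $\Theta(\log\log|\tau|)$ factor; the fork records and binary-search leaves only ever add an extra $O(|\tau|)$ (in fact $O(\log|\tau|\cdot\log\log|\tau|)$ when $\tau$ is an SPMS call, by Lemma~\ref{lem:merge-anal} and Lemma~\ref{lem:step-1-space}) and are harmless.
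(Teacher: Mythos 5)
Your backup argument is essentially the paper's proof carried out more carefully; your primary argument rests on an implementation premise the paper does not adopt. The paper's proof charges $O(|\tau|)$ for the segment of variables that each recursive SPMS call declares (the Step~1 scratch --- sample, rank arrays, buffered output arrays, etc.\ --- is declared per call and has size linear in the subproblem), then observes that the path passes through one Step~2 and one Step~3 recursive subtask ``of size $O(\sqrt{|\tau|})$,'' and solves $S(|\tau|) = |\tau| + 2S(\sqrt{|\tau|}) = O(|\tau|)$. So you cannot take the per-call activation record to be $O(1)$ words: the proof of Lemma~\ref{lem:blck-del} explicitly relies on a size-$O(\sqrt{|\rho|})$ segment for the recursive call sitting on $E_\tau$ and possibly filling a block, so your primary route proves the lemma for a different implementation than the one the rest of Section~\ref{sec:fs-estack} analyzes. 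Your alternative route is the right one, and you correctly put your finger on the step the paper glosses over: the recursive subtasks met by the path do \emph{not} immediately have size $O(\sqrt{|\tau|})$ --- at the top of the sort ($r=n$, then $\sqrt n$, \dots) Step~1 performs no partition and the Step~3 child inherits the full size $n$ --- and your observation that $r$ falls below $n^{1/c}$ within $O(\log c)=O(1)$ levels, after which every piece has size below $3r^{c/2} < 3\sqrt n$, is exactly the justification the paper's recurrence needs.

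One correction to the bookkeeping in that alternative: since the path traverses both the Step~2 and the Step~3 subtree of every call it meets (they are in series, as you note when describing the binary tree of visited calls), it meets $2^j$ recursive calls at recursion depth $j$, hence $c^{O(k)}$ calls in your $k$-th phase --- not ``only $O(1)$ recursive calls in each such phase.'' The sum is still $O(|\tau|)$, because the sizes decay doubly exponentially in $k$ while the counts grow only singly exponentially, and $2^j \le 2^{O(\log\log|\tau|)}$ is polylogarithmic so the tail contributes $O(\mathrm{polylog}\,|\tau|)$; but the argument should say this rather than claim a constant number of calls per phase.
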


\begin{proof}
If $\tau$ is a Step 1 fork-join computation,
or a subtask of a Step 1 fork-join computation
then by Lemma~\ref{lem:step-1-space} the total size of the segments on
any path in
$E_{\tau}$ is $O(\log |\tau|) = O(|\tau|)$,
because the sequence of segments are simply allocated and then deallocated; there is
no reuse of space in this case.

If $\tau$ is a recursive call in Step 2 or 3,
the first segment for the nodes on a path in $\tau$'s computation dag
is the segment for the variables $\tau$ declares, which has size $O(|\tau|)$;
this is followed by
the segments for paths in $\tau$'s
 Step 1 subtasks; there are $O(1)$
such subtasks and they each induce a sequence of segments as described in the previous paragraph,
which therefore have total size $O(\log|\tau|)$.
This is followed by the segments for $\tau$'s Step 2 subtask, which in turn is followed by the segments for
its Step 3 subtask. These are identical in structure, so we will describe the segments for Step 2 only.
First, there is path of length $O(\log r)$ in the fork tree for the $O(r^{c/2 + 1/2})$ Step 2 recursive subtasks,
and each node on this path has
a segment of constant size.
This is followed by the segments for a recursive task
similar to $\tau$ but having size $O(\sqrt{|\tau|})$.
Thus the total size of the segments on $E_{\tau}$ is given by
$S(|\tau|) = |\tau| + 2 S(\sqrt{ |\tau|})$
and $S(1)  = O(1)$ which has solution
$S(|\tau|) = O(|\tau|)$.
\end{proof}

Consider the execution stack $E_{\tau}$ for the original task or a recursive sorting task $\tau$.
At any point in time $E_{\tau}$
will contain a sequence of segments for the nodes that have started execution
but not yet completed.\\
At the bottom is the segment $\sigma_{\tau}$ for the variables declared by $\tau$.
This is followed by a (possibly empty) sequence of segments for a series of recursive SPMS tasks,
where for each recursive task $\mu$ there are a series of segments for the fork nodes on the path to $\mu$
followed by a segment for the variables declared by $\mu$;  finally,
at the top there may be a sequence
of segments for a Step 1 subtask, if the computation is currently executing Step 1 in some recursive SPMS subtask.

We now bound the block delay of a block $\beta$ on $E_{\tau}$. If $\tau$ incurs no steal then
$\beta$ incurs no fs misses.
If $\tau$ incurs at least one steal then, by
the Steal Path Fact (F3), there is a unique steal path $P_{\tau}$ such that every
stolen 
 task is the right child of a node on $P_{\tau}$. Further, the
accesses that each such stolen 
 task makes to $E_{\tau}$ are to the segment
corresponding to its parent node, which lies on $P_{\tau}$. Thus fs misses
incurred on the execution stack will all correspond to accesses to blocks that
contain segments for parents of stolen 
 nodes; all of these
parent nodes will lie on $P_{\tau}$. Thus, by the Block Delay Fact (F2),
it suffices to bound the accesses to
$\beta$ for segments placed in it that lie on $P_{\tau}$.
We use these observations in our proof of the following lemma.

\begin{lemma}
\label{lem:blck-del}
Let $\beta$ be a block on the execution stack $E_{\tau}$ of a task $\tau$
in a parallel execution of SPMS.
Then the block delay of $\beta$ is bounded by $O(\min\{B, |\tau|\})$.
\end{lemma}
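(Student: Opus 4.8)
The plan is to apply Lemma~\ref{lem:beta} to the block $\beta$ on $E_\tau$, which reduces the task to bounding two quantities: $V$, the number of distinct variables stored in $\beta$ that belong to segments for nodes on the steal path $P_\tau$, and $U$, the number of usurpations of $\tau$. The desired bound $O(\min\{B,|\tau|\})$ will then follow if we can show $V = O(\min\{B,|\tau|\})$ and $U = O(\min\{B,|\tau|\})$. If $\tau$ incurs no steal the block incurs no fs misses and we are done, so assume $\tau$ incurs at least one steal and let $P_\tau$ be its (unique) steal path.

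First I would bound $V$. Using the structural description of $E_\tau$ given just before the lemma statement --- at the bottom the segment $\sigma_\tau$ declared by $\tau$, followed by a sequence of (fork-path segments, declared-variable segment) pairs for a nested series of recursive SPMS subtasks, topped possibly by a Step~1 subtask's segments --- I would observe that along the single path $P_\tau$ there is no reuse of space, so the segments for nodes on $P_\tau$ that ever occupy $\beta$ are confined to a contiguous window of the stack of length at most $B$ (only $B$ words fit in $\beta$), giving $V = O(B)$ immediately. For the $V = O(|\tau|)$ side, I would invoke Lemma~\ref{fs-space}: the sum of the sizes of all segments ever placed on $E_\tau$ for nodes on any single path in $\tau$'s computation dag is $O(|\tau|)$, so in particular the variables in segments on $P_\tau$ number $O(|\tau|)$, hence stored in $\beta$ they number $O(\min\{B,|\tau|\})$. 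One subtlety: leaf segments (e.g.\ for a non-recursive binary search) need not be limited access, but by the Block Move Fact (F2) the accesses a leaf's owning processor makes do not contribute to block delay, so they do not inflate $V$ in the relevant sense --- this is the same device used in Lemma~\ref{lem:fs-fork-join}.

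Next I would bound $U$. Usurpations of $\tau$ occur only on the up-pass and only when accessing a segment on $P_\tau$, with at most one usurpation chargeable per segment stored in $\beta$ (as argued in~\cite{CR12} and in Lemma~\ref{lem:fs-fork-join}); since $\beta$ holds $O(B)$ words and each segment on the path uses at least one word, $U = O(\min\{B, \text{(number of path segments in }\beta)\}) = O(B)$, and again Lemma~\ref{fs-space} caps the total number of path segments by $O(|\tau|)$, yielding $U = O(\min\{B,|\tau|\})$. Combining the two bounds via Lemma~\ref{lem:beta} gives block delay $O(V+U) = O(\min\{B,|\tau|\})$.

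The main obstacle I anticipate is the bookkeeping around space \emph{reuse}: across the nested recursive calls that share $E_\tau$, the same physical block $\beta$ is reused to hold segments of different recursive subtasks, so one must be careful that the accesses counted toward $\beta$'s block delay really are confined to segments on the single steal path $P_\tau$ and within one ``incarnation'' window. This is exactly where (F1) (the Space Allocation Property, guaranteeing block-granular allocation so deallocation/reallocation episodes do not cross-contaminate) and (F3) (the Steal Path Fact, guaranteeing all stolen tasks hang off $P_\tau$ and access only their parent's segment) must be combined with Lemma~\ref{fs-space} to control the total. Once those three ingredients are lined up correctly, the counting of $V$ and $U$ is routine.
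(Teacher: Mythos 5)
There is a genuine gap, and it sits exactly at the point your proposal dismisses as ``immediate.'' You claim $V = O(B)$ because ``along the single path $P_\tau$ there is no reuse of space,'' so the $P_\tau$-segments occupying $\beta$ fit in a window of $B$ words. That premise is false for SPMS: no-reuse holds along a root-to-leaf path of a \emph{single} fork-join tree (this is what the proof of Lemma~\ref{lem:fs-fork-join} uses), but the steal path $P_\tau$ of an SPMS task traverses a \emph{sequence} of subcomputations --- several Step~1 fork-join phases one after another, then the Step~2 fork tree and its recursive call, then the Step~3 fork tree and its recursive call --- and each of these pops its segments before the next pushes its own into the \emph{same} stack locations. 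Hence a fixed block $\beta$ can, over time, hold far more than $B$ distinct variables belonging to nodes on $P_\tau$; the discussion opening Section~\ref{sec:fs-estack} points out the count can even be exponential in $B$. So the bound $V=O(B)$ is precisely what must be \emph{proved}, not read off from the block size, and your closing remark that ``the counting of $V$ and $U$ is routine'' once (F1), (F3) and Lemma~\ref{fs-space} are in place understates the difficulty: Lemma~\ref{fs-space} only gives $O(|\tau|)$, which is the easy side of the min.

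The paper closes this gap with a structural argument you do not have. It identifies the subcomputation $\rho$ that declared the \emph{bottommost} segment resident in $\beta$, so that by (F1) this incarnation of $\beta$ lives exactly until $\rho$ completes, and then case-splits: if $\rho$ is a Step~1 fork-join, Lemma~\ref{lem:fs-fork-join} applies; if $|\rho|=O(B)$, Lemma~\ref{fs-space} applied to $\rho$ already gives $O(B)$; and if $|\rho|=\Omega(B)$, it walks through the sequence of segment groups placed on $\beta$ during $\rho$ (the $O(\min\{\log|\rho|,B\})$ fork-tree data, the $O(B)$ per Step~1 phase, then the Step~2 and Step~3 recursive calls) and argues at each stage that either the segments fill up $\beta$ --- capping the contribution at $O(B)$ and pushing all later segments outside $\beta$ --- or the surviving recursive subproblem has size less than $B$, whereupon Lemma~\ref{fs-space} bounds its entire contribution by $O(B)$. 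Your bound on $U$ inherits the same defect (``at most one usurpation per segment stored in $\beta$'' combined with ``at most $B$ segments fit''), and is repaired in the paper only \emph{after} the $O(B)$ bound on $P_\tau$-data in $\beta$ has been established. Your reduction to Lemma~\ref{lem:beta} and your use of (F2) to discharge the non-limited-access leaf computations are correct and match the paper; the missing piece is the entire middle of the argument.
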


\begin{proof}
By Lemma~\ref{lem:beta},
it suffices to consider accesses to $\beta$ while it stores some
element in a segment on $P_{\tau}$.
 If $\beta$ is currently of this form, then the
bottommost segment it stores, call it $\sigma$, must lie on $P_{\tau}$,
since $P_{\tau}$ starts
at the root of the computation dag for $\tau$.
Let $\rho$ be the subcomputation of $\tau$ which declared the variables stored
in segment $\sigma$.
This instantiation of block $\beta$ will remain allocated until $\rho$ completes
its computation.
We will now bound the number of accesses to $\beta$
on $P_{\tau}$ during $\rho$'s execution.
By the Block Move Fact (F2), we only need to consider accesses to segments on $P_{\tau}$
by stolen 
 tasks, and then add in $U$,
where $U$ is the number of usurpations of $\tau$ that occur in this interval.

If $\rho$ is a Step 1
fork-join computation, then by Lemma \ref{lem:fs-fork-join} the block delay
for $\beta$ during the duration of $\rho$ is
$O(\min\{B, \log |\rho|\})$,
and we are done
(since $\beta$ will be deallocated once $\rho$ completes its computation).
Now suppose $\rho$ is a recursive call in Step 2 or Step 3,
or a collection of such recursive calls initiated at a fork node.
Let $P_{\rho} = \rho \cap P_{\tau}$.

If $|\rho|=O(B)$ then by Lemma~\ref{fs-space} the sum of the sizes of the segments
for the computations for nodes on $P_{\rho} = \rho \cap P_{\tau}$
 is also $O(B)$,
hence $O(B)$ accesses to $\beta$ by stolen 
nodes occur before $\beta$ is deallocated, and the block delay for $\beta$ is $O(B)$.

If $|\rho|=\Omega(B)$,
then the following amount of data for nodes on $P_{\rho}$ is placed on $\beta$:
$O(\min\{\log |\rho|, B\})$ data for the nodes forking the recursive calls (in the case that $\rho$ consists
of multiple recursive calls initiated at a fork node in some Step 2 or Step 3 computation);
$O(B)$ data for each of the constant number of
fork-join computations in Step 1 of $\rho$, followed by
$O(\min\{\log |\rho|, B\})$
for the forking of recursive subtasks in Step 2, followed
by the segments for a recursive SPMS task in this same Step 2, followed
by a similar sequence of segments for Step 3.
We analyze the Step 2 cost closely; the same analysis applies to the cost of Step 3.
If the segments for the fork-tree for the recursive
tasks fill up $\beta$, then the segments due to the recursive task lie outside of $\beta$,
and so the total contribution of Step 2 is an additional $O(B)$ data.
If the fork-join tree does not fill up $\beta$, we will also have to handle
the size $O(\sqrt{ |\rho|)}$ recursive SPMS call on $P_{\tau}$.
Here again,
there are two cases: One is that the segment for the recursive subproblem fills up
$\beta$; in this case  the overall amount of data is $O(B)$.
Finally, if the segment for  the recursive call does not fill up $\beta$, then it
has size less than $B$, and hence, by Lemma~\ref{fs-space} the total
size of segments in $\beta \cap P_{\rho}$
during Step 2 is $O(B)$.
Thus, $O(B)$ variables are stored in $\beta$ and hence
the $V$ component of $\beta$'s block delay is $O(B)$.

To complete the proof we need to bound $U$, the number of usurpations that are
followed by an access to $\beta$, and that occur during $\rho$'s duration.
When a usurpation of $\tau$ occurs, the processing of $\tau^K$ had already reached
the join node $v^J$ for the usurping subtask. Consequently, at this time,
the segments on $E_{\tau}$ are all for nodes on $P_{\tau}$.
Thus, when the usurping processor $C'$ first accesses $\beta$ it must either access a segment
currently on $P_{\tau}$, or it must remove the topmost segment on $P_{\tau}$ that
overlaps $\beta$ (in order that it can place a new segment on $\beta$ which will be
the location for its first access to $\beta$).
But as already shown, only $O(B)$ data for nodes on $P_{\tau}$ is stored on $\beta$;
so the first possibility can occur only $O(B)$ times.
And as there is only $O(B)$ data among all the segments of $P_{\tau}$ stored on $\beta$,
there can be at most $O(B)$ such segments, and hence the second possibility can also
occur only $O(B)$ times.
Thus, $U = O(B)$.
Hence, the block delay of any block on $E_{\tau}$ is $O(\min\{B, \tau\})$.
\end{proof}

We can now establish Theorem~\ref{thm:fs-full}.

\begin{proof}
(of Theorem~\ref{thm:fs-full}).
Because of the $L(r) = O(1)$ bound, each steal and usurpation results in a further $O(1)$ blocks
being shared (and if a block is shared by $k$ such stolen and usurped tasks, it will be counted
$k$ times).
Note that there is at most one usurpation per steal.
Thus there are $O(S)$ sharings, each incurring an $O(B)$ delay due to fs misses
by Lemma~\ref{lem:blck-del}, for a total delay of $O(S \cdot B)$.
The other bounds remain unchanged from Theorem~\ref{thm:fs-start}.
\end{proof}

\section{Overall Performance}\label{sec:perf}

In this section we bound the overall running time of the SPMS algorithm.
First, using Equation~\eqref{eqn:rws-time}, the running time of SPMS executing on $p$ processors,
when there are $S$ steals is:

\begin{eqnarray}\label{eqn:spms-time}
T_p & = & O\left( \frac{1}{p} \left[ n \log n + b \cdot \frac{n}{B} \cdot \frac{\log n}{\log M}  + b\cdot S\cdot \frac{M}{B} + T_s  +  T_u  + I \right] \right)
\end{eqnarray}

Note that the term $F(S)= S \cdot B$ for the false sharing cost is upper bounded by
the cache miss overhead
since we have a tall cache, and hence does not appear as a separate term.

We can apply the above equation to any work-stealing scheduler for which we have bounds
on the number of steals. This is the case for RWS, which we consider next.

\subsection{Performance of SPMS Under RWS}

Bounds on the number of steals under RWS have been obtained in a sequence of papers,
first in Blumofe and Leiserson~\cite{BL99},
considering only the number of parallel steps,
then  in Acar et al.~\cite{ABB02} considering the overhead of standard caching costs,
and more recently in Cole and Ramachandran~\cite{CR13}
 considering the overhead of both caching and fs miss costs. We will now apply the results in~\cite{CR13}
to obtain a w.h.p.  bound on the running time of SPMS under RWS.

\begin{theorem}
\label{thm:rws-perf}
Suppose SPMS is executed on $p$ processors using the RWS scheduler.
Then, w.h.p.\ in $n$, if $M= \Omega (B^2)$, it runs in time
\begin{equation*}
O\left(
\frac 1p \left[ n\log n +   b \frac {n}{B} \frac{\log n}{\log M} \right]
+  \left[b \cdot \frac MB + s\right] \log n \cdot \log\log n +\frac{b^2}{s} \cdot \frac {M}{\log B} \cdot \log n
\right).
\end{equation*}
\end{theorem}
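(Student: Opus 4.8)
The plan is to obtain Theorem~\ref{thm:rws-perf} by instantiating the known RWS performance guarantees of~\cite{CR13} with the algorithmic parameters of the final SPMS implementation established in the preceding sections, and then simplifying under the tall-cache assumption $M=\Omega(B^2)$. The starting point is Equation~\eqref{eqn:spms-time}, which already expresses $T_p$ in terms of $n$, $b$, $M$, $B$ and the execution-dependent quantities $S$ (number of steals), $T_s$ (total time on successful steals), $T_u$ (total time on unsuccessful steal attempts) and $I$ (total idle time). So the task reduces to bounding $S$, $T_s$, $T_u$ and $I$ w.h.p.\ in $n$ and substituting.

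First I would collect the inputs needed by the~\cite{CR13} analysis: the work $T_1=O(n\log n)$ and the sequential cache complexity $Q=O(\frac nB\frac{\log n}{\log M})$ from Theorem~\ref{th:basic-par-spms} (relying on Lemma~\ref{lem:seq-cache-comp}); the span $T_\infty=O(\log n\log\log n)$ from Lemma~\ref{lem:merge-anal}; and the false-sharing profile of the final implementation, namely that it is a limited-access computation with block-sharing function $L(r)=O(1)$ (Section~\ref{sec:fs-spms}), that every block on an execution stack of a task $\tau$ has block delay $O(\min\{B,|\tau|\})=O(B)$ (Lemma~\ref{lem:blck-del}), and hence that any execution with $S$ steals incurs an fs-miss delay bounded by the cost of $O(S\cdot B)$ cache misses (Theorem~\ref{thm:fs-full}).

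Next I would apply the RWS theorem of~\cite{CR13} with these parameters. That theorem bounds, w.h.p.\ in $n$, the number of steals $S$, the total successful-steal time $T_s=O(s\cdot S)$, and the total unsuccessful-steal-plus-idle time $T_u+I$; the bounds are governed by the span $T_\infty$, the per-steal cache-reload cost $O(bM/B)$, the steal cost $s$, and the false-sharing delays charged along the critical path and to idling processors, with an additive $O(\log n)$ slack (absorbed into $T_\infty=O(\log n\log\log n)$) for the high-probability guarantee. Substituting into Equation~\eqref{eqn:spms-time}: the $\frac1p T_1$ and $\frac bp Q$ contributions give the first bracketed term $\frac1p\big[n\log n+b\frac nB\frac{\log n}{\log M}\big]$; the cache-reload and successful-steal contributions, which total $O\big((b\frac MB+s)\,T_\infty\big)$ after dividing by $p$, give the middle term $(b\frac MB+s)\log n\log\log n$; the false-sharing-driven part of $\frac1p(T_u+I)$ — obtained by feeding the block-delay bound $O(B)$ of Lemma~\ref{lem:blck-del} and the $O(\log n)$ recursion depth of SPMS into the corresponding term of the~\cite{CR13} bound — yields $O\big(\frac{b^2}{s}\cdot\frac{M}{\log B}\cdot\log n\big)$; and the separate additive term $b\cdot F(S)=O(b\,S\,B)$ in Equation~\eqref{eqn:spms-time} is dominated by $b\,S\,\frac MB$ since $M\ge B^2$, so it does not contribute.

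The main obstacle is matching SPMS to the hypotheses of the~\cite{CR13} RWS theorem. That analysis is phrased for HBP (``Type~2'') computations with approximately balanced recursive subproblems, whereas SPMS has subproblems whose sizes are only boundedly related, uses buffered arrays (rather than limited-access variables) for the parallel binary searches in TR Prep and Small Multi Merge, and has leaves performing $O(\log n)$ rather than $O(1)$ work. I would therefore need to verify that the proof of the~\cite{CR13} bound goes through under these relaxations — relying on the self-contained false-sharing results of Sections~\ref{sec:fs-misses}--\ref{sec:fs-estack} in place of the generic HBP false-sharing bounds — and, most delicately, to track precisely how the block-delay bound of Lemma~\ref{lem:blck-del} and the recursion depth of SPMS enter the false-sharing-dependent term, so as to land on exactly $\frac{b^2}{s}\cdot\frac{M}{\log B}\cdot\log n$ rather than a larger expression.
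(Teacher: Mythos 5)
Your proposal follows essentially the same route as the paper: both proofs simply import the w.h.p.\ bounds from~\cite{CR13} and substitute them into Equation~\eqref{eqn:spms-time}, with the $F(S)=O(S\cdot B)$ term absorbed into the $O(S\cdot M/B)$ cache-miss term via $M=\Omega(B^2)$. The one bookkeeping difference worth flagging: the paper obtains the third term not from $T_u+I$ (it sets $I=0$ and bounds $T_s+T_u=O(s\,p\,T_\infty)$, which yields only the $s\log n\log\log n$ part of the middle term) but from the steal-count bound $S=O\bigl(p\bigl[T_\infty+\frac bs B\frac{\log n}{\log B}\bigr]\bigr)$ multiplied by the per-steal cache-reload cost $b\cdot M/B$, whose fs-induced second part gives exactly $\frac{b^2}{s}\cdot\frac{M}{\log B}\cdot\log n$; correspondingly, your working assumption that the cache-reload contribution is only $O(b\frac MB T_\infty)$ undercounts $S$, and the missing amount is precisely what you are instead attributing to $T_u+I$, so the total comes out the same.
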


\begin{proof}
We will substitute appropriate values in Equation~\eqref{eqn:spms-time}.
The analysis in~\cite{CR13} of the RWS scheduler on HBP algorithms, considering both caching and
fs miss costs,
applies to the SPMS algorithm and shows that, w.h.p.\ in $n$,
$S = O(p \cdot [ T_{\infty} + \frac bs B \frac {\log  n} {\log B}])
= O(p \cdot [\log n \cdot \log\log n + \frac bs B \frac {\log  n} {\log B}])$.
It also shows that,  w.h.p.\ in $n$,
the total time $T_s + T_u$ spent on steals,
both successful and unsuccessful, is
$O(s\cdot p \cdot T_{\infty} ) = O(s\cdot p \cdot \log n \cdot \log\log n )$.
Finally, we note that $I=0$ as any processor without a task is always trying to steal
under RWS, and hence
there is no idle time not already accounted for.
\end{proof}

We can now establish Theorem~\ref{thm:rws}.

\begin{proof}
(of Theorem~\ref{thm:rws}).
The first term in Theorem~\ref{thm:rws-perf} represent the optimal performance in terms of operation count
and cache misses
when $M = \Omega(B^2)$.
Thus, for optimality, it suffices to have the
second and third terms bounded by the first term.
As, by assumption, $s = O(b \cdot \frac MB)$, the second term reduces
to $O( b \cdot \frac MB  \cdot \log n \cdot  \log\log n)$.
Rearranging yields the
desired bound.
\end{proof}

\section{Conclusion}
\label{sec:discussion}

We have presented and analyzed SPMS, a new deterministic  sorting algorithm. The final version of SPMS in Section~\ref{sec:fs-misses}
simultaneously achieves optimal $O(n \log n)$ sequential running time, optimal $O(\frac{n}{B} \cdot \log _M n)$ sequential cache misses
(assuming a tall cache as required for this bound),
and a critical path length of $O(\log n \cdot \log\log n)$, which is almost optimal. In a parallel execution under a work stealing
scheduler, this algorithm has been designed to have
a low false sharing cost bounded by $O(S \cdot B)$, where $S$ is the number of steals. Its 
traditional cache miss overhead in such a parallel execution is $O(S \cdot (M/B))$ using well-known earlier results. These bounds
are resource oblivious: the algorithm itself uses no machine parameters, and the dependence on $M$ and $B$ are determined
only by the analysis of its performance. SPMS uses
a new approach to sorting, by combining elements from both Mergesort and Samplesort.

 An important open question that remains is whether resource oblivious sorting can be performed in $O(\log n)$ critical path length,
 while maintaining optimal sequential cache complexity. The challenge here is the requirement that none of the parameters
  $p$,  $M$ and $B$
can be used within the algorithm.
  This is a topic for further investigation.

\bibliographystyle{abbrv}
\bibliography{sort,rws-refs}

\end{document}